\newif\if@restonecol
\newtheorem{theorem}{Theorem}
\newtheorem{lemma}[theorem]{Lemma}
\newtheorem{definition}{Definition}
 \gdef\xxxmark{%
   \expandafter\ifx\csname @mpargs\endcsname\relax 
     \expandafter\ifx\csname @captype\endcsname\relax 
       \marginpar{xxx}
     \else
       xxx 
     \fi
   \else
     xxx 
   \fi}
 \gdef\xxx{\@ifnextchar[\xxx@lab\xxx@nolab}
 \long\gdef\xxx@lab[#1]#2{{\bf [\xxxmark #2 ---{\sc #1}]}}
 \long\gdef\xxx@nolab#1{{\bf [\xxxmark #1]}}
 \gdef\turnoffxxx{\long\gdef\xxx@lab[##1]##2{}\long\gdef\xxx@nolab##1{}}%
\title{\textbf{Temperature 1 Self-Assembly:}\\ Deterministic Assembly in 3D and Probabilistic Assembly in 2D}
\author{
  Matthew Cook%
    \thanks{Institute of Neuroinformatics, University of Zurich and ETH Zurich, Switzerland,
      \protect\url{cook@ini.phys.ethz.ch}}
\and
  Yunhui Fu%
    \thanks{Department of Computer Science, University of Texas - Pan American,
      \protect\url{fuyunhui@gmail.com}}
\and
  Robert Schweller%
    \thanks{Department of Computer Science, University of Texas - Pan American,
      \protect\url{schwellerr@cs.panam.edu}}
}
\date{}
\begin{document}

\maketitle

\begin{abstract}
We investigate the power of the Wang tile self-assembly model at temperature 1, a threshold value that permits attachment between any two tiles that share even a single bond.  When restricted to deterministic assembly in the plane, no temperature 1 assembly system has been shown to build a shape with a tile complexity smaller than the diameter of the shape.  In contrast, we show that temperature 1 self-assembly in 3 dimensions, even when growth is restricted to at most 1 step into the third dimension, is capable of simulating a large class of temperature 2 systems, in turn permitting the simulation of arbitrary Turing machines and the assembly of $n\times n$ squares in near optimal $O(\log n)$ tile complexity.  Further, we consider temperature 1 probabilistic assembly in 2D, and show that with a logarithmic scale up of tile complexity and shape scale, the same general class of temperature $\tau=2$ systems can be simulated with high probability, yielding Turing machine simulation and $O(\log^2 n)$ assembly of $n\times n$ squares with high probability.  Our results show a sharp contrast in achievable tile complexity at temperature 1 if either growth into the third dimension or a small probability of error are permitted.  Motivated by applications in nanotechnology and molecular computing, and the plausibility of implementing 3 dimensional self-assembly systems, our techniques may provide the needed power of temperature 2 systems, while at the same time avoiding the experimental challenges faced by those systems.
\end{abstract}

\newpage

\begin{table*}[t]
\caption{ In this table we summarize the state of the art in achievable tile complexities and computational power for tile self-assembly in terms of temperature 1 versus temperature 2 assembly, 2-dimensional versus 3-dimensional assembly, and deterministic versus probabilistic assembly. Our contributions are contained in rows 2 and 3.} \small
\begin{center}
\begin{tabular}{||c||c|c|c||}
\hline
\hline
& \multicolumn{2}{|c|}{\textbf{$n\times n$ Squares}} & \textbf{Computational} \\
& \multicolumn{1}{|c}{LB} & \multicolumn{1}{c|}{UB} & \textbf{Power}\\
\hline
&\multicolumn{2}{|c|}{}&\\
Temperature 2, $2D$ & \multicolumn{2}{|c|}{$\Theta(\frac{\log n}{\log\log n})$} & Universal\\
Deterministic&\multicolumn{1}{|c}{(see \cite{Rothemund:2000:PSC})} & \multicolumn{1}{c|}{(see \cite{Adleman:2001:RTP})}& (see\cite{Winfree:1998:ASA})\\
\hline
&&&\\
\textbf{Temperature 1, $3D$} & $\Omega(\frac{\log n}{\log\log n})$ & $O(\log n)$ & Universal\\
\textbf{Deterministic}&\multicolumn{1}{|c|}{(see \cite{Rothemund:2000:PSC})} & \multicolumn{1}{|c|}{(Thm.\ref{thm:square})}& (Thm.\ref{theorem:3dturing})\\
\hline
&&&Time Bounded\\
\textbf{Temperature 1, $2D$} & $\Omega(\frac{\log n}{\log\log n})$ & $O(\log^2 n)$ & Turing Simulation\\
\textbf{Probabilistic}&\multicolumn{1}{|c|}{(Thm.\ref{thm:kolmog})} & \multicolumn{1}{|c|}{(Thm.\ref{thm:square2d})}& (Thm.\ref{theorem:2dturning})\\
\hline
&&&\\
Temperature 1, $2D$ & $\Omega(\frac{\log n}{\log\log n})$ & $2n-1$ & Unknown\\
Deterministic &\multicolumn{1}{|c|}{(see \cite{Rothemund:2000:PSC})} & \multicolumn{1}{|c|}{(see \cite{Rothemund:2000:PSC})}& \\
\hline
\hline

\end{tabular}
\end{center}

\label{table:summary}
\end{table*}

\section{Introduction}
Self-assembly is the process by which simple objects autonomously assemble into an organized structure.  This phenomenon is the driving force for the creation of complex biological organisms, and is emerging as a powerful tool for bottom up fabrication of complex nanostructures.  One of the most fruitful classes of self-assembly systems is DNA self-assembly.  The ability to synthesize DNA strands with specific base sequences permits a highly reliable technique for programming strands to assemble into specific structures.  In particular, molecular building blocks or \emph{tiles} can be assembled with distinct bonding domains~\cite{Fu:1993:DDC,LaBean:2000:CAL}.  These DNA tiles can be designed to simulate the theoretical bonding behavior of the \emph{Tile Assembly Model}\cite{Rothemund:2000:PSC}.

In the Tile Assembly Model, the particles of a self-assembly system are modeled by four sided Wang tiles for 2D assembly, or 6 sided Wang cubes in 3D.  Each side of a tile represents a distinct binding domain and has a specific glue type associated with it.  Starting from an initial \emph{seed} tile, assembly takes place by attaching copies of different tile types in the system to the growing seed assembly one by one.  The attachment is driven by the affinities of the glue types.  In particular, a tile type may attach to a growing seed assembly if the total bonding strength from all glues abutting against the seed assembly exceeds some given parameter called the \emph{temperature}.  If the assembly process reaches a point when no more attachments are possible, the produced assembly is denoted terminal and is considered the output assembly of the system.

 Motivated by bottom up nanofabrication of complex devices and molecular computing, a number of fundamental problems in the tile assembly model have been considered\cite{RoPaWi04,BarSchRotWin09,CheSchGoeWin07,MaoLabReiSee00,Winfree:1996:DSA,LiuShaSee99,MaoSunSee99}.  A few problems are as follow: (1) \emph{shape fabrication}, given a target shape $\Upsilon$, design a system of tile types that will uniquely assemble into shape $\Upsilon$ that uses as few distinct tile types as possible; 2) \emph{molecular computing}~\cite{Winfree:1998:ASA, Brun:2008:SNP},  given some assembled input assembly that encodes a description of a computational problem, design a tile system that will read this input and assemble a structure that encodes the solution to the computational problem; (3) \emph{shape replication}\cite{Abel:2010:SRT,SW05}, given a single copy of a preassembled input shape or pattern, efficiently create a number of replicas of the input shape or pattern.


 While a great body of work has emerged in recent years considering problems in the tile assembly model, almost all of this work has focussed on temperature 2 assembly in which tiles require 2 separate positive strength glue bonds to attach to the growing seed assembly.  This is in contrast to the simpler temperature 1 model which permits attachment given any positive strength bond.  It seems that some fundamental increase in computational power and efficiency is achieved by making the step from temperature 1 to temperature 2.  In fact, there is no known 2D construction to deterministically assemble a width $n$ shape in fewer that $n$ distinct tiles types at temperature 1.  This is in contrast to efficient temperature 2 systems which assemble large classes of shapes efficiently, including $n\times n$ squares in optimal $\theta(\frac{\log n}{\log\log n})$ tile types.  In fact, the ability to simulate universal computation and assemble arbitrary shapes in a number of tile types close to the Kolmogorov complexity of a given shape at temperature 2 in 2D~\cite{Soloveichik:2005:CSA} has resulted in limited interest in exploring 3D assembly, as it would appear no substantial power would be gained in the extra dimension.

 While temperature 2 assembly yields very efficient, powerful constructions, it is not without its drawbacks.  One of the main hurdles preventing large scale implementation of complex self-assembly systems is high error rates.  The primary cause of these errors in DNA self-assembly systems stems from the problem of insufficient attachments of tile types~\cite{Ashish04errorfree,Winfree:2004:PTS,chen:2008:DAC,majumder:2008:CRP}.  That is, in practice tiles often attach with less than strength 2 bonding despite carefully specified lab settings meant to prevent such insufficient bonds.  Inherently, this is a problem specific only to temperature 2 and higher systems. For this reason, development of temperature 1 theory may prove to be of great practical interest.

 Because of a perceived lack of power, temperature 1 assembly has received little attention compared to the more powerful temperature 2 assembly.  In addition, directions such as 3D assembly have not received substantial attention stemming from a perceived lack of ability to increase the functionality of the already powerful temperature 2 systems.  Interestingly, we find that both directions are fruitful when considered together; temperature 1 assembly systems in 3D are nearly as powerful as temperature 2 systems, suggesting that both the perception of limited temperature 1 power and the perception of limited value in 3D are not completely accurate.

\paragraph{Our Results.}
  In this paper we show that temperature 1 deterministic tile assembly systems in 3D can simulate a large class of temperature 2 systems we call \emph{zig zag systems}. We further show that this simulation grants both: (1) near optimal $O(\log n)$ tile type efficiency for the assembly of $n\times n$ squares and (2) universal computational power.  Further, in the case of 2D probabilistic assembly, we show similar results hold by achieving $O(\log^2 n)$ efficient square assembly and the ability to efficiently simulate any time bounded Turing machine with arbitrarily small chance of error.  The key technique utilized in our constructions is a method of limiting glue attachment by create geometrical barriers of growth that prevent undesired attachments from propagating.  This technique is well known in the field of chemistry as \emph{steric hindrance} or \emph{steric protection}~\cite{HellerPugh1,HellerPugh2,GotEtAl00,WadeOrganicChemistry91} where a chemical reaction is slowed or stopped by the arrangement of atoms in a molecule.
 These results show that temperature 1 assembly is not as limited as it appears at first consideration, and perhaps such assemblies warrant more consideration in light of the potential practical advantages of temperature 1 self-assembly in DNA implementations.

\paragraph{Practical Drawbacks of Temperature 1 Self-Assembly.}
While temperature 1 assembly avoids many of the practical hurdles limiting temperature 2 assembly, temperature 1 assembly also introduces new issues.  In particular, the problem of multiple nucleation, in which tiles begin to grow without the presence of the seed tile, is a more substantial problem at temperature 1.  However, further research into temperature 1 assembly may suggest and motivate new techniques to limit such errors.  In the specific case of multiple nucleation, we discuss in this paper as future work a new design technique for temperature 1 self-assembly to limit such errors, even in a pure temperature 1 assembly model.  By fully exploring techniques such as this, and any new techniques that may emerge, temperature 1 self-assembly may emerge as a practical alternative to temperature 2 assembly.

\paragraph{Related Work}  Some recent work has been done in the area of proving lower bounds for temperature 1 self-assembly.  Doty et. al~\cite{doty:2008:LSA} show a limit to the computational power of temperature 1 self-assembly for \emph{pumpable} systems.  Munich et. al~\cite{manuch:2009:TLB} show that temperature 1 assembly of a shape requires at least as many tile types as the diameter of the assembled shape if no mismatched glues are permitted.  In terms of positive results, Chandran et. al~\cite{Chandran:2009:TCL} consider the probabilistic assembly of lines with expected length $n$ (at temperature 1) and achieve $O(\log n)$ tile complexity.  Kao and Schweller~\cite{Kao:2008:RSA} and Doty~\cite{Doty:2009:RSA} use a variant of probabilistic self-assembly (at temperature 2) to reduce distinct tile type complexity.  Demaine et. al~\cite{Demaine:2007:SSA} and Abel et. al~\cite{Abel:2010:SRT} utilize steric hindrance to assist in the assembly and replication of shapes over a number of stages.

\paragraph{Paper Layout.}  In Section~\ref{sec:basics} we define the Tile Assembly Model, in Section~\ref{sec:zigzag3D} we describe an algorithm to convert a temperature 2 \emph{zig zag} system into an equivalent temperature 1 3D system, or a probabilistic 2D system, in Section~\ref{sec:square} we show how temperature 1 systems can efficiently assemble $n\times n$ squares, in Section~\ref{sec:turing} we show that temperature 1 systems can simulate arbitrary Turing machines, in Section~\ref{sec:sim} we discuss preliminary experimental simulations, and in Section~\ref{sec:future} we discuss further research directions.

\section{Basics}\label{sec:basics}

\subsection{Definitions: the Abstract Tile Assembly Model in 2 Dimensions}
To describe the tile self-assembly model, we make the following
definitions.  A tile type $t$ in the model is a four sided Wang tile denoted by the ordered quadruple $(\textrm{north}(t),
\textrm{east}(t), \textrm{south}(t), \textrm{west}(t))$. The entries of the quadruples are glue types taken from an alphabet $\Sigma$ representing the north, east, south, and west edges of the Wang
tile, respectively.  Each pair of glue types are assigned a non-negative integer bonding strength (0,1, or 2 in this paper) by the glue function ${\Sigma}^2$ to $ \{0,1,\ldots\}$.  It is assumed that
$G(x,y) = G(y,x)$, and there exists a $\tt{null}$ in $\Sigma$ such
that $\forall x \in \Sigma$, $G(\tt{null},x) = 0$. In this paper we
assume the glue function is such that $G(x,y) = 0$ when $x\neq y$
and denote $G(x,x)$ by $G(x)$.

A \textit{tile system} is an ordered triple $\langle T, s, \tau\rangle$ where $T$ is a set of tiles
called the $\textit{tileset}$ of the system, $\tau$ is a positive
integer called the \emph{temperature} of the system and $s \in T$ is a
single tile called the $\textit{seed}$ tile.  $|T|$ is referred to as the \emph{tile complexity} of the system.  In this paper we only consider
temperature $\tau=1$ and $\tau=2$ systems.

Define a $\textit{configuration}$ to be a mapping from
${\mathbb{Z}}^{2}$ to $T$ $\bigcup$ $\{\tt{empty}\}$, where
$\tt{empty}$ is a special tile that has the $\tt{null}$ glue on each
of its four edges.  The \emph{shape} of a configuration is defined
as the set of positions $(i,j)$ that do not map to the empty tile.
For a configuration~$C$, a tile $t \in T$ is said to be
\textit{attachable} at the position $(i,j)$ if $C(i,j) = \tt{empty}$
and $ G(\textrm{north}(t) , \textrm{south}(C(i,j+1)))+
G(\textrm{east}(t) , \textrm{west}(C(i+1, j)))+ G(\textrm{south}(t)
, \textrm{north}(C(i, j-1)))+ G(\textrm{west}(t) ,
\textrm{east}(C(i-1, j))) \geq \tau$.  For configurations $C$ and
$C'$ such that $C(x,y) = \tt{empty}$, $C'(i,j) = C(i,j)$ for all
$(i,j) \neq (x,y)$, and $C'(x,y) = t$ for some $t\in T$, define the
act of \textit{attaching} tile $t$ to $C$ at position $(x,y)$ as the
transformation from configuration $C$ to $C'$.  For a given tile
system \textbf{T}, if a supertile $B$ can be obtained from a
supertile $A$ by the addition of a single tile we write
$A\rightarrow_T B$.  Further, we denote $A\rightarrow_T$ as the set
of all $B$ such that $A\rightarrow_T B$ and $\rightarrow_T^*$ as the
transitive closure of $\rightarrow_T$.

Define the \textit{adjacency graph} of a configuration $C$ as
follows. Let the set of vertices be the set of coordinates $(i,j)$
such that $C(i,j)$ is not empty.  Let there be an edge between
vertices $(x_1, y_1)$ and $(x_2, y_2)$ iff $|x_1 - x_2| + |y_1 -
y_2| = 1$. We refer to a configuration whose adjacency graph is
finite and connected as a $\textit{supertile}$.  For a supertile
$S$, denote the number of non-empty positions (tiles) in the
supertile by $\textrm{size}(S)$.  We also note that each tile $t \in
T$ can be thought of as denoting the unique supertile that maps
position $(0,0)$ to $t$ and all other positions to $\tt{empty}$.
Throughout this paper we will informally refer to tiles as being
supertiles.

\subsection{The Assembly Process}
\paragraph{Deterministic Assembly}
Assembly takes place by \textit{growing} a supertile starting with
tile $s$ at position $(0,0)$.  Any $t \in T$ that is attachable at
some position $(i,j)$ may attach and thus increase the size of the
supertile.  For a given tile system, any supertile that can be
obtained by starting with the seed and attaching arbitrary
attachable tiles is said to be \emph{produced}. If this process
comes to a point at which no tiles in $T$ can be added, the
resultant supertile is said to be \textit{terminally} produced. For
a given shape $\Upsilon$, a tile system $\Gamma$ \textit{uniquely
produces} shape $\Upsilon$ if for each produced supertile $A$, there
exists some terminally produced supertile $A'$ of shape $\Upsilon$
such that $A\rightarrow_T^* A'$.  That is, each produced supertile
can be grown into a supertile of shape $\Upsilon$.  The $\textit{tile complexity}$ of a shape $\Upsilon$ is the minimum tile set size required to uniquely assemble $\Upsilon$.  For an assembly system which uniquely assembles one supertile, the system is said to be \emph{deterministic}.

\paragraph{Probabilistic Assembly}
For non-deterministic assembly systems, we define the probabilistic assembly model to place probability distributions on which tiles attach throughout the assembly process.  To study this model we can think of the
assembly process as a Markov chain where each producible supertile
is a state and transitions occur with non-zero probability from
supertile $A$ to each $B\in A\rightarrow_T$.  For each $B\in
A\rightarrow_T$, let $t_B$ denote the tile added to $A$ to get $B$.
The transition probability from $A$ to $B$ is defined to be

$$\mathrm{TRANS}(A,B)=\frac{1}{|A\rightarrow_T|}.$$

The probability that a tile system $T$ terminally assembles a
supertile $A$ is thus defined to be the probability that the Markov
chain ends in state $A$.  Further, the probability that a system
terminally assembles a shape $\Upsilon$ is the probability the chain
ends in a supertile state of shape $\Upsilon$.

\subsection{Extension to 3 Dimensions}
Extending the model, we can consider assembly in 3 dimensions by considering Wang cubes with added ``up'' and ``down'' glues ($\textrm{up}(t)$  and $\textrm{down}(t)$), and configurations of tile types as mappings from $Z^3$ to a tile set $T$.  In this model, assembly begins as before with an initial seed cube (informally we will refer to cubes as tiles) at the origin, with cubes attaching to the north, west, south, east, top, or bottom of already placed tiles if the total strength of attachment from the glue function meets or exceeds the temperature threshold $\tau$.

%
\begin{figure}[htbp]
\centering
\includegraphics[width=0.25\linewidth]{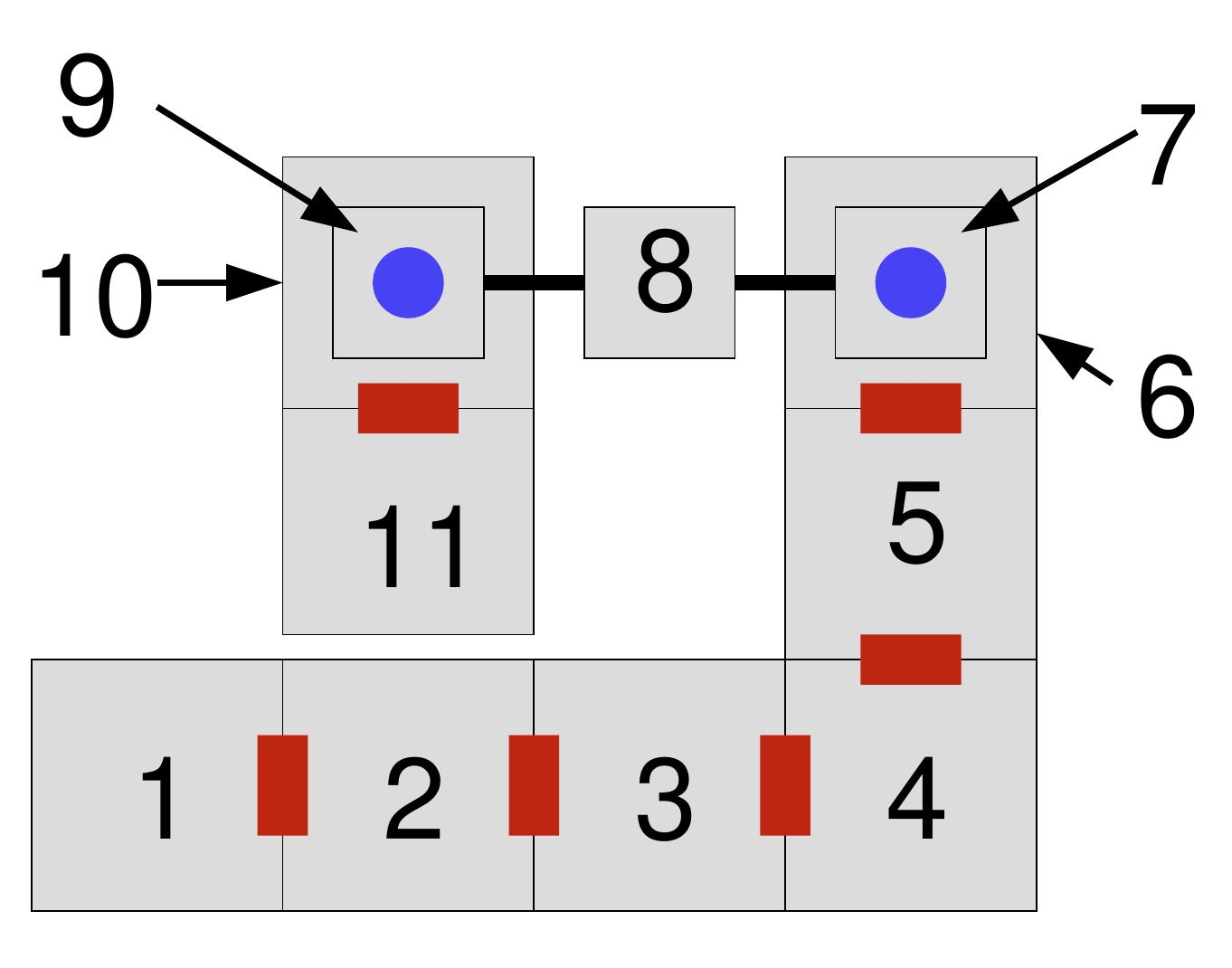}
\caption{For 3 dimensional tile assemblies, we create figures with the notation depicted above. First, large tiles denote tiles placed in the $z=0$ plane, while the smaller squares denote tiles placed in the $z=1$ plane.  The red connectors between bottom tiles denotes some unique glue shared by the tiles in the figure, as does the thin black line connecting tiles in the top plane.  Blue circles denote a unique glue connecting the bottom of the top tile with the top of the tile below it.  In this example, the tiles are numbered showing the implied order of attachment of tiles assuming the '1' tile is a seed tile.}
\label{fig:notation}
\end{figure}
%

In this paper we only consider temperature $\tau = 1$ assembly systems in 3D.  Further, we only consider systems that assemble tiles within the $z=0$ or $z=1$ plane.  To depict 3 dimensional tile sets and 3 dimensional assemblies, we introduce some new notation in Figure~\ref{fig:notation}.

\subsection{Scaled Simulations}
Our constructions in this paper relate to algorithms that generate  temperature 1 tile systems that simulate the assembly of a given temperature 2 tile system.  Formally, we define the notion of one tile system simulating another system.

 \begin{definition}
A tile system $\Upsilon = (T',s',\tau')$ simulates a deterministic tile system $\Gamma=(T,s,\tau)$ at horizontal scale factor $scale_x$, vertical scale factor $scale_y$, and tile complexity factor $C$ if:
\begin{enumerate}
\item there exists an onto function $\textrm{SIM}: R\subseteq T' \Rightarrow T$ for some subset $R\subseteq T'$ such that for any non empty tile type $\Gamma(i,j)$ at position $(i,j)$ in the terminal assembly of $\Gamma$, there exists exactly one tile type $r\in R$ such that $\Upsilon(k,\ell) = r$ for some $k$ and $\ell$ such that $i\cdot scale_x \leq k\leq (i+1)scale_x$ and $j\cdot scale_y \leq \ell\leq (j+1)scale_y$, and $\textrm{SIM}(r) = \Gamma(i,j)$.
\item for each position $(i,j)$ such that $\Gamma(i,j)=empty$, it is the case that for each $k$ and $\ell$ such that $ i\cdot scale_x \leq k\leq (i+1)scale_x$ and $j\cdot scale_y \leq \ell \leq (j+1)scale_y$, then $\Upsilon(k,\ell) = empty$.
\item $|T'| \leq C|T|$
\end{enumerate}

 \end{definition}

 Informally, a system $\Upsilon$ simulates a system $\Gamma$ if for each tile type $t$ placed by system $\Gamma$, a tile type representing $t$ is place by $\Upsilon$ at the same position but scaled up by the vertical and horizontal scale factors.  In the case that the simulating system is a 3D system, all tile positions from the 3D system are projected on the $z=0$ plane to apply the definition.  In the case that $\Upsilon$ is a probabilistic assembly system, then the tile placed at a given position by $\Upsilon$ is a random variable, and we are interested in the probability that each tile placed by $\Gamma$ is correctly simulated by $\Upsilon$ according to some given assignment of representative tile types.

\section{Simulation of Temperature $\tau=2$ Zig-Zag Systems at Temperature $\tau=1$}\label{sec:zigzag3D}

\subsection{Zig-Zag Tile Systems}
%
\begin{figure}[htbp]
\centering
\includegraphics[width=0.30\linewidth]{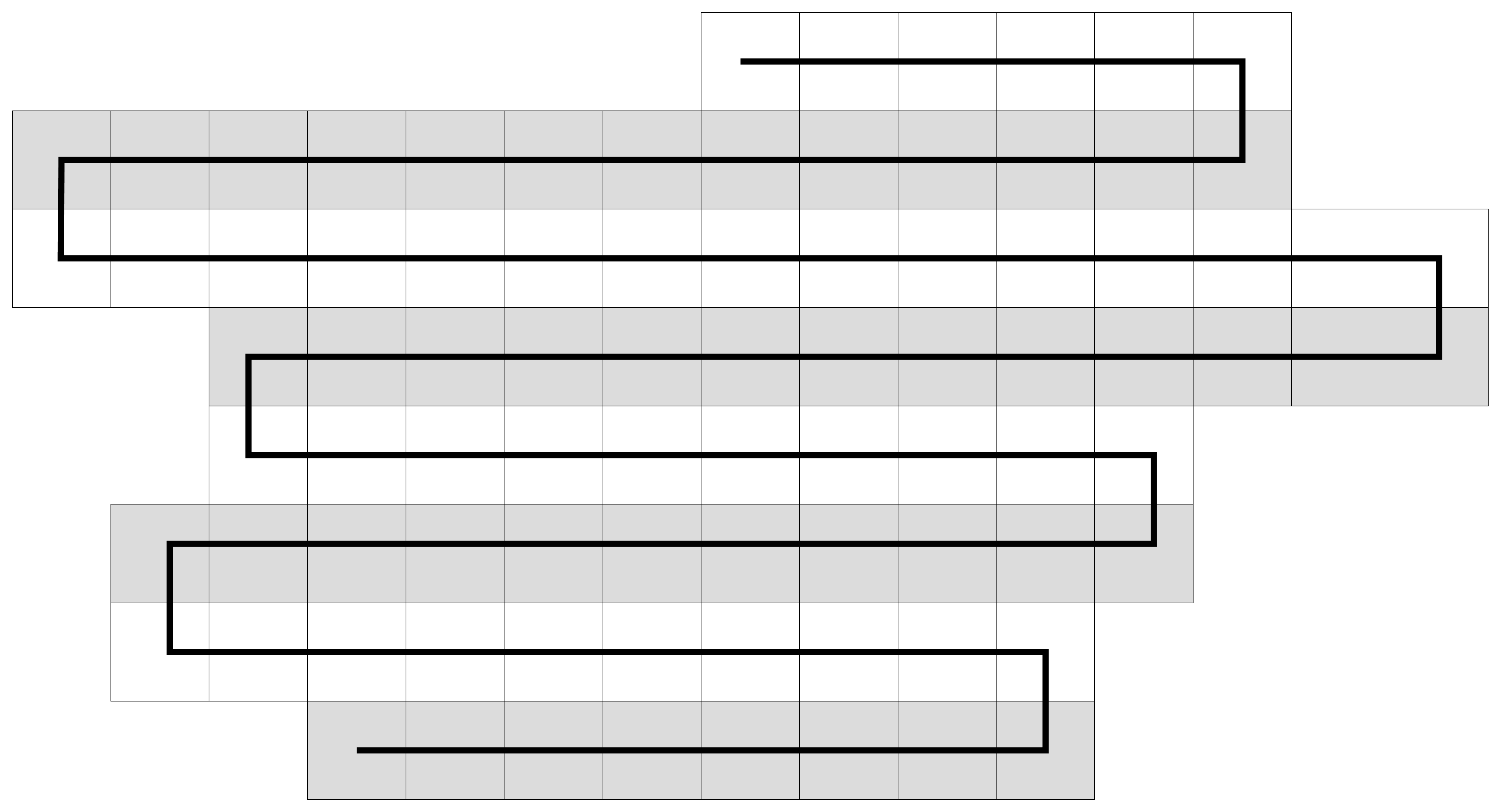}
\caption{A zig-zag tile system alternates growth from left to right at each vertical level.}
\label{fig:zigdiagram}
\end{figure}
%

Our first result is a construction that shows a large class of temperature 2 self-assembly systems (in 2 dimensions) can be simulated at temperature $\tau =1$ if growth is permitted in the third dimension.

\paragraph{Zig-zag assembly systems.}  An assembly system is defined to be a zig-zag system if it has the following two properties:
\begin{itemize}
\item The assembly sequence, which specifies the order in which tile types are attached to the assembly along with their position of attachment, is unique.  For such systems denote the type of the $i^{th}$ tile to be attached to the assembly as $type(i)$, and denote the coordinate location of attachment of the $i^{th}$ attached tile by $(x(i), y(i))$.
\item For each $i$ in the assembly sequence of a zig-zag system, if $y(i-1)$ is even, then either $x(i) = x(i-1) + 1$ and $y(i-1) = y(i)$,  or $y(i) = y(i-1) + 1$.  For odd $y(i-1)$, either $x(i) = x(i-1) - 1$ and $y(i-1)$, or $y(i) = y(i-1) + 1$.
\end{itemize}

Intuitively, a zig-zag assembly is an assembly which must place tiles from left to right up to the point at which a first tile is place into the next row north.  At this point growth must grow from right to left, until the next growth north once again.  Thus even rows grow from left to right, odd rows grow from right to left, as shown in Figure~\ref{fig:zigdiagram}.

The key property zig-zag systems have is that it is not possible for a west or east strength 1 glue to be exposed without the position directly south of the open slot already being tiled, or, if this is possible, the glue is redundant and can be removed to achieve the same assembly sequence.

\subsection{3D Temperature 1 Simulation of Zig-Zag Tile Systems}

In this section we show that any temperature 2 zig-zag tile system can be simulated efficiently by a temperature 1 3D tile system.  The scale and efficiency of the simulation is dependant on the number of distinct strength 1 glues that occur on either the north or south edge of a tile type in the input tile system's tileset $T$.  For a tileset $T$, let $\sigma_T$ denote the set of strength 1 glue types that occur on either a north or south face of some tile type in $T$.  The following simulation is possible.

\begin{theorem}\label{thm:zigzag}
For any 2 dimensional temperature 2 zig-zag tile system $\Gamma=\langle T, s,2 \rangle$ with $\sigma_T$ denoting the set of distinct strength 1 north/south glue types occuring in $T$, there exists a 3D temperature 1 tile system that simulates $\Gamma$ at vertical scale = 4, horizontal scale = $O(\log |\sigma_T|)$, and (total) tile complexity $O(|T|\log |\sigma_T|)$, which constitutes a $O(\log |\sigma_T|)$ tile complexity scale factor.
\end{theorem}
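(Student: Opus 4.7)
The plan is to replace each tile type $t\in T$ with a rectangular \emph{macro-tile} gadget of height $4$ and width $w=O(\log|\sigma_T|)$ whose local assembly at temperature $1$ faithfully reproduces the zig-zag behavior of $\Gamma$. The essential difficulty is that a zig-zag attachment in $\Gamma$ relies on a strength-$1$ west (or east) glue cooperating with a strength-$1$ south glue, and at temperature $1$ there is no cooperative binding at all. I will simulate this cooperation purely geometrically: the top of every macro-tile exposes a specific pattern of bumps at $z=1$ encoding the original tile's north glue in binary, and the bottom of the macro-tile above must thread a path through the $z=1$ layer that only fits over the correct bump pattern. The horizontal cursor that carries the zig-zag wavefront meanwhile lives primarily in the $z=0$ plane, using the $z=1$ bumps strictly as a lock-and-key check, which is exactly the steric-hindrance technique highlighted in the introduction.

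Concretely, I would index the glues in $\sigma_T$ by binary strings of length $k=\lceil\log|\sigma_T|\rceil$ and set $w$ to be a small constant times $k$, so that each bit of the south glue occupies its own column. For each tile $t$ and each west glue $t$ can legally receive, I introduce a cursor row (row $0$ of the macro-tile) whose tiles, as the cursor sweeps left-to-right (or right-to-left on odd rows), attempt to push up into $z=1$ at exactly those columns where $t$'s \emph{expected} south glue has a $1$ bit. The south macro-tile will have already placed a $z=1$ bump at every column corresponding to a $1$ in its \emph{actual} north glue. If the two bit patterns agree, the probes complete unobstructed, the cursor returns to $z=0$, and the remaining three rows of the macro-tile (a decoded-identity strip, an eastward handoff strip, and a north-glue encoding strip that re-exposes the bump pattern for the macro-tile above) fill in deterministically via a simple west-to-east chain of singleton tiles. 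If any bit disagrees, the first probing tile at a mismatched column either collides with a pre-placed bump or finds no glue to bind to at $z=1$, and that branch dies. The zig-zag property of $\Gamma$ guarantees that both the south and the west-or-east predecessor of the current attachment are already in place, so the probe is well-defined and the cursor has a unique legal next step.

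The tile count is then immediate: each macro-tile uses $O(k)=O(\log|\sigma_T|)$ distinct tile types (one per column, per row, per relevant glue variant of $t$), and there is one macro-tile per element of $T$, giving $O(|T|\log|\sigma_T|)$ tiles overall and the claimed $O(\log|\sigma_T|)$ tile-complexity factor. The main obstacle I expect is the correctness of the steric-hindrance argument: I must rule out \emph{any} spurious producible assembly arising from a cursor that partially matches the wrong south glue, a probe that sneaks around a bump, or tiles nucleating inside a macro-tile in an unintended order. Handling this will require carefully tuning column assignments, the $z=1$ glue alphabet, and the handoff glues between adjacent rows and adjacent macro-tiles so that (i) any mismatched column geometrically blocks the probe at its very first $z=1$ attachment, (ii) a tile that avoids going up at an expected column has no downstream $z=0$ tile it can legally bind to, and (iii) a macro-tile cannot start assembling until its cursor legitimately arrives from the zig-zag predecessor. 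Once these invariants are verified by case analysis, the zig-zag structure of $\Gamma$ transfers directly to the macro-tile layout, yielding the claimed simulation at vertical scale $4$, horizontal scale $O(\log|\sigma_T|)$, and tile complexity $O(|T|\log|\sigma_T|)$.
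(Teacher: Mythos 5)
Your macro-tile framing, the binary encoding of north glues as $z=1$ bump patterns, the $4\times O(\log|\sigma_T|)$ dimensions, and the overall steric-hindrance philosophy all match the paper. But your decoding mechanism is architecturally different from the paper's, and as described it fails. You propose one linear probe chain per candidate tile type $t$ (equivalently, per candidate south glue), each checking all $k$ bits against the bumps, with wrong candidates ``dying.'' At temperature $1$ this does not work: all candidate chains for a fixed east glue $x$ must begin from the same handoff glue at the same position, so a wrong candidate's first tile can attach there nondeterministically; when that chain later hits a mismatched column it does not vanish --- it remains as a stuck partial assembly permanently occupying the entry position, so the correct candidate can never start. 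The resulting producible supertile cannot be extended to the target terminal assembly, which violates unique production. ``Carefully tuning column assignments and handoff glues'' cannot repair this, because the defect is that a wrong guess is committed to before any geometric information has been read. Relatedly, your clause ``finds no glue to bind to at $z=1$'' is not a usable failure mode: a probe tile binds to the glue supplied by its own chain, and the absence of a bump can never prevent an attachment --- occupied space can only \emph{block} growth, never \emph{enable} it.

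The paper resolves exactly this by making all candidates share a single binary \emph{prefix tree} of decoder tiles: there is one chain with a branch point per bit, each branch point offers two nondeterministic continuations (labelled $(x,0)$ and $(x,1)$, then $(x,00)$, $(x,01)$, etc.), and the south macro-tile's bump pair for that bit is placed at one of two heights so that exactly one of the two continuations is geometrically blocked at the moment of branching. No wrong path is ever entered, so there are no dead partial assemblies to clean up, and determinism is restored bit by bit rather than candidate by candidate. (The output side of your construction --- re-encoding the decoded north glue as bumps for the row above and handing the west/east glue onward --- is essentially the paper's.) To fix your proof, replace the per-candidate probe rows with this shared prefix-tree decoder, blocking one branch per bit position; the tile count remains $O(|T|\log|\sigma_T|)$ since the tree for each east-glue class has one leaf per tile type sharing that glue and depth $O(\log|\sigma_T|)$.
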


\begin{proof}
To simulate a temperature 2 zig-zag tileset with a temperature 1 tileset, we replace each tile type in the original system with a collection of tiles designed to assemble a larger \emph{macro tile} at temperature 1.  The goal will be to ``fake'' cooperative temperature 2 attachment of a given tile from the original system by utilizing geometry to force an otherwise non-deterministic growth of tiles to \emph{read} the north input glue of a given macro tile.

Consider an arbitrary temperature 2 zig-zag tile system $\Gamma=\langle T, s, 2\rangle$ where $\sigma_T$ denotes the alphabet of strength 1 glue types that occur on either the north or south face of some tile type in $T$.  Index each glue $g\in \sigma_T$ from $0$ to $|\sigma_T|-1$.  For a given $g\in \sigma_T$, let $b(g)$ denote the index value of $g$ written in binary.  We will refer to each glue $g$ by the new name $c_{b(g)}$ when describing our construction.

To prove the theorem, we construct a tileset with $O(\log |\sigma_T|)$ distinct tile types for each tile $t\in T$.  In particular, the number of tile types will be at most $|T|(12*\log |\sigma_T| + 12) = O(|T|\log|\sigma_T|)$.

To assign tile types to the new temperature 1 assembly system, take all west growing tile types that have an east glue of type 'x' for some strength 1 glue 'x'.  If there happens to be both east growing and west growing tile types that have glue 'x' as an east glue, first split all such tile types in to two separate tiles, one for each direction.

For the set of west growing tile types with east glue $x$, a collection of tile types are added to the simulation set as a function of $x$ and the subset of $\sigma_T$ glues that appear on the south face of the collected tile types. The tile types added are depicted in Figure~\ref{fig:zigzag}.  In the example from Figure~\ref{fig:zigzag} there are 4 distinct tile types that share an east $x$ glue type.  As these tiles are west growing tile types, the temperature 2 simulation places each of these tiles using the cooperative bonding of glue $x$ and glue $c_{111}$ in the case of the right most tile type.  The east and south glue types of a west growing tile can be thought of as \emph{input glues}, which uniquely specify which tile type is placed, in turn specifying two \emph{output glues}, glue type $a$ to the west and glue type $c_{111}$ to the north in the case of the right most tile type.  At temperature 1, we cannot directly implement this double input function by cooperative bonding as even a single glue type is sufficient to place a tile. Instead, we use glue type to encode the east input, and geometry of previously assembly tiles to encode the south input.

In more detail, the tile types specified in Figure~\ref{fig:zigzag} (b) constitute a nondeterministic chain of tiles whose possible assembly paths for a binary tree of depth $\log$ of the number of distinct south glue types observed in the tile set being simulated.  In the given example, the tree starts with an input glue $(x, -)$.  This glue knows the tile to its east has a west glue of type $x$, but has no encoding of what glue type is to the south of the macro tile to be placed.  This chain of tiles nondeterministically places either a $0$ or a $1$ tile, which in turn continues growth along two separate possible paths, one denoted by glue type $(x,0)$, and the other by glue type $(x,1)$.  By explicitly encoding all paths of a binary prefix tree ending with leaves for each of the south glues of the input tile types, the decoding tiles nondeterministically pick exactly one south glue type to pair with the east glue type $x$, and output this value as a glue specifying which of the 4 tile types should be simulated at this position.

Now, to eliminate the non-determinism in the decoding tiles, we ensure that the geometry of previous place tiles in the $z=1$ plane is such that at each possible branching point in the binary tree chain, exactly 1 path is blocked, thus removing the non-determinism in the assembly as depicted in Figure~\ref{fig:zigzag} (d).  This prebuilt geometry is guaranteed to be in place by the correct placement of the simulated macro tile placed south of the current macro tile.  One the proper tile type to be simulated is decode, the 2 output values, $a$ and $c_{111}$ in the case of the rightmost tile type of Figure~\ref{fig:zigzag} (a), must be propagated west and north respectively.  This is accomplished by the collection of tile types depicted in Figure~\ref{fig:zigzag} (c).  Now that the north and west output glues have been decode, this macro tile will assemble a geometry of \emph{blocking tiles} to ensure that a tile using this north glue as a south glue input will deterministically decode the correct glue binary string.  In particular, pairs of tiles are placed in the plane $z=2$ for each bit of the output binary string.  The pair is place to locations vertically higher for $1$ bits than for $0$ bits.  The next row of macro tiles will then be able to decode this glue type encoded in geometry by applying a binary tree of decoder tiles similar to those shown in Figure~\ref{fig:zigzag} (b).

The complete conversion algorithm from a temperature 2 zig-zag system to a temperature 1 3D system has a large number of special cases.  However, the example worked out in this proof sketch gets at the heart of the idea.  The fully detailed conversion algorithm, with all cases detailed is described in the Appendix in Section~\ref{sec:detailconvertzigzag}.

%
\begin{figure}[htbp]
\centering
\includegraphics[width=\linewidth]{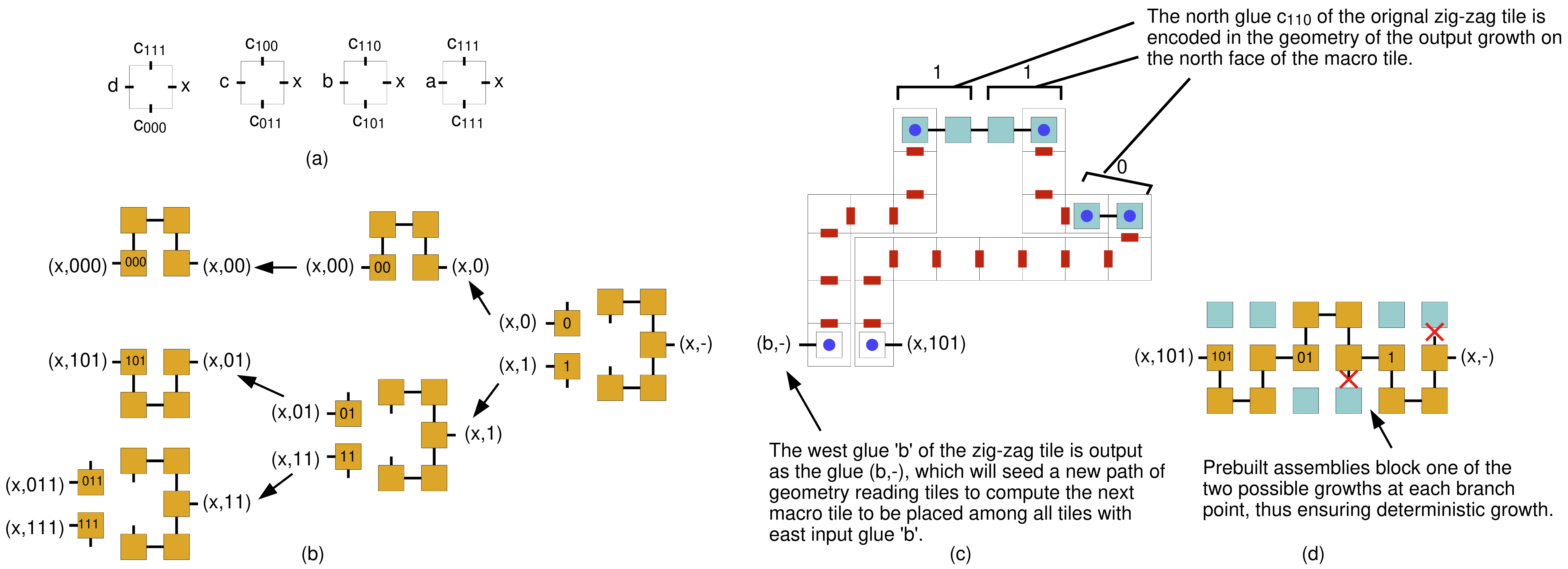}
\caption{This tile set depicts how one tile from a collection of input tile types all sharing the same east glue $x$ are simulated by combining strength 1 glues bonds with geometrical blocking to simulate cooperative binding and correctly place the correct macro tile.}
\label{fig:zigzag}
\end{figure}
%
\end{proof}

\subsection{An Example}
To further illustrate the construction for Theorem~\ref{thm:zigzag} we present a simple zig-zag tile set and corresponding 3D, temperature $\tau=1$ simulation tile set, described in the Appendix in Figures~\ref{fig:simpleexample} and \ref{fig:simpleexampleassembled}.

\subsection{2D Probabilistic Temperature 1 Simulation of Zig-Zag Tile Systems}\label{sec:2dprobtemp1simulation}
In this section we sketch out the simulation of any temperature 2 zig-zag tile system by a 2D probabilistic tile system.  The basic idea is similar to the 3D conversion.  Each tile type for a given east glue $x$ is used to generate a binary tree of nondeterministic chains of tiles to decode a southern input glue encoded by geometry.  However, in 2D, it is not possible to block both branches of the binary tree due to planarity.  Therefore, we only block one side or block neither side.  The basic idea behind the decoding is depicted in Figure~\ref{fig:zigzag2D}.

The key idea is to buffer the length of the geometric blocks encoding bits by a buffing factor of $k$.  In the example from Figure~\ref{fig:zigzag2D}, a 2-bit string is encoded by two length $2k$ stretches of tiles (k=4 in this example), where the binary bit value 1 is represented by a dent that is one vertical position higher than the encoding for the binary bit value 0.  Whereas in the 3D encoding each bit value is encoded with a fixed two horizontal tile position for encoding, we now have an encoding region of size $2k$ for each bit value.

As in the 3D case, we utilize a set of decoder tiles (the orange tiles in Figure~\ref{fig:zigzag2D})
to read the geometry of the north face of the macro tile to the south.
Due to planarity, only one possible growth branch is blocked.
However, in the case of a 0 bit, the $k$ repetitions of the non-deterministic
placement of the branching tile makes it highly likely that
at least one downward growth will beat its competing westward growth over
$k$ independent trials, if $k$ is large.
In the case of 1 bits, the decoder is guaranteed to get the correct answer.

Therefore, we can analyze the probability that a given 0 bit is correctly decoded by bounding the probability of flipping a (biased) coin $k$ times and never getting a single tail.  The coin is biased because the south branch of growth must place 3 consecutive tiles before the west branch places a single tile to successfully read the 0 bit, which happens with probability 1/8\footnote{To make the choice non-biased, multiple copies of the south branching tile types could be included in the tile set, making the 3 consecutive placements happen with equal, or even greater probability than the single tile placement of the west branch.}.  Thus, by making $k$ sufficiently large, we can bound the probability that a given block will make an error by incorrectly interpreting a 0-bit as a 1-bit.  For a zig-zag system that makes $r$ tile attachments, we can therefore set $k$ to be large enough such that with high probability a temperature 1 simulation will make all $r$ attachments without error, yielding the following theorem.

\begin{theorem}\label{thm:zigzag2d}
For any $\epsilon > 0$ and  zig-zag tile system $\Gamma = \langle T, s, 2 \rangle$ whose terminal assembly has size $r$,  there exists a temperature 1, 2D probabilistic tile system that simulates $\Gamma$ without error with probability at least $1-\epsilon$.  The $scale_x$ of this system is $O(\log\sigma_T\log r + \log\sigma_T\log\frac{1}{\epsilon})$, the $scale_y$ is 4, and the tile complexity is $O(|T|\log\sigma_T\log r + |T|\log\sigma_T\log\frac{1}{\epsilon})$.
\end{theorem}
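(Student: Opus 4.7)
The plan is to adapt the macro-tile construction from Theorem~\ref{thm:zigzag} to the 2D probabilistic setting, using a buffer parameter $k$ (the horizontal width, in tiles, devoted to each encoded bit) that will be fixed at the end so as to absorb the target error probability $\epsilon$. The overall macro-tile still has four ``vertical layers'' (propagation of west/north output glues, plus a north-face geometric encoding of the output north glue), so $scale_y = 4$ will be inherited directly from the 3D construction.

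First, I would reuse the decoding skeleton: for each east-input glue $x$ of a zig-zag tile type, build the non-deterministic binary prefix tree of decoder tiles whose leaves correspond to the distinct south glues compatible with $x$. The tree has depth $O(\log|\sigma_T|)$ and contributes $O(|T|\log|\sigma_T|)$ tile types across all glue classes, matching the $|T|\log\sigma_T$ factor in the theorem's bound. The tiles propagating the decoded west and north output glues along the rest of the macro tile are analogous to Figure~\ref{fig:zigzag}(c), adjusted for 2D.

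Second, I would replace the 3D ``block in the $z=2$ plane'' trick (which deterministically knocks out one branch at every decision point) with the 2D dent-encoding of Figure~\ref{fig:zigzag2D}: each bit of the encoded north glue is realized as a $2k$-long horizontal stretch on the north face of the macro tile below, where a $1$-bit sits one row higher than a $0$-bit. When a decoder traverses this stretch, planarity means only one of the two branches at each racing tile is geometrically blocked; in a $1$-bit region the correct branch is always the unblocked one and the read is deterministic, while in a $0$-bit region the decoder must, in at least one of the $k$ repeated racing positions, win a race in which the downward (south) continuation places three consecutive tiles before the competing westward tile places one. By the definition of $\mathrm{TRANS}$ this race succeeds with probability at least $1/8$ per trial (a bias one can remove by duplicating the south branching tile type, as the footnote notes), so a single $0$-bit is misread with probability at most $(7/8)^k$.

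Third, the zig-zag terminal assembly places $r$ original tiles, hence the simulation performs at most $O(r\log|\sigma_T|)$ bit reads in total (at most $O(\log|\sigma_T|)$ per macro tile). A union bound yields overall failure probability at most $O(r\log|\sigma_T|)\cdot(7/8)^k$, so choosing
\[
k \;=\; \Theta\bigl(\log r + \log\tfrac{1}{\epsilon}\bigr)
\]
(the $\log\log|\sigma_T|$ overhead is absorbed into the constant) makes the total failure probability at most $\epsilon$. Each of the $O(\log|\sigma_T|)$ encoded bits occupies $O(k)$ horizontal tiles, so $scale_x = O(\log|\sigma_T|\log r + \log|\sigma_T|\log(1/\epsilon))$, and the tile complexity gets the same extra factor, giving the stated bounds.

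The main obstacle is justifying that the $k$ races inside a $0$-bit region really do behave as $k$ independent Bernoulli trials with a constant success probability. Concretely, I must design the decoder tiles so that at each race position the only attachable tiles are exactly the two racers, that losing a race (west wins) deterministically hands control to a fresh racing tile further west whose random choice is not correlated with the previous outcome, and that the Markov chain on producible supertiles really does factor across trials with the $1/8$ transition probability coming from $\mathrm{TRANS}$. Once this independence is pinned down — including a careful enumeration of which tiles are simultaneously attachable during any race, to avoid hidden conditioning via the supertile-level Markov chain — the union bound and the choice of $k$ are routine, and the remaining bookkeeping (special cases for east-growing versus west-growing tile types, corner macro tiles, and turnaround rows) can be handled exactly as in the detailed 3D conversion algorithm of Section~\ref{sec:detailconvertzigzag}.
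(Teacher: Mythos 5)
Your proposal is correct and follows essentially the same route as the paper: the same $2k$-wide dent encoding per bit, the same $(7/8)^k$ bound on misreading a $0$-bit from the three-tiles-versus-one race, the same choice $k = \Theta(\log r + \log\frac{1}{\epsilon})$, and the same first-moment/union-bound argument over the $r$ placements (the paper unions over macro-tile blocks rather than individual bit reads, a cosmetic difference). Your explicit flagging of the independence of the $k$ races is a point the paper's proof passes over silently, but it is a refinement of, not a departure from, the paper's argument.
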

\begin{proof}
We first note that the tile complexity and scale factor of the simulation are determined by the values $|T|$ and $|\sigma_T|$, as in the 3D transformation, as well as the choice of the parameter $k$.  In particular, the tile complexity of the system is $O(|T|k\log\sigma_T)$, the horizontal scale factor is $scale_x = O(k\log\sigma_T)$, and the vertical scale factor is $scale_y = 4$.  Therefore, to show the result we need to determine what parameter choice for $k$ is needed to guarantee that all $r$ block placements are error free with probability at least $1-\epsilon$.

For a given $\epsilon >0$, set $k$ equal to
$$
k = \log_{8/7} r + \log_{8/7}\frac{1}{\epsilon}
$$
As any given block will make an error with probability at most $(\frac{7}{8})^k$, the expected number of total errors over $r$ block placements is
$$
E[errors] = r(\frac{7}{8})^k = \epsilon
$$
Therefore, for $k = \log_{8/7} r + \log_{8/7}\frac{1}{\epsilon}$ the probability of making 1 or more errors is bounded by $\epsilon$, and this value of $k$ achieves the theorem's stated bounds.
\end{proof}


%
\begin{figure}[htbp]
\centering
\includegraphics[width=\linewidth]{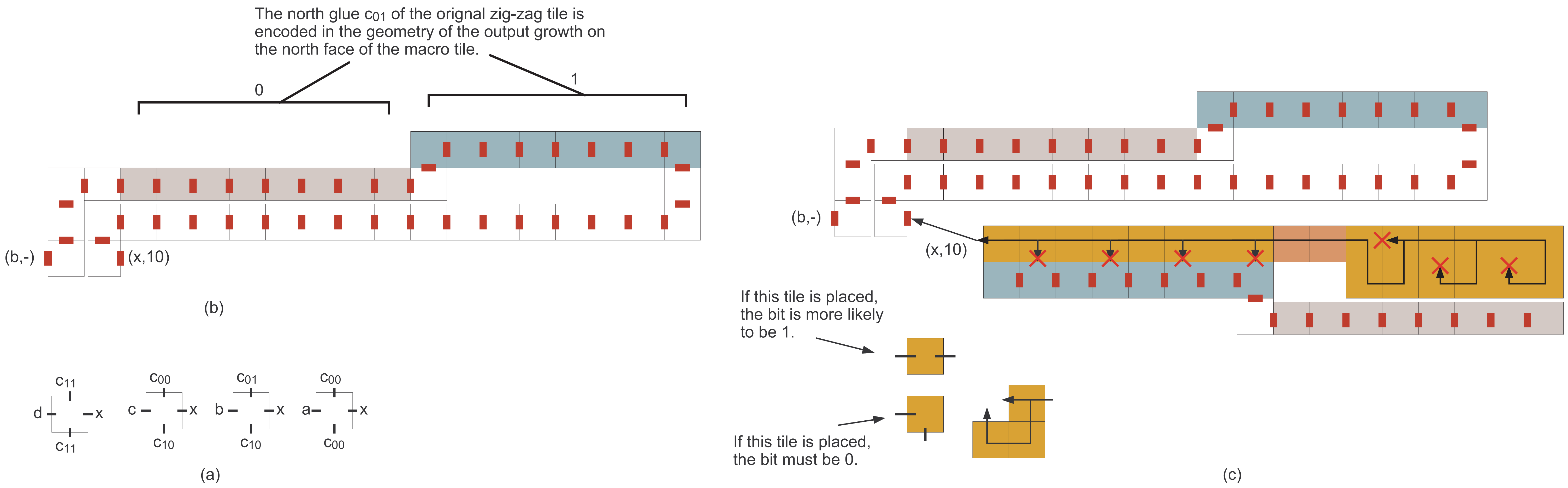}
\caption{This tile set depicts the macro tiles used to transform a zig-zag system into a probabilistic 2D assembly system.}
\label{fig:zigzag2D}
\end{figure}
%

\section{Efficient Assembly of $n\times n$ Squares at Temperature 1}\label{sec:square}
In this section we show how to apply the zig-zag simulation approach from Section~\ref{sec:zigzag3D} to achieve tile complexity efficient systems that uniquely assemble $n\times n$ squares at temperature $\tau=1$ either deterministically in 3D or probabilistically in 2D.

\subsection{Deterministic Temperature 1 assembly of $n\times n$ Squares in 3D}
 We show that the assembly of an $n\times n$ square with a 3D tile system requires at most $O(\log n)$ tile complexity at temperature 1.

\begin{lemma}\label{lemma:zigzagcounter}
For any $n$, there exists a 2D, temperature $\tau = 2$ zig-zag tile system, with north/south glueset $\sigma$ of constant bounded size, that uniquely assembles a $\log n \times n$ rectangle.  Further, this rectangle can be designed so that a unique, unused glue appears on the east side of the northeast placed tile (this allows the completed rectangle to seed a new assembly, as utilized in Theorem~\ref{thm:square}).
\end{lemma}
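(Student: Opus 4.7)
The plan is to implement a standard binary counter as a zig-zag system. Index rows $0, 1, \ldots, n-1$ from bottom to top and columns $0, \ldots, \lceil\log n\rceil - 1$ from left to right. The assembly starts from a seed at the bottom-left, fills row $0$ with all zeros growing east, then at the right edge a ``turn'' tile attaches above and growth proceeds west through row $1$ while performing $+1$, and so on, alternating direction every row. Since a binary counter of width $\lceil \log n\rceil$ naturally reaches its all-ones state at row $n-1$, the construction inherently stops after exactly $n$ rows as long as no tile type is provided to propagate a carry past the left (or right) boundary out of the top row.

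For the tile set, I would use a single bit encoded on each north/south glue, together with a small constant number of auxiliary north/south glues used only for boundary markers: $\{0, 1, L, R, \text{top}\}$ suffices (the symbols $L, R$ mark the left and right columns so that the turn tiles can be distinguished; $\text{top}$ marks the final row). The carry logic is propagated entirely horizontally through east/west glues, which may carry more information (current bit, carry bit, position indicator of boundary vs.\ interior) and thus can use $O(1)$ distinct values as well. The interior tiles come in a constant number of types, one per $(\text{south-bit},\text{carry-in},\text{direction})$ combination, each exposing the correct bit to the north, the new carry to the appropriate side, and a fresh carry-in on the opposite side. Turn tiles at each of the two columns $L$ and $R$ come in a constant number of types, triggered when growth reaches the boundary with either ``carry still pending'' or ``carry resolved,'' and either launch the next row above (matching the alternating direction) or, in the case that the all-ones row $n-1$ has been completed, act as the terminating tile.

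Verifying the zig-zag property is straightforward: in each row, a west or east strength-1 glue is exposed only at the position immediately east (resp.\ west) of a just-placed tile in that row, and the tile below it was placed in the previous row's traversal, so the south neighbor is always present before the lateral glue is used. The turn tiles, which move growth north by one, attach via a single strength-$2$ bond onto the boundary tile below them, consistent with the zig-zag definition in the paper. The height is exactly $n$ because the counter terminates on the first attempted carry-out at row $n-1$; the width is $\lceil\log n\rceil$, which is $\log n$ up to the implicit constant in the lemma statement. For the unique east glue on the northeast-placed tile, I arrange the leftmost and rightmost column glues so that the number of bits and the direction convention force the final tile to land at the top-right; this can always be achieved by padding with one extra column of fixed-bit tiles if the parity of $\lceil\log n\rceil$ disagrees with the required termination side, still keeping the width $\Theta(\log n)$. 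The terminating turn tile in that position is unique in the tile set and carries a dedicated east output glue that appears nowhere else, which is exactly the seed-glue handoff required in Theorem~\ref{thm:square}.

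The main obstacle I expect is not the counter logic itself, which is classical, but bookkeeping to simultaneously (i) keep the north/south alphabet at $O(1)$ size, (ii) honor the zig-zag direction rule at every tile (in particular at the row transitions, where one must be careful not to expose a horizontal glue whose south neighbor is empty), and (iii) guarantee determinism so that no producible configuration can be extended except into the intended rectangle with its northeast output glue. Determinism can be enforced by making each east/west glue $x$ compatible with only one north glue (so cooperative binding is uniquely defined) and by ensuring no strength-$1$ north/south glue is ever exposed in isolation on a tile that could accept an unintended neighbor; these constraints are satisfied automatically by the counter logic as long as carry, bit, and direction are all encoded in the east/west glue together with a fresh ``expected south bit'' tag.
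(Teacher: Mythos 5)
Your approach is the same as the paper's: the paper proves this lemma in one line by citing the Rothemund--Winfree binary counter~\cite{Rothemund:2000:PSC} as an instance of a zig-zag system, and your proposal is essentially an explicit reconstruction of that counter. However, as written it contains one concrete error. You initialize row $0$ to all zeros and claim the counter ``naturally reaches its all-ones state at row $n-1$.'' Starting from zero, a width-$\lceil\log n\rceil$ counter reaches all ones only after $2^{\lceil\log n\rceil}$ rows, so your construction produces a rectangle of height $2^{\lceil\log n\rceil}$, which equals $n$ only when $n$ is a power of two; the lemma is claimed for all $n$. The standard repair, and the one used in the cited construction, is to hard-code the seed row to the binary representation of $2^{\lceil\log n\rceil}-n$ using $O(\log n)$ dedicated seed-row tile types (one per column), so that exactly $n$ increments occur before the carry-out terminates growth. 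This is also where the $O(\log n)$ tile complexity invoked in Theorem~\ref{thm:square} actually comes from: the counting tiles themselves are $O(1)$ in number, and the seed row accounts for the logarithmic term. A secondary quibble: your parity fix of ``padding with one extra column'' does not do what you intend, since the growth direction of the top row is determined by the parity of the number of rows (i.e., of $n$), not by the width; but this is harmless, because the northeast tile is placed in either case (as the first or last tile of the top row) and can carry the unique east glue regardless.
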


\begin{proof}
The binary counter depicted by Rothemund and Winfree~\cite{Rothemund:2000:PSC} is a zig-zag tile set and can easily be modified to have a special glue at the needed position.
\end{proof}

\begin{theorem}\label{thm:square}
For any $n\geq 1$, there exists a 3D temperature $\tau=1$ tile system with tile complexity $O(\log n)$ that uniquely assembles an $n\times n$ square.
\end{theorem}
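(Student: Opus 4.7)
The plan is to use the zig-zag counter from Lemma~\ref{lemma:zigzagcounter} as a temperature 2 building block, pad it out so that after the scaled simulation of Theorem~\ref{thm:zigzag} the footprint is exactly $n\times n$, and then invoke that simulation to land at temperature 1 in 3D. Since $|\sigma_T|$ is $O(1)$ in Lemma~\ref{lemma:zigzagcounter}, both the horizontal scale factor $c_x = O(\log|\sigma_T|)$ and the $O(\log|\sigma_T|)$ tile-complexity blowup in Theorem~\ref{thm:zigzag} are constants, so this approach cannot cost more than the $O(\log n)$ tiles already used by the counter.

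Concretely, I would first instantiate Lemma~\ref{lemma:zigzagcounter} with height parameter $h := \lceil n/4 \rceil$ to obtain a zig-zag tile set of size $O(\log n)$ that builds a $\log h \times h$ counter rectangle with constant $\sigma_T$ and a unique northeast glue. Second, I would widen this counter into a $w\times h$ zig-zag system, where $w := \lceil n/c_x \rceil$, by leaving the existing counter untouched in the leftmost $\log h$ columns and introducing $O(1)$ extra ``pass-through'' tile types that carry each row across the remaining $w-\log h$ columns. These pass-through tiles only need to propagate the current row's direction and reuse a handful of north/south glues (one for each row parity, plus the turnaround glue on the east end), so $|\sigma_T|$ remains constant, the assembly order stays strictly zig-zag, and the total tile count is still $O(\log n)$. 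The unique glue promised by Lemma~\ref{lemma:zigzagcounter} is preserved at the final northeast tile of the widened rectangle, in case a later cleanup phase needs it.

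Third, I would feed the padded system into Theorem~\ref{thm:zigzag}. This yields a temperature 1, 3D tile system with tile complexity $O(|T|\log|\sigma_T|) = O(\log n)$ whose $z=0$ footprint is a solid $c_x w \times 4h$ rectangle, i.e., essentially $n\times n$. If $n$ is not divisible by $4c_x$, I would pick the smallest $w, h$ that make $c_x w \geq n$ and $4h \geq n$ and append $O(1)$ trimming tile types along the two boundary strips, seeded off the preserved northeast glue, so that the final produced shape has exactly $n\times n$ footprint. The main obstacle I expect is verifying that the padded counter still satisfies the zig-zag hypothesis of Section~\ref{sec:zigzag3D}: one must check that the pass-through tiles never expose a non-redundant strength-1 east/west glue at a position whose south neighbor is still empty, and that the interface between the counter columns and the padding columns does not introduce extra north/south glue types that would inflate $|\sigma_T|$ and hence the tile-complexity blowup. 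This is a bounded case analysis at the row turnarounds and at the counter/padding boundary column, but it is the linchpin that keeps the simulation at $O(\log n)$ rather than $O(\log^2 n)$.
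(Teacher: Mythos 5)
There is a genuine gap in your second step, and it is exactly the point where the paper takes a different (and necessary) route. In a temperature~2 zig-zag system every non-turnaround tile attaches cooperatively via its east/west glue \emph{and} its south glue, so the width of row $i$ is inherited from the width of row $i-1$: your $O(1)$ pass-through tiles in the padding region can only be placed above already-present tiles of the previous row. By induction the padding region can never be wider than the seed row's padding region, so to obtain rows of width $w=\Theta(n)$ you would need the bottom row itself to span $\Theta(n)$ columns. With $O(1)$ position-oblivious pass-through types there is nothing to stop the east-growing padding or to trigger the turnaround at column $w$ (that information is carried vertically through the south glues of the rightmost column, which must originate in the seed row), and hardcoding a $\Theta(n)$-wide seed row costs $\Theta(n)$ tile types, destroying the $O(\log n)$ bound. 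So the ``widen the counter into a $w\times h$ solid zig-zag system'' step does not go through as stated.

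The paper avoids this entirely: it keeps the counter thin ($\log n \times n$), simulates it to obtain one $O(\log n)$-thick, length-$n$ side of the square, then uses the unique northeast glue of Lemma~\ref{lemma:zigzagcounter} to seed a second and a third such rectangle forming three sides of a frame, and finally fills the interior with a single filler tile type at temperature~1 (a filler tile needs only one strength-1 bond to attach, so no cooperative width propagation is required there). Your first and third steps (instantiating the counter with constant $\sigma_T$ and observing that the scale and complexity blowups of Theorem~\ref{thm:zigzag} are $O(1)$) are correct and match the paper; you should replace the padding step with the frame-plus-fill decomposition, or otherwise explain how a $\Theta(n)$-wide row can be seeded with $O(\log n)$ tile types.
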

\begin{proof}
  The basis of our proof is a construction for a binary counter that is a zig-zag system with a $O(1)$ size north/south glueset.  Such a tile system can assemble a $\log n$ by $n$ rectangle using $O(\log n)$ tile types.  By Theorem~\ref{thm:zigzag}, this system can be simulated at temperature 1 in 3D to make 1 side of an $n\times n$ square.  By seeding a similar version of this tile system upon the completion of the initial assembly, a second and third side of the square can be completed, permitting a final filler tile to fill in the body of the square.  As the north/south glue set for the counter system is of constant size, the same $O(\log n)$ tile complexity is achieved for the construction of the square at temperature 1 in 3D by combining Lemma~\ref{lemma:zigzagcounter} with Theorem~\ref{thm:zigzag}.
\end{proof}

\subsection{Probabilistic Temperature 1 assembly of $n\times n$ Squares in 2D}
In this section we show that for any value $\epsilon$ and positive integer $n$, there exists a 2D tile system with $O(\log^2 n  + \log n \log\frac{1}{\epsilon})$ tile types that will assemble an $n\times n$ square with probability at least $1-\epsilon$.  For a fixed constant $\epsilon$, this simplifies to a $O(\log^2 n)$ tile complexity.

\begin{theorem}\label{thm:square2d}
For any $\epsilon >0$ and positive integer $n$, there exists a 2D tile system with $O(\log^2 n  + \log n \log\frac{1}{\epsilon})$ tile types that will assemble an $n\times n$ square with probability at least $1-\epsilon$.
\end{theorem}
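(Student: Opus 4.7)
The plan is to mirror the proof of Theorem~\ref{thm:square} in the probabilistic 2D setting, using Theorem~\ref{thm:zigzag2d} in place of Theorem~\ref{thm:zigzag}. First I would invoke Lemma~\ref{lemma:zigzagcounter} to obtain a temperature $\tau=2$ zig-zag tile system $\Gamma=\langle T,s,2\rangle$ of tile complexity $|T|=O(\log n)$ whose north/south glueset $\sigma_T$ has constant size, and which uniquely assembles a $\log n \times n$ rectangle with a distinguished unused glue on the northeast corner. The total number of tile placements in this rectangle is $r = O(n\log n)$.

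Next I would feed $\Gamma$ into Theorem~\ref{thm:zigzag2d} with error budget $\epsilon' = \epsilon/3$. Substituting $|T|=O(\log n)$, $|\sigma_T|=O(1)$, $r=O(n\log n)$, and $\epsilon'=\Theta(\epsilon)$ into the tile-complexity bound $O(|T|\log|\sigma_T|\log r + |T|\log|\sigma_T|\log(1/\epsilon'))$ gives
\[
O\!\bigl(\log n\cdot\log(n\log n) + \log n\cdot\log(1/\epsilon)\bigr) \;=\; O\!\bigl(\log^2 n + \log n\log(1/\epsilon)\bigr)
\]
distinct tile types for probabilistically simulating one macro-rectangle, with horizontal scale $scale_x = O(\log r + \log(1/\epsilon))$ and vertical scale $scale_y = 4$.

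I would then chain three such simulated rectangles into three sides of the square, exactly as in Theorem~\ref{thm:square}: the distinguished glue at the northeast corner of each completed macro-rectangle seeds a rotated copy of the counter along the adjacent side of the eventual square, and a single constant-size filler tile type completes the interior. To make the final bounding box have side exactly $n$, I would tune the counter parameter to $n' = \Theta(n/scale_x)$ (respectively $n/4$ along the other axis), which only changes constants inside logarithms and preserves the asymptotic tile complexity. A union bound across the three independent simulated sides gives total failure probability at most $3\epsilon'=\epsilon$, while the union of three tile sets plus the constant-size filler and corner tiles keeps the total complexity at $O(\log^2 n + \log n\log(1/\epsilon))$.

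The main obstacles will be bookkeeping rather than new ideas: (i) verifying that after rotation and scaling the three macro-strips really do fit together into an $n\times n$ outline, which may require a constant number of additional corner tile types to mediate transitions between rotated copies; (ii) ensuring that the distinguished seeding glue from Lemma~\ref{lemma:zigzagcounter} survives the macro-tile simulation and appears as a usable glue on the boundary of the simulated rectangle rather than being buried inside a scaled block; and (iii) confirming that the three sequential simulations are genuinely independent trials on disjoint tile positions so that the union bound applies directly, a point that follows from the fact that each simulated side grows from its own dedicated seed and terminates before the next is nucleated.
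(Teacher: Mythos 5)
Your proposal matches the paper's proof essentially step for step: instantiate the $O(\log n)$-tile zig-zag binary counter with constant-size north/south glueset (Lemma~\ref{lemma:zigzagcounter}), apply Theorem~\ref{thm:zigzag2d} with $r=O(n\log n)$ to obtain the $O(\log^2 n + \log n\log\frac{1}{\epsilon})$ bound, and chain three simulated rectangles plus a filler tile as in Theorem~\ref{thm:square}. Your explicit $\epsilon/3$ union bound and the scaling bookkeeping are refinements the paper leaves implicit, but the argument is the same.
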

\begin{proof}
The basic approach of this theorem is the same as Theorem~\ref{thm:square}.  Consider a zig-zag binary counter construction with $O(1)$ size north/south glueset designed to assemble a $\log n$ by $n$ rectangle using $O(\log n)$ tile types.  We can apply Theorem~\ref{thm:zigzag2d} to this system to get a temperature 1 probabilistic simulation of the counter.  Further, as the counter places at most $O(n\log n)$ tile types before terminating, Theorem~\ref{thm:zigzag2d} yields a tile complexity bound of $O(\log^2 n + \log n \log\frac{1}{\epsilon})$ to correctly simulate the counter with probability $1-\epsilon$.  As with the 3D construction, 3 rectangles can be assembled in sequence to build 3 of the borders of the goal $n\times n$ square, and finally filled with a fill tile type.
\end{proof}

\subsection{Kolmogorov Lower Bound for Probabilistic Assembly of $n\times n$ Squares}
Rothemund and Winfree~\cite{Rothemund:2000:PSC} showed that the minimum tile complexity for the deterministic self-assembly of an $n\times n$ square is lower bounded by $\Omega(\frac{\log n}{\log\log n})$ for almost all positive integers $n$.  This applies to both 2D and 3D systems.  In this section we show that this lower bound also holds for probabilistic systems that assemble an $n\times n$ square with probability greater than $1/2$.

\begin{theorem}\label{thm:kolmog}
For almost all positive integers $n$, the minimum tile complexity required to assemble an $n\times n$ square with probability greater than 1/2 is $\Omega(\frac{\log n}{\log\log n})$.
\end{theorem}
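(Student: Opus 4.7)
The plan is to adapt the Kolmogorov complexity lower bound of Rothemund--Winfree to the probabilistic setting. The starting observation is that for any tile system $T$, the events ``$T$ terminally assembles an $n\times n$ square'' over distinct values of $n$ are pairwise disjoint, since a terminal supertile has a uniquely determined shape. Consequently their probabilities sum to at most $1$, so there is at most one value of $n$ for which $T$ assembles an $n\times n$ square with probability strictly greater than $1/2$. This means $n$ is a (partial) function of $T$, which is the information-theoretic hook the argument needs.

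To convert this into a Kolmogorov bound, I then need the map $T \mapsto n$ to be computable, not just well-defined. For fixed $T$ and $n$, I would argue that the probability $p_n(T)$ that $T$ terminally assembles an $n\times n$ square is a computable rational. The idea is to truncate the (in general countably infinite) Markov chain on producible supertiles to the finite set $\mathcal{S}_n$ of producible supertiles of size at most $n^2$: any path that leaves $\mathcal{S}_n$ contributes nothing to $p_n$, so one can collapse supertiles of size $n^2+1$ into a single absorbing ``escape'' state. The hitting probabilities for the terminal $n\times n$ supertiles then satisfy a finite linear system with rational coefficients, which is solvable exactly. Enumerating $n = 1, 2, 3, \ldots$ and computing $p_n(T)$ thus recovers the unique $n$ (if any) with $p_n(T) > 1/2$.

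The Kolmogorov encoding of a tile system with $k$ tile types is then standard: at most $4k$ (or $6k$ in 3D) glue labels of constant strength, a constant temperature, and a designated seed tile fit into $O(k \log k)$ bits. Prepending the constant-sized recovery algorithm yields a self-delimiting description of $n$ of length $O(k \log k) + O(1)$, so $K(n) \leq O(k \log k)$. Combined with the standard density fact that $K(n) \geq \log n - O(\log \log n)$ for almost all positive integers $n$, this forces $k \log k = \Omega(\log n)$, hence $k = \Omega(\log n / \log \log n)$.

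The main obstacle, and the only real departure from the Rothemund--Winfree argument, is this computability step: deterministically, one just simulates $T$ to read off the final assembly, but in the probabilistic case the Markov chain need not terminate almost surely and has infinitely many producible states. The truncation to $\mathcal{S}_n$ finesses this, since producing an $n\times n$ square requires never leaving $\mathcal{S}_n$, and $|\mathcal{S}_n|$ is finite for any fixed $T$ and $n$. Once this piece is in place, the rest of the proof is essentially the classical counting argument applied to the description length of tile systems.
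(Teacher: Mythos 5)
Your proposal is correct and follows essentially the same route as the paper: both reduce to the Rothemund--Winfree Kolmogorov argument by observing that at most one side length $n$ can be assembled with probability exceeding $1/2$, exhibiting a fixed algorithm that recovers this $n$ from an $O(|T|\log|T|)$-bit description of the tile system, and concluding $|T| = \Omega(\log n/\log\log n)$. Your truncated-Markov-chain computation of the shape probability $p_n(T)$ is a somewhat more careful rendering of the paper's \emph{OutputDimension} procedure (which checks individual terminal assemblies rather than the shape event directly), but it is a refinement of the same step, not a different argument.
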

\begin{proof}
The Kolmogorov complexity of an integer $n$ with respect to a
universal Turing machine $U$ is $K_U (n) = $ min$|p|$ s.t $U(p) =
b_n$ where $b_n$ is the binary representation of $n$.  A
straightforward application of the pigeonhole principle yields that
$K_U (n) \geq \lceil \log n \rceil - \Delta$ for at least $1-
(\frac{1}{2})^{\Delta}$ of all $n$ (see~\cite{Li:1997:IKC} for
results on Kolmogorov complexity).  Thus, for any $\epsilon > 0$,
$K_U (n) \geq (1- \epsilon)\log n = \Omega (\log n)$ for almost all
$n$.

If there exists a fixed size Turing
machine that takes as input a tile system and outputs the maximum
extent (i.e. width or length) of any shape that is terminally produced by the system with probability more than 1/2, then the $\Omega(\log N)$ bound applies to the number of
bits required to represent any such tile system that assembles an $n\times n$ square
with probability greater than 1/2, as such a system combined with the fixed size Turing machine would constitute a machine that outputs $n$ (note that the unique output of $n$ depends on the probability of assembly exceeding 1/2 as only one such highly probable assembly can exist in this case).  Further, we can represent any tileset
$\Gamma=\langle T, s, \tau \rangle$ with $O(|T|\log |n|)$ bits assuming a constant bounded temperature (there are at most $4|T|$ glues in the system, thus all glues can be labeled uniquely with $O(|T|\log|T|)$ bits).  Thus, for constants $d$ and $c$ we have that $d\log n \leq c|T|\log|T|$. If we assume $T$ is the smallest possible tile set to assemble an $n\times n$ square with probability at least 1/2, we know that $c|T|\log|T| \leq |T|\log\log n$ as there is a known (deterministic) construction to build any $n\times n$ square in $O(\log n)$ tile types~\cite{Rothemund:2000:PSC}.  Thus, we have that $d\log n \leq |T|\log\log n$ which yields that the smallest number of tile types to assemble a square is $|T| = \Omega(\frac{\log n}{\log\log n})$.

The above argument requires the existence of a fixed size Turing machine/algorithm to output the dimension of any terminally produced shape assembled with probability greater than 1/2.  The following algorithm  satisfies this criteria.\\

OutputDimension( Tile system $\Gamma=\langle T, s, \tau \rangle$ )
\begin{itemize}
    \item For k from 1 to infinity do the following
        \begin{itemize}
            \item iterate over all producible $k$-tile assemblies of $\Gamma$
                \begin{itemize}
                    \item If the current assembly $X$ is terminal, compute the probability that $\Gamma$ assembles $X$.  If the probability exceeds 1/2, output the dimension of $X$ and halt.
                \end{itemize}
        \end{itemize}
\end{itemize}

\end{proof}

\subsection{Why not $O(\frac{\log n}{\log\log n})$ square building systems at Temperature $\tau=1$?}
Temperature $\tau=2$ assembly systems are able to assemble $n\times n$ squares in $O(\frac{\log n}{\log\log n})$ tile complexity, which meets a corresponding lower bound from Kolmogorov complexity that holds for almost all values $n$~\cite{Rothemund:2000:PSC}.  Further, this can be achieved with zig-zag tile systems.  So, why can't we get $O(\frac{\log n}{\log\log n})$ tile complexity for temperature $\tau =1$ assembly in 3D?  One answer is, we don't know that we can't.  However, the direct application of the zig-zag transformation to the temperature 2, $O(\frac{\log n}{\log\log n})$ tile complexity system does not work.  In particular, the temperature $\tau=2$ result is achieved by picking an optimal base for a counter, rather than counting in binary.  In general, a $k \times n$ rectangle can be assembled by a zig-zag tile system in $O(k + n^{\frac{1}{k}})$ tile complexity by implementing a $k$ digit, base $n^\frac{1}{k}$ counter~\cite{Aggarwal:2005:CGM}.  However, inherent in counting base $n^{\frac{1}{k}}$, the counter uses at least $n^{\frac{1}{k}}$ north/south glue types.  Thus, the zig-zag transformation yields a tile complexity of
$$
(k + n^{\frac{1}{k}})\log (n^{\frac{1}{k}}) = \Omega(\log n)
$$
 Therefore, regardless of what base we choose for the counter, we get at least a $\Omega(\log n)$ tile complexity.  The same problem exists for the base conversion scheme used in~\cite{Adleman:2001:RTP}.  It is an interesting open question whether or not it is possible to achieve the optimal $O(\frac{\log n}{\log\log n})$ tile complexity bound at temperature $\tau = 1$.

\section{Turing Machine Simulation for General Computation}
\label{sec:turing}  By extending non zig-zag Turing machines from the literature~\cite{Soloveichik:2005:CSA} we show how to implement a zig-zag tile set to simulate the behavior of an arbitrary input Turing machine.  From Theorem~\ref{thm:zigzag}, we can simulate this system at temperature 1.  The details of the construction are given in the Appendix in Section~\ref{sec:appturing}.

In this section we discuss our results for temperature 1 tile assembly systems capable of simulating Turing machines.  Our results are based on existing temperature 2 self-assembly systems capable of simulating Turing machines~\cite{Soloveichik:2005:CSA}.  With straightforward modifications, these constructions can be modified into zig-zag systems, and thus can be simulated at temperature 1 in 3D or in 2D probabilistically.  Our first Lemma states that such temperature 2 zig-zag systems exist.  For simplicity of bounds, we assume the Turing machine to be simulated is of a constant bounded size (constant bound number of states, alphabet size) and refer the reader to the appendix for tile complexities in terms of these variables.

\begin{lemma}\label{lemma:zigzagturing}  For a given Turing machine $T$ there exists a temperature 2 zig-zag tile assembly system that simulates T.  More precisely, there exists a zig-zag system that, given a seed assembly consisting of a horizontal line with north glues denoting an input string, the assembly will place a unique $\emph{accept}$ tile type if and only if T accepts the input string, and a unique $\emph{reject}$ tile type if and only if $T$ rejects the input string.
\end{lemma}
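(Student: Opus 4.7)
The plan is to start from the standard temperature $\tau = 2$ tile-based Turing machine simulation of Soloveichik and Winfree and reshape it so that the two zig-zag conditions (unique assembly sequence, and no strength-$1$ east/west glue exposed without the tile directly south of it being in place) are satisfied. In the standard construction each row of the assembly encodes a single TM configuration, the tape symbols are communicated via north/south strength-$1$ glues, and the head position plus state is communicated along the row via strength-$1$ east/west glues; a tile above the head position is placed by cooperative binding of its southern input (identifying the current cell's symbol) with its horizontal input (identifying that the head is at this column together with its state). This pattern already resembles the zig-zag pattern, so most of the work is making direction match.

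First, I would simulate each TM step with a pair of rows: an even row growing left-to-right and an odd row growing right-to-left. Tape symbols at each column are propagated upward two rows at a time via strength-$1$ north/south glues. Within a pair, the first row is a \emph{read} pass: most tiles simply copy the tape symbol, while the unique tile whose south glue carries the ``head+state+symbol'' marker performs the transition, writes the new tape symbol into its north glue, and emits the transition's direction-of-motion and new-state information along its direction-of-growth glue. The second row is a \emph{write} pass: it receives that information traveling the opposite way and materializes the head marker in the appropriate north glue (the column to the left or to the right of the old head, depending on the transition). This two-row scheme allows head motion in either direction even though each row grows in a fixed direction.

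Second, I would verify that the construction satisfies the zig-zag properties. Uniqueness of the assembly sequence follows because on an east-growing row every non-seed tile is determined by cooperation of its west and south inputs, both of which become available in a unique order; symmetrically on west-growing rows. The ``no unsupported strength-$1$ east/west glue'' property holds because every horizontal glue on a row is exposed only as that row grows, at which point the tile directly south is already in place (it belongs to the previous, completed row). For the halting behavior, I would add distinguished \emph{accept} and \emph{reject} tile types that fire when the read-pass tile sees the TM enter an accepting or rejecting state; these tiles appear at most once in any terminal assembly and appear exactly when $T$ accepts or rejects the input encoded in the seed's north glues. To bound the tape, I would use the standard trick of adding extendable blank-tile columns on both sides of each row, or equivalently simulate a one-tape TM with a one-sided tape.

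The main obstacle I expect is the case analysis needed to confirm that no stray cooperative attachments can occur between rows or within the row-pair transition. Because zig-zag systems are delicate about when strength-$1$ glues are exposed, the read/write split has to be designed so that at each interior column on the write pass the horizontal input from the read-pass tile either arrives first (forcing deterministic continuation) or is blocked by the lack of the corresponding south input (preventing nondeterministic growth). This is essentially bookkeeping over the finitely many cases of (direction of growth) $\times$ (direction of head motion) $\times$ (old/new state), and the detailed verification is deferred to the appendix; the conceptual content is captured by the read/write two-row pattern described above.
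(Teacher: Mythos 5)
Your proposal follows essentially the same route as the paper: both take the standard temperature-2 cooperative TM simulation, devote two alternating-direction rows to each machine step (a transition/read row plus a return row that finalizes the head position), carry tape symbols on north/south strength-1 glues and head-plus-state on the horizontal glues, and handle left versus right head motion by which of the two rows deposits the head marker. The paper's appendix fills in the same case analysis you defer (transition tiles $t_0$ with auxiliaries $t_1,t_2$ for each motion direction, side-turn tiles, and a seed bar encoding the input), so there is no substantive difference in approach.
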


By combining this Lemma with the zig-zag simulation lemma, we get the following results.

\begin{theorem}\label{theorem:3dturing}  For a given Turing machine $T$ there exists a temperature 1, 3D tile assembly system that simulates T, even when assembly is limited to one step into the third dimension.  Thus, 3D temperature 1 self-assembly is universal.
\end{theorem}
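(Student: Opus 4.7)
The plan is to prove Theorem~\ref{theorem:3dturing} by direct composition of the two main ingredients already in hand: Lemma~\ref{lemma:zigzagturing}, which produces a temperature 2 zig-zag tile system simulating an arbitrary Turing machine, and Theorem~\ref{thm:zigzag}, which converts any temperature 2 zig-zag system into an equivalent temperature 1 tile system living in the $z=0$ and $z=1$ planes of $\mathbb{Z}^3$. Given a Turing machine $T$, first I would invoke Lemma~\ref{lemma:zigzagturing} to obtain a temperature 2 zig-zag tile system $\Gamma_T = \langle T_\Gamma, s_\Gamma, 2\rangle$ whose assembly places a distinguished \emph{accept} tile iff $T$ accepts its input string (encoded by the north glues of the seed row) and a distinguished \emph{reject} tile iff $T$ rejects.

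Next I would apply Theorem~\ref{thm:zigzag} to $\Gamma_T$, obtaining a temperature 1 three-dimensional tile system $\Upsilon_T$ that simulates $\Gamma_T$ at vertical scale $4$ and horizontal scale $O(\log|\sigma_{T_\Gamma}|)$. Since the proof of Theorem~\ref{thm:zigzag} only places tiles in the $z=0$ plane (the body of each macro tile) and uses the $z=1$ plane solely for the geometric blocking pattern that forces the decoding chains to choose the correct branch, the resulting assembly grows at most one step into the third dimension, satisfying the parenthetical claim of the theorem. Under the simulation map $\mathrm{SIM}$ of the scaled simulation definition, the macro tile representing the accept (resp.\ reject) tile of $\Gamma_T$ serves as the unique macro-tile output signalling acceptance (resp.\ rejection) of $T$ on the given input.

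To finish, I would explain how Turing machine input is supplied. The seed for $\Gamma_T$ is a horizontal row whose north glues encode the input string; under the transformation of Theorem~\ref{thm:zigzag} this becomes a scaled seed row in $\Upsilon_T$ whose top face encodes the same input string via the geometric-blocking convention used throughout the zig-zag simulation, so that the first row of decoder macro tiles reads off the correct input symbols. Because $\Upsilon_T$ faithfully simulates every assembly step of $\Gamma_T$, it places a macro tile representing accept iff $T$ accepts and a macro tile representing reject iff $T$ rejects, establishing universality of temperature 1 self-assembly in 3D.

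The only genuinely delicate point, and what I expect to be the main obstacle, is the bookkeeping for the seed: one must verify that the input row in the scaled system does expose the geometric pattern in the $z=1$ plane that the first row of simulation macro tiles expects to read, and that no spurious nondeterminism arises at the boundary of the seed. This is not really a new obstacle, however, since it is exactly the same issue that arises when applying Theorem~\ref{thm:zigzag} to the binary counter in Theorem~\ref{thm:square}; it is handled by pre-decorating the seed row with the same blocking tiles used by ordinary macro tiles to export their north glue. Once that detail is in place, the rest of the argument is a one-line composition of Lemma~\ref{lemma:zigzagturing} and Theorem~\ref{thm:zigzag}.
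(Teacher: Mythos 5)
Your proposal is correct and follows essentially the same route as the paper, which likewise proves this theorem as a direct composition of Lemma~\ref{lemma:zigzagturing} and Theorem~\ref{thm:zigzag} (deferring quantitative details to the appendix). The extra care you take with the seed row's geometric encoding is a reasonable elaboration of a detail the paper leaves implicit, not a divergence in method.
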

\begin{proof}
This theorem follows from Lemma~\ref{lemma:zigzagturing} and Theorem~\ref{thm:zigzag}.  The exact tile complexity and scale of the simulation in terms of the number of states and alphabet size of the Turing machine are derived in the Appendix in Section~\ref{sec:appturing}.
\end{proof}

\begin{theorem}\label{theorem:2dturning}  For a given Turing machine $T$ there exists a temperature 1, 2D probabilistic tile assembly system that simulates T.  In particular, for any $\epsilon>0$ and machine $T$ that halts after $N$ steps, there exists a temperature 1, 2D system with $O(\log N + \log\frac{1}{\epsilon})$ tile types in terms of $N$ and $\epsilon$ that simulates $T$ and guarantees correctness with probability at least $1-\epsilon$.
\end{theorem}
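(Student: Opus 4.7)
The plan is to obtain the desired system by direct composition of Lemma~\ref{lemma:zigzagturing} with Theorem~\ref{thm:zigzag2d}. First, invoke Lemma~\ref{lemma:zigzagturing} to produce a temperature $2$ zig-zag tile assembly system $\Gamma = \langle T, s, 2\rangle$ that simulates the given Turing machine $M$: from an input seed row, $\Gamma$ assembles row by row, encoding successive tape configurations, and eventually places a distinguished accept or reject tile matching the behavior of $M$. Because $M$ is of constant bounded size, both $|T|$ and $|\sigma_T|$ are $O(1)$.

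The next step is to bound $r$, the total number of tile attachments made by $\Gamma$ on the input, so that Theorem~\ref{thm:zigzag2d} can be applied quantitatively. Since $M$ halts in at most $N$ steps, the head visits at most $O(N)$ distinct tape cells, each row of the zig-zag assembly has length $O(N)$, and there are $O(N)$ rows; hence the terminal assembly of $\Gamma$ has size $r = O(N^2)$, so $\log r = O(\log N)$.

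Now apply Theorem~\ref{thm:zigzag2d} to $\Gamma$ with the given error parameter $\epsilon$. That theorem yields a temperature $1$, 2D probabilistic tile system that simulates $\Gamma$ (and therefore $M$) without error with probability at least $1-\epsilon$, using tile complexity
\[
O(|T|\log|\sigma_T|\log r + |T|\log|\sigma_T|\log\tfrac{1}{\epsilon}) = O(\log N + \log\tfrac{1}{\epsilon}),
\]
matching the bound claimed. Reading off the accept/reject decision from the simulating system reduces to inspecting the macro tile representing the distinguished halting tile of $\Gamma$, which is unambiguous via the simulation map $\textrm{SIM}$ from the definition of scaled simulation.

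The only substantive work lies in Lemma~\ref{lemma:zigzagturing}: recasting the standard temperature $2$ tile-assembly simulators for Turing machines (from \cite{Soloveichik:2005:CSA}) as zig-zag systems with constant-size north/south glueset. The main obstacle is therefore not the present composition, but guaranteeing that the row-by-row TM simulator places its tiles in strictly alternating left-to-right and right-to-left order, with no strength-$1$ east/west glue exposed before its south neighbor is tiled, and with transitions between rows performed in constant glue types. This is achievable by a routine augmentation of states and glues to enforce a deterministic scan direction along each row, with row-ending tiles that turn the growth upward and then back in the opposite horizontal direction; the details are deferred to the appendix but do not disturb the tile-complexity accounting above.
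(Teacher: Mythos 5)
Your proposal is correct and follows essentially the same route as the paper: compose Lemma~\ref{lemma:zigzagturing} with the probabilistic zig-zag simulation of Theorem~\ref{thm:zigzag2d}, choosing the buffer parameter $k = O(\log r + \log\frac{1}{\epsilon})$ so that the expected number of decoding errors over all attachments is at most $\epsilon$. Your explicit bound $r = O(N^2)$ on the assembly size is in fact a bit more careful than the paper's own one-line accounting, but it changes nothing since $\log r = O(\log N)$ either way.
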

This theorem follows from combining Lemma~\ref{lemma:zigzagturing} and the construction for zig-zag simulation used in Theorem~\ref{thm:zigzag2d} with parameter $k=\log\frac{N}{\epsilon}$.  As discussed in Section~\ref{sec:2dprobtemp1simulation} the probability of an single error in simulation of a given zig-zag set is $1/2^k$.  Thus, the expected number of errors for $k=\log\frac{N}{\epsilon}$ becomes $\frac{\epsilon}{N}$, implying that the probability of 1 ore more errors is bounded by at most $\epsilon$.

  The exact tile complexity and scale of the simulation in terms of the number of states and alphabet size of the Turing machine are derived in the Appendix in Section~\ref{sec:appturing}.

\section{Simulation Results}\label{sec:sim}
We have developed automatic conversion software to generate temperature 1 3D or 2D systems given any input zig-zag tile set.  We have further verified the correctness of our algorithms and converter software by implementing a number of tile systems such as Sierpinski tile sets and binary counters.
We have run the simulations on Caltech's Xgrow simulator\cite{xgrow} and ours, the system performs well.

\begin{figure}[h]\centering
 \includegraphics[width=\linewidth]{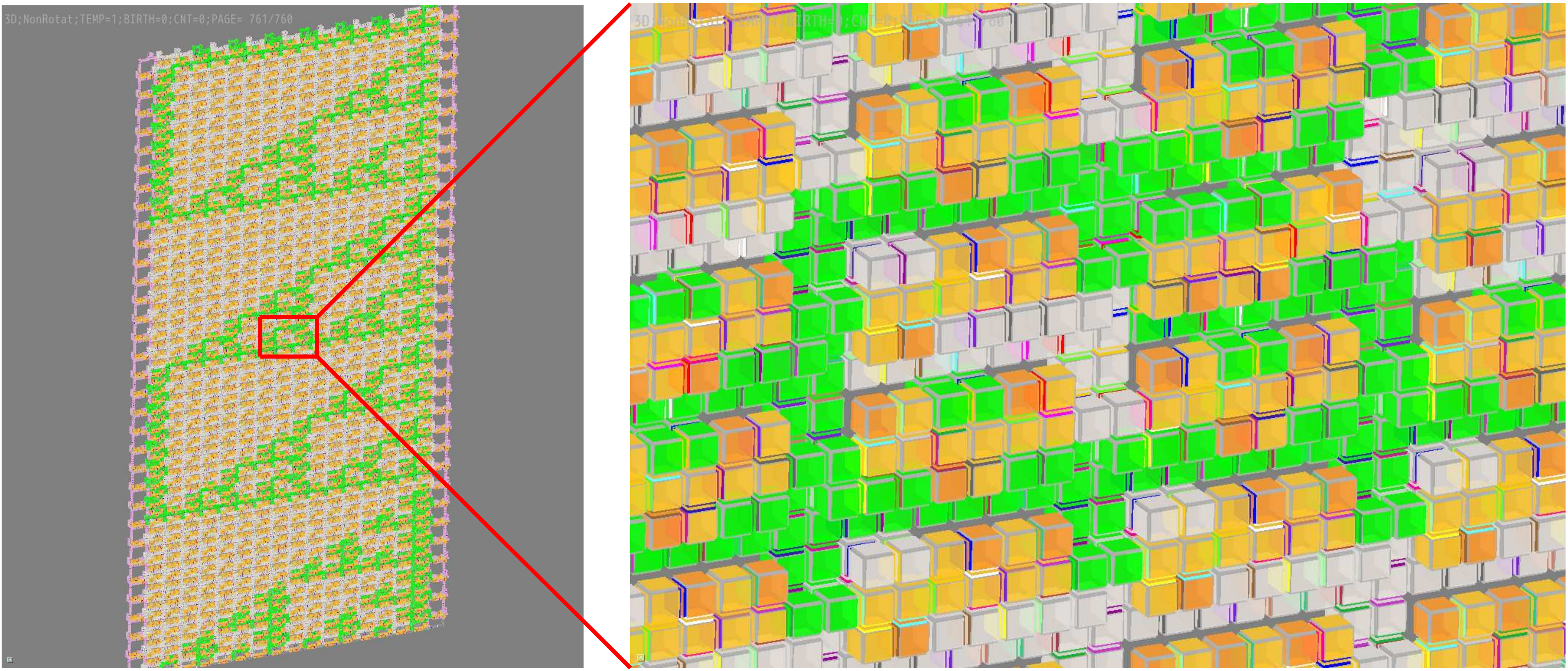}\\
 \caption{Snapshots of Sierpinski in 3D Zig-Zag Tile System}\label{fig:snapshot3d}
\end{figure}

We are interested in experimentally comparing the fault tolerance of
our temperature 1 probabilistic system with other fault tolerance schemes such as snaked proofreading systems.
We have implemented both systems with similar tile complexity constraints to see
under what settings either system would begin to make errors.
We have done preliminary simulations of the 2D systems in the kTAM with Caltech's Xgrow.
Our test case was a simple parity checking tile system\cite{Ashish04errorfree},
and our metric was whether or not the correct parity was computed.
Our preliminary tests used the input string "1000".
The block size of the snaked system is $ 6 \times 6 $,
and the length of each bit($K$) of probabilistic assembly tiles is 5.
The glue strength of the probabilistic assembly tiles are changed to 2.

We used $G_{mc}=13.6$, $G_{se}=7.0$. Both tile systems outputted correct results. But as we increased the $G_{se}$, the performances of the two systems began to differ. The snaked system became unstable, facet errors occurred and the tile mismatches increased. In contrast, the probabilistic assembly tile set grew faster and the output of the system was still correct. We used the $\tau=\frac{G_{mc}}{G_{se}}$ ranging from $2$ to $0.5$, and the results show that the 2D probabilistic assembly system is more stable than the snaked system within the parameters we tested.

The comparison of the probabilistic and snaked systems is very preliminary because the two system use different temperatures. We adjusted the glue strength of the probabilistic system to make it comparable with the snaked system. To more accurately compare the two systems, we need to investigate other methods and test sets, both for probabilistic and snaked systems.

\section{Further Research Directions}\label{sec:future}
There are a number of further directions relating to the work in this paper.  The most glaringly open problems are related to the power of deterministic temperature 1 self-assembly in 2D.  For example, it is conjectured that the optimal tile complexity for the assembly of an $n\times n$ square at temperature 1 in 2D is $2n-1$ tile types.  While it is straightforward to achieve this value, the highest lower bound known is no better than the small $O(\frac{\log n}{\log\log n})$ lower bound from Kolmogorov complexity.  Similarly, little is known about the computational power of this model.  It seems that it cannot perform sophisticated computation, but no one has yet been able to prove this.

Another direction is the concern that temperature 1 systems are prone to multiple nucleation errors in which many different growths of tiles spontaneously attach without requiring a seed.  One potential solution is to design growths such that assemblies that grow apart from the seed have a high probability of \emph{self-destructing} by yielding growths that block all further growth paths for the assembly.  In contrast, properly seeded growths by design could have the self-destruct branches blocked by the seed base they are attached to.

\newpage

\bibliographystyle{plain}

\bibliography{SA}

\begin{thebibliography}{10}

\bibitem{Adleman:2001:RTP}
Leonard Adleman, Q.~Cheng, A.~Goel, and M.~Huang.
\newblock Running time and program size for self-assembled squares.
\newblock In {\em Proceedings of the 33nd Annual {ACM} Symposium on Theory of
  Computing}, pages 740--748, 2001.

\bibitem{Aggarwal:2005:CGM}
Gagan Aggarwal, Qi~Cheng, Micheal~H. Goldwasser, Ming-Yang Kao, Pablo~Moisset
  de~Espanes, and Robert~T. Schweller.
\newblock Complexities for generalized models of self-assembly.
\newblock {\em SIAM Journal on Computing}, 34:1493--1515, 2005.

\bibitem{Ashish04errorfree}
Ho-Lin~Chen Ashish, Ho~lin Chen, and Ashish Goel.
\newblock Error free self-assembly using error prone tiles.
\newblock In {\em In DNA Based Computers 10}, pages 274--283, 2004.

\bibitem{chen:2008:DAC}
Ho-Lin Chen, Ashish Goel, and Chris Luhrs.
\newblock Dimension augmentation and combinatorial criteria for efficient
  error-resistant dna self-assembly.
\newblock In {\em SODA '08: Proceedings of the nineteenth annual ACM-SIAM
  symposium on Discrete algorithms}, pages 409--418, Philadelphia, PA, USA,
  2008. Society for Industrial and Applied Mathematics.

\bibitem{xgrow}
DNA and Natural~Algorithms Group.
\newblock Xgrow simulator.
\newblock Homepage: \url{http://www.dna.caltech.edu/Xgrow}, 2009.

\bibitem{doty:2008:LSA}
D.~Doty, M.~Patiz, and S.~Summers.
\newblock Limitations of self-assembly at temperature 1.
\newblock In {\em Proceedings of the 15th DIMACS Workshop on DNA Based
  Computers}, 2008.

\bibitem{Fu:1993:DDC}
Tsu-Ju Fu and Nadrian~C. Seeman.
\newblock {DNA} double-crossover molecules.
\newblock {\em Biochemistry}, 32:3211--3220, 1993.

\bibitem{LaBean:2000:CAL}
T.~H. LaBean, H.~Yan, J.~Kopatsch, F.~Liu, E.~Winfree, H.~J. Reif, and N.~C.
  Seeman.
\newblock The construction, analysis, ligation and self-assembly of {DNA}
  triple crossover complexes.
\newblock {\em J. Am. Chem. Soc.}, 122:1848--1860, 2000.

\bibitem{majumder:2008:CRP}
U.~Majumder, T.~H. Labean, and J.~H. Reif.
\newblock Activatable tiles: Compact, robust programmable assembly and other
  applications.
\newblock In {\em Proceedings of the 15th DIMACS Workshop on DNA Based
  Computers}, pages 15--25, 2008.

\bibitem{manuch:2009:TLB}
J.~Manuch, L.~Stacho, and C.~Stoll.
\newblock Two lower bounds for self-assemblies at temperature 1.
\newblock In {\em ACM Symposium on Applied Computing}, 2009.

\bibitem{isutas}
Matthew~J. Patitz.
\newblock {ISU TAS}.
\newblock Homepage: \url{http://www.cs.iastate.edu/~lnsa}, 2008.

\bibitem{Rothemund:2000:PSC}
Paul Rothemund and Erik Winfree.
\newblock The program-size complexity of self-assembled squares.
\newblock In {\em Proceedings of the 32nd Annual {ACM} Symposium on Theory of
  Computing}, pages 459--468, 2000.

\bibitem{Sipser05introductionto}
Michael Sipser.
\newblock {\em Introduction to the Theory of Computation, Second Edition}.
\newblock Course Technology, 2 edition, February 2005.

\bibitem{Soloveichik:2005:CSA}
David Soloveichik and Erik Winfree.
\newblock Complexity of self-assembled shapes.
\newblock In {\em Tenth International Meeting on {DNA} Computing}, pages
  344--354, 2005.

\bibitem{Winfree:2004:PTS}
Erik Winfree and Renat Bekbolatov.
\newblock Proofreading tile sets: Error correction for algorithmic
  self-assembly.
\newblock In {\em DNA Computers 9 LNCS}, pages 126--144, 2004.

\end{thebibliography}


\begin{thebibliography}{10}

\bibitem{Abel:2010:SRT}
Zachary Abel, Nadia Benbernou, Mirela Damian, Erik~D. Demaine, Martin~L.
  Demaine, Robin Flatland, Scott Kominers, and Robert Schweller.
\newblock Shape replication through self-assembly and {RNase} enzymes.
\newblock In {\em Proceedings of the 21st Annual ACM-SIAM Symposium on Discrete
  Algorithms (SODA 2010)}, pages 1045--1064, Austin, Texas, January 17--19
  2010.

\bibitem{Adleman:2001:RTP}
Leonard Adleman, Q.~Cheng, A.~Goel, and M.~Huang.
\newblock Running time and program size for self-assembled squares.
\newblock In {\em Proceedings of the 33nd Annual {ACM} Symposium on Theory of
  Computing}, pages 740--748, 2001.

\bibitem{Aggarwal:2005:CGM}
Gagan Aggarwal, Qi~Cheng, Micheal~H. Goldwasser, Ming-Yang Kao, Pablo~Moisset
  de~Espanes, and Robert~T. Schweller.
\newblock Complexities for generalized models of self-assembly.
\newblock {\em SIAM Journal on Computing}, 34:1493--1515, 2005.

\bibitem{Ashish04errorfree}
Ho-Lin~Chen Ashish, Ho~lin Chen, and Ashish Goel.
\newblock Error free self-assembly using error prone tiles.
\newblock In {\em In DNA Based Computers 10}, pages 274--283, 2004.

\bibitem{BarSchRotWin09}
Robert~D. Barish, Rebecca Schulman, Paul~W. Rothemund, and Erik Winfree.
\newblock An information-bearing seed for nucleating algorithmic self-assembly.
\newblock {\em Proceedings of the National Academy of Sciences},
  106(15):6054--6059, March 2009.

\bibitem{Brun:2008:SNP}
Yuriy Brun.
\newblock Solving np-complete problems in the tile assembly model.
\newblock {\em Theor. Comput. Sci.}, 395(1):31--46, 2008.

\bibitem{Chandran:2009:TCL}
Harish Chandran, Nikhil Gopalkrishnan, and John Reif.
\newblock The tile complexity of linear assemblies.
\newblock In {\em ICALP '09: Proceedings of the 36th International Colloquium
  on Automata, Languages and Programming}, pages 235--253, Berlin, Heidelberg,
  2009. Springer-Verlag.

\bibitem{chen:2008:DAC}
Ho-Lin Chen, Ashish Goel, and Chris Luhrs.
\newblock Dimension augmentation and combinatorial criteria for efficient
  error-resistant dna self-assembly.
\newblock In {\em SODA '08: Proceedings of the nineteenth annual ACM-SIAM
  symposium on Discrete algorithms}, pages 409--418, Philadelphia, PA, USA,
  2008. Society for Industrial and Applied Mathematics.

\bibitem{CheSchGoeWin07}
Ho-Lin Chen, Rebecca Schulman, Ashish Goel, and Erik Winfree.
\newblock Reducing facet nucleation during algorithmic self-assembly.
\newblock {\em Nano Letters}, 7(9):2913--2919, September 2007.

\bibitem{Demaine:2007:SSA}
E.~Demaine, M.~Demaine, S.~Fekete, M.~Ishaque, E.~Rafalin, R.~Schweller, and
  D.~Souvaine.
\newblock Staged self-assembly: Nanomanufacture of arbitrary shapes with {O(1)}
  glues.
\newblock In {\em Proceedings of the 13th International Meeting on {DNA}
  Computing}, 2007.

\bibitem{xgrow}
DNA and Natural~Algorithms Group.
\newblock Xgrow simulator.
\newblock Homepage: \url{http://www.dna.caltech.edu/Xgrow}, 2009.

\bibitem{doty:2008:LSA}
D.~Doty, M.~Patiz, and S.~Summers.
\newblock Limitations of self-assembly at temperature 1.
\newblock In {\em Proceedings of the 15th DIMACS Workshop on DNA Based
  Computers}, 2008.

\bibitem{Doty:2009:RSA}
David Doty.
\newblock Randomized self-assembly for exact shapes.
\newblock In {\em FOCS '09: Proceedings of the 2009 50th Annual IEEE Symposium
  on Foundations of Computer Science}, pages 85--94, Washington, DC, USA, 2009.
  IEEE Computer Society.

\bibitem{Fu:1993:DDC}
Tsu-Ju Fu and Nadrian~C. Seeman.
\newblock {DNA} double-crossover molecules.
\newblock {\em Biochemistry}, 32:3211--3220, 1993.

\bibitem{GotEtAl00}
Kei Goto, Yoko Hinob, Takayuki Kawashima, Masahiro Kaminagab, Emiko Yanob, Gaku
  Yamamotob, Nozomi Takagic, and Shigeru Nagasec.
\newblock Synthesis and crystal structure of a stable {S}-nitrosothiol bearing
  a novel steric protection group and of the corresponding {S}-nitrothiol.
\newblock {\em Tetrahedron Letters}, 41(44):8479--8483, 2000.

\bibitem{HellerPugh1}
Wilfried Heller and Thomas~L. Pugh.
\newblock ``{S}teric protection'' of hydrophobic colloidal particles by
  adsorption of flexible macromolecules.
\newblock {\em Journal of Chemical Physics}, 22(10):1778, 1954.

\bibitem{HellerPugh2}
Wilfried Heller and Thomas~L. Pugh.
\newblock ``{S}teric'' stabilization of colloidal solutions by adsorption of
  flexible macromolecules.
\newblock {\em Journal of Polymer Science}, 47(149):203--217, 1960.

\bibitem{Kao:2008:RSA}
Ming-Yang Kao and Robert Schweller.
\newblock Randomized self-assembly for approximate shapes.
\newblock In {\em ICALP '08: Proceedings of the 35th international colloquium
  on Automata, Languages and Programming, Part I}, pages 370--384, Berlin,
  Heidelberg, 2008. Springer-Verlag.

\bibitem{LaBean:2000:CAL}
T.~H. LaBean, H.~Yan, J.~Kopatsch, F.~Liu, E.~Winfree, H.~J. Reif, and N.~C.
  Seeman.
\newblock The construction, analysis, ligation and self-assembly of {DNA}
  triple crossover complexes.
\newblock {\em J. Am. Chem. Soc.}, 122:1848--1860, 2000.

\bibitem{Li:1997:IKC}
M.~Li and P.~Vitanyi.
\newblock {\em An Introduction to Komogorov Complexity and Its Applications
  (Second Edition)}.
\newblock Springer Verlag, New York, 1997.

\bibitem{LiuShaSee99}
Furong Liu, Ruojie Sha, and Nadrian~C. Seeman.
\newblock Modifying the surface features of two-dimensional {D}{N}{A} crystals.
\newblock {\em Journal of the American Chemical Society}, 121(5):917--922,
  1999.

\bibitem{majumder:2008:CRP}
U.~Majumder, T.~H. Labean, and J.~H. Reif.
\newblock Activatable tiles: Compact, robust programmable assembly and other
  applications.
\newblock In {\em Proceedings of the 15th DIMACS Workshop on DNA Based
  Computers}, pages 15--25, 2008.

\bibitem{MaoLabReiSee00}
Chengde Mao, Thomas~H. LaBean, John~H. Reif, and Nadrian~C. Seeman.
\newblock Logical computation using algorithmic self-assembly of {D}{N}{A}
  triple-crossover molecules.
\newblock {\em Nature}, 407(6803):493--6, 2000.

\bibitem{MaoSunSee99}
Chengde Mao, Weiqiong Sun, and Nadrian~C. Seeman.
\newblock Designed two-dimensional {D}{N}{A} holliday junction arrays
  visualized by atomic force microscopy.
\newblock {\em Journal of the American Chemical Society}, 121(23):5437--5443,
  1999.

\bibitem{manuch:2009:TLB}
J\'{a}n Ma\v{n}uch, Ladislav Stacho, and Christine Stoll.
\newblock Two lower bounds for self-assemblies at temperature 1.
\newblock In {\em SAC '09: Proceedings of the 2009 ACM symposium on Applied
  Computing}, pages 808--809, New York, NY, USA, 2009. ACM.

\bibitem{Rothemund:2000:PSC}
Paul Rothemund and Erik Winfree.
\newblock The program-size complexity of self-assembled squares.
\newblock In {\em Proceedings of the 32nd Annual {ACM} Symposium on Theory of
  Computing}, pages 459--468, 2000.

\bibitem{RoPaWi04}
Paul~W.K. Rothemund, Nick Papadakis, and Erik Winfree.
\newblock Algorithmic self-assembly of {DNA} {S}ierpinski triangles.
\newblock {\em PLoS Biology}, 2(12):2041--2053, 2004.

\bibitem{SW05}
Rebecca Schulman and Erik Winfree.
\newblock Self-replication and evolution of dna crystals.
\newblock In {\em In The 13th European Conference on Artificial Life (ECAL},
  pages 734--743. Springer-Verlag, 2005.

\bibitem{Sipser05introductionto}
Michael Sipser.
\newblock {\em Introduction to the Theory of Computation, Second Edition}.
\newblock Course Technology, 2 edition, February 2005.

\bibitem{Soloveichik:2005:CSA}
David Soloveichik and Erik Winfree.
\newblock Complexity of self-assembled shapes.
\newblock In {\em Tenth International Meeting on {DNA} Computing}, pages
  344--354, 2005.

\bibitem{WadeOrganicChemistry91}
Leroy~G. Wade.
\newblock {\em Organic {C}hemistry}.
\newblock Prentice Hall, 2nd edition, 1991.

\bibitem{Winfree:1998:ASA}
Erik Winfree.
\newblock {\em Algorithmic Self-Assembly of DNA}.
\newblock PhD thesis, California Institute of Technology, Pasadena, 1998.

\bibitem{Winfree:2004:PTS}
Erik Winfree and Renat Bekbolatov.
\newblock Proofreading tile sets: Error correction for algorithmic
  self-assembly.
\newblock In {\em DNA Computers 9 LNCS}, pages 126--144, 2004.

\bibitem{Winfree:1996:DSA}
Erik Winfree, Xiaoping Yang, and Nadrian~C. Seeman.
\newblock Design and self-assembly of two-dimensional {DNA} crystals.
\newblock In {\em Proceedings of the 2nd International Meeting on DNA Based
  Computers}, 1996.

\end{thebibliography}

\SetAlFnt{\small\sf}


\appendix
\section{A Simple 3D Example}
See Figure \ref{fig:simpleexample} and \ref{fig:simpleexampleassembled}.
%
\begin{figure}[htbp]
\centering
\includegraphics[width=\linewidth]{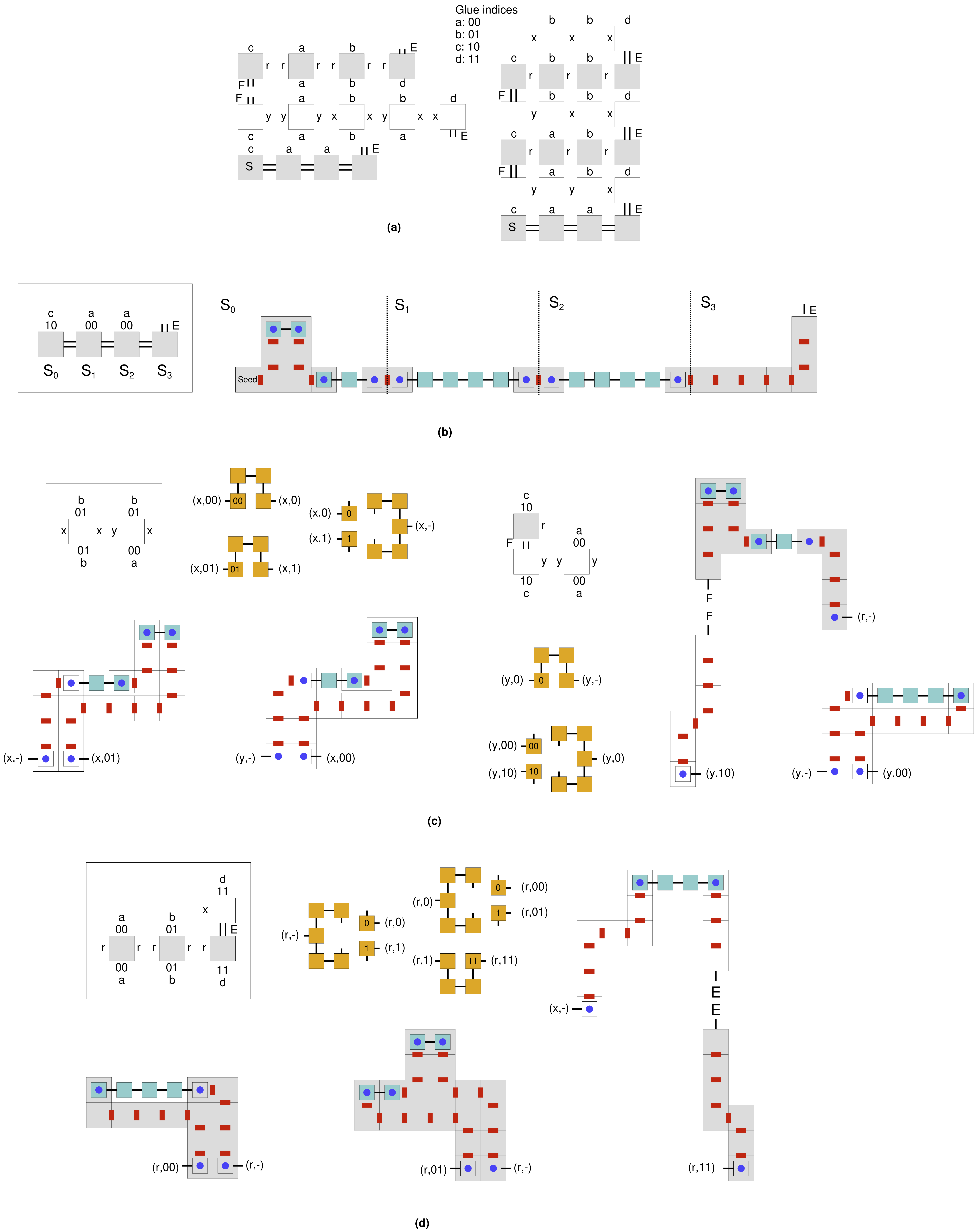}
\caption{The tileset in (a) is a zig-zag tileset that swaps the right most a to b every other line.  In (b), (c), and (d), a collection of macro tiles are provided, according to Theorem~\ref{thm:zigzag}.  This set will simulate the given tileset system using 3 dimensions at temperature $\tau =1$.}
\label{fig:simpleexample}
\end{figure}
%

%
\begin{figure}[htbp]
\centering
\includegraphics[width=0.9\linewidth]{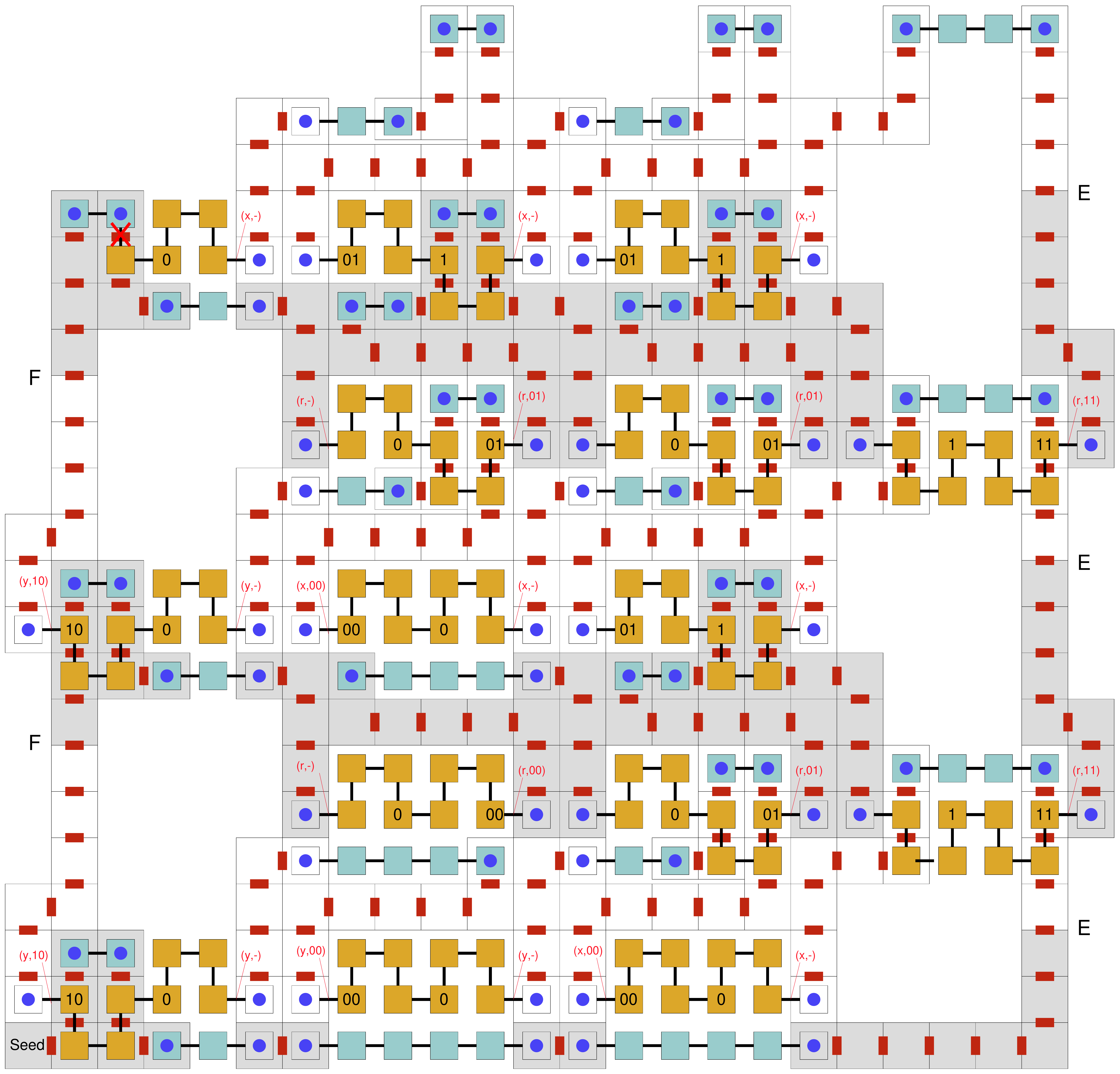}
\caption{This is the final assembly from the tileset given in Figure~\ref{fig:simpleexample}}.
\label{fig:simpleexampleassembled}
\end{figure}
%

\clearpage

\section{Constructing a Turing Machine with Zig-Zag Tile Sets at $\tau=2$ in 2D}\label{sec:appturing}

\subsection{Converting}

A {\bf \emph{Turing machine}} is a 7-tuple\cite{Sipser05introductionto}, $(Q, \Sigma , \Gamma , \delta , q_0, q_{accept}, q_{reject})$, where
$ Q, \Sigma , \Gamma $ are all finite sets and
\begin{enumerate}
  \item $Q$ is the set of states,
  \item $ \Sigma $ is the input alphabet not containing the {\bf \emph{blank symbol}} $ \textvisiblespace $,
  \item $ \Gamma $ is the tape alphabet, where $\textvisiblespace \in \Gamma $ and $\Sigma \subseteq \Gamma $,
  \item $ \delta : Q \times \Gamma \longrightarrow Q \times \Gamma \times \{L, R\}$ is the transition function,
  \item $ q_0 \in Q $ is the start state,
  \item $ q_{accept} \in Q $ is the accept state, and
  \item $ q_{reject} \in Q $ is the reject state, where $q_{accept} \neq q_{reject} $.
\end{enumerate}

All of the glues at the side north or south of the tile regard as the alphabet in set $ \Gamma $. The $g_n$ of tile is the output and $g_s$ is the input in our turing machine, so the action of placing one tile with difference $g_n$ and $g_s$ on the supertile regard as one write operation, placing one tile with the same $g_n$ and $g_s$ regard as one read operation.

Basically, the tiles on the west of growth direction will be used to the main calculation, and the tiles on the direcition of east will be used to copy the glues from south to north.

\begin{figure}[h]\centering
 \includegraphics[scale=0.9]{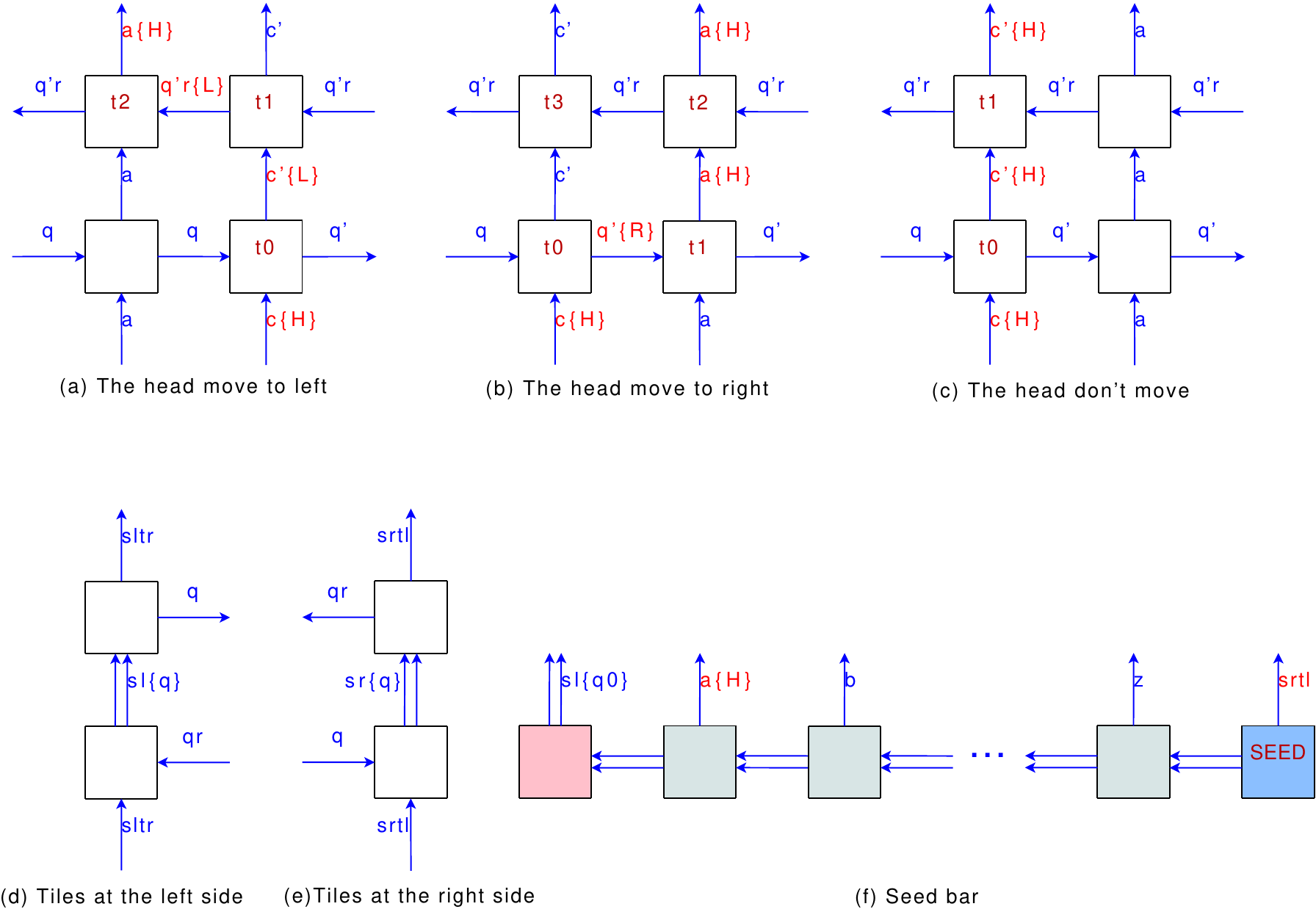}
 \caption{{\bf Zig-Zag Tile Structure for Contructing Turing Machine}. {\bf (a)} $ \delta (q, c) = (q', c', L); ${\bf (b)} $ \delta (q, c) = (q', c', R) $; {\bf (c)} $ \delta (q, c) = (q', c', \phi) $; {\bf (d)} Left side; {\bf (e)} Right side; {\bf (f)} Seed bar: $ q_0 $ and input sequence $a, b, \dots , z$}\label{fig:zz2tm}
\end{figure}

First, let's consider the equation $ \delta (q, c) = (q', c', R) $. The head of TM is located at the $c$ position and the current state is $q$. Then the record of the current position of tape is changed to $c'$, the current state is changed to $q'$, and head move to right (R). Three types of the tiles, which are indicated by $t_0$, $t_1$, and $t_2$ in Figure \ref{fig:zz2tm}b, are used to simulate this process. The glue notation at the south of the tile $t_0$ means that the head is at the column of tile $t_0$ and input alphabet is $c$. The glue notation ($q$) at the west of the tile $t_0$ means that the current state is $q$. The glues at the north and east indicate the output and next state seperately. The east glue ($q'\{R\}$) of the tile $t_0$ indicates the next state ($q'$) and carring the extra information to notify the next tile that the head is coming.

The equation $ \delta (q, c) = (q', c', L) $ indicate that the head will move to left. Figure \ref{fig:zz2tm}a shows how to move the head to left. Again, the head is at the column of tile $t_0$, the current state is $q$, and input is $c$. Then the tile $t_0$ output north glue $c'\{L\}$, and east glue $q'$. The tile $t_1$ will translate the glue information ($c'\{L\}$) to state information ($q'r\{L\}$) which contains the command of moving left. At last, the tile $t_2$ add the head notation to the glue of the current column.

If the head don't move after read one character from tape, the corresponding tile is $t_0$ showed in Figure \ref{fig:zz2tm}c.
Figure \ref{fig:zz2tm}d and Figure \ref{fig:zz2tm}e show the tiles at the left and right sides of zig-zag structure seperatly.
Figure \ref{fig:zz2tm}f shows the seed bar, which initializes the turing machine, includes the start state $q_0$, the input sequence ($a, b, \dots , z$) in the type etc.


\subsection{Complexity Analysis}
In the case of $\tau=2$ 2D zig-zag tile system, the number of the tile types for state transfer ($t_0$ in the Figure \ref{fig:zz2tm}) is $ | \delta | $. And the number of the auxiliary tile types for state transfer ($t_1$ and $t_2$ in Figure \ref{fig:zz2tm}a, \ref{fig:zz2tm}b ) is $|\Sigma|\times |Q|\times 2$ (move left) and $|\Sigma|\times |Q|\times 2$ (move right). The number of tile $t_1$ in Figure \ref{fig:zz2tm}c does not need be included because it is counted in the auxiliary tile types ($t_2$ in Figure \ref{fig:zz2tm}b ).
The other tile types are the tiles for transfer alphabet information, the number of it is $|Q|\times |\Sigma|\times 2$ (2 indicate the east direction and west direction of the growth).

There's also need some special types of tile in both sides of the zig-zag structure to change the growth direction, it will cost $4|Q|$ tiles. If the length of the tape is $n$, then the number of the tile types for constructing seed bar will cost $n$ + 2 tiles (See Figure \ref{fig:zz2tm}f).

So the total tile types is
$T=| \delta | + |\Sigma|\times |Q|\times 2 + |\Sigma| \times |Q|\times 2 + |Q|\times |\Sigma|\times 2 + 4|Q| + n + 2
=| \delta | + 6 |Q| |\Sigma| + 4|Q| + n + 2$. The upper bound of tile types is $O(| \delta | + |Q| |\Sigma| + n) $. Considering that $|Q| \leq | \delta | \leq |Q|^2 | \Sigma |^2$ and $n \leq | \Sigma |$, the complexity of tile types is $O(| \delta | + |Q| |\Sigma| + n) = O(| \delta |)$.

Each of the state transfer cost two lines of the zig-zag structure (east direction line and west direction line above). Given a input tape and start state, if the number of the state transfers to be passed before it stop is $r$, then the lines of the tile structure will be $2r$. So the space used in calculation is $O(nr)$.

In the case of $\tau=1$ 3D zig-zag tile system, the length of each mapped tile depend on the length of encoding code of glues. If the number of glues of the 2D zig-zag tiles is $G$, then $G=O(| \delta |)$. The complexity of tile types in 3D is $O(| \delta | \log G)=O(| \delta | \log | \delta |)$, the complexity of space is $O(nr\log G)=O(nr \log | \delta |)$.

In the case of $\tau=1$ 2D probabilistic zig-zag tile system, the length of each mapped tile depend on the parameter $K$ and the length encoding code of glues.
The parameter $K$ of the $\tau=1$ 2D probabilistic zig-zag tile system, is a constant depend on the probability of correction of result. $K=O(\log r + \log \frac{1}{\epsilon})$.
The space complexity is $O(nrK\log G) = O(nr\log | \delta |)$, and the complexity of tile types is $O(| \delta | \log | \delta |)$.

\begin{table}[h]\caption{The Complexity of Zig-Zag Turing Machine}
\centering
\label{tab:zigzagcomplexity}
 \begin{tabular}{|c|c|c|c|}
 \hline
           &{\bf $\tau=2$ 2D}   &{\bf $\tau=1$ 3D}  &{\bf $\tau=1$ 2D Prob.}\\
 \hline
{\bf Space}&$O(nr)$             &$O(nr\log | \delta |)$      &$O(nr\log | \delta |)$\\
 \hline
{\bf Tiles}&$O(| \delta |) $&$O(| \delta |\log | \delta |)$&$O(| \delta |\log | \delta |)$\\
 \hline
\end{tabular}\\
$| \delta |$: the number of state functions;\\
$n$: the length of the tape (number of alphabet);\\
$r$: the number of the state transfers to be passed before it stop;\\
$K$: the parameter of probabilistic zig-zag tile system;\\

\end{table}

\subsection{Algorithms}
The detailed algorithms for transfering any turing machine to zig-zag tile set at temperature $\tau=2$ in 2D is listed in Algorithm \ref{alg:tmstate2zigzag}.

\newcommand{\tmstatetozigzag}{\ensuremath{\mbox{\sc Turing-machine-to-zigzag}}}

\begin{algorithm}
\caption{\tmstatetozigzag(): Convert TM to Zig-Zag Tile Set at $\tau=2$ in 2D}
\label{alg:tmstate2zigzag}
\linesnumberedhidden
\SetKwData{tileout}{$T_{2d,2t}$}

\KwIn{$q_0$, the start state\\
$str$, the input string at the tape\\
$ \delta $, the state transition functions
}

\KwOut{\tileout, zig-zag tile set at temperature 2 in 2D}

    $Q \gets \phi $ \tcc*[r]{All of the states of the $ \delta $.}
    $\Sigma \gets \phi $ \tcc*[r]{All of the input/output of the $ \delta $ and the alphabet at the tape.}

    \ForEach{ $ f \in \delta $ }{
        $Q \gets Q \cup \{$ current state of $f\}$ \;
        $Q \gets Q \cup \{$ next state of $f\}$ \;

        $\Sigma \gets \Sigma \cup \{$ input of $f\}$ \;
        $\Sigma \gets \Sigma \cup \{$ output of $f\}$ \;
    }
    \ForEach{ $ c \in str $ }{
        $\Sigma \gets \Sigma \cup \{c\}$ \;
    }
    \ForEach{ $ q \in Q $ }{
        \tcc{A four sided Wang tile denoted by the ordered quadruple $(\textrm{north}(t), \textrm{east}(t), \textrm{south}(t), \textrm{west}(t))$}
        $\tileout \gets \tileout \cup \{tile (sl\{q\}, q_r, sltr, \phi)\}$ \tcc*[r]{left, down}
        $\tileout \gets \tileout \cup \{tile (sltr, q, sl\{q\}, \phi)\}$ \tcc*[r]{left, up}
        $\tileout \gets \tileout \cup \{tile (sr\{q\}, \phi , srtl, q)\}$ \tcc*[r]{right, down}
        $\tileout \gets \tileout \cup \{tile (srtl, \phi , sr\{q\}, q_r)\}$ \tcc*[r]{right, up}
        \ForEach{ $ c \in \Sigma $ }{
            $\tileout \gets \tileout \cup \{tile (c, q, c, q)\}$ \tcc*[r]{tiles for coping}
            $\tileout \gets \tileout \cup \{tile (c, q_r, c, q_r)\}$ \tcc*[r]{tiles for coping, return to left sides.}
            $\tileout \gets \tileout \cup \{tile (c\{H\}, q_r, c\{H\}, q_r)\}$ \tcc*[r]{Right moving auxiliary tiles: $t_2$}
            $\tileout \gets \tileout \cup \{tile (c\{H\}, q, c, q\{R\})\}$ \tcc*[r]{Right moving auxiliary tiles: $t_1$}
            $\tileout \gets \tileout \cup \{tile (c\{H\}, q_r\{L\}, c, q_r)\}$ \tcc*[r]{Left moving auxiliary tiles: $t_2$}
        }
    }
    \ForEach{ $ f \in \delta $ }{
        \Switch {head moving}{
            \showln \Case {LEFT}{
                $\tileout \gets \tileout \cup \{tile (f.output\{L\}, f.state_{out}, f.input\{H\}, f.state_{in})\}$ \tcc*[r]{$t_0$}
                $\tileout \gets \tileout \cup \{tile (f.output, f.state_{out}r, f.input\{L\}, f.state_{out}r\{L\})\}$ \tcc*[r]{$t_1$}
            }
            \showln \Case {RIGHT}{
                $\tileout \gets \tileout \cup \{tile (f.output, f.state_{out}\{R\}, f.input\{H\}, f.state_{in})\}$ \tcc*[r]{$t_0$}
            }
            \showln \Case {No moving}{
                $\tileout \gets \tileout \cup \{tile (f.output\{H\}, f.state_{out}, f.input\{H\}, f.state_{in})\}$ \tcc*[r]{$t_0$}
            }
        }
    }
    \tcc{Seed Bar}
    $\tileout \gets \tileout \cup \{tile (srtl, \phi , \phi , gr)\}$ \tcc*[r]{SEED}
    $\tileout \gets \tileout \cup \{tile (sl\{q_0\}, gl , \phi , \phi)\}$ \tcc*[r]{SEED\_L}
    \For{$i \gets 0 ~ \KwTo ~ (\vert str \vert - 1)$}{
        $c \gets i$th alphabet of str \;
        \If{$ i = 0 $}{
            $\tileout \gets \tileout \cup \{tile (c\{H\}, s_i , \phi , gl)\}$ \;
        }\ElseIf{$ i = (\vert str \vert - 1) $}{
            $\tileout \gets \tileout \cup \{tile (c, gr , \phi , s_{i-1})\}$ \;
        }\Else{
            $\tileout \gets \tileout \cup \{tile (c, s_i , \phi , s_{i-1})\}$ \;
        }
    }

    \KwRet{$\tileout$}\;
\end{algorithm}


\section{Convert Arbitrary Zig-Zag Tile Set from $\tau=2$ to $\tau=1$}\label{sec:detailconvertzigzag}

\subsection{Notation}

A tile $t$ is a four sided Wang tile denoted by the quadruple $(g_n, g_e, g_s, g_w)$, $g_n, g_e, g_s, g_w$
denote the glue type of the four sides: North, East, South, and West.

\subsection{The Converting Table}
When the Zig-Zag tiles in 2D are categorized into some types, the tile types can be mapped to 3D
and 2D probabilistic tile sets directly.
Table \ref{tab:zigzagmapping} list all of the relationships between the 2D tiles and 3D tile sets.

The tiles (see the 3D figures in Table \ref{tab:zigzagmapping}) are {\bf \emph{different}} from each other.
The lines between two adjacent tiles denote the glues which strength are 1
and they are also {\bf \emph{different}} from each other except adjacent glues.
Large squares denote the tiles in the plane $z=0$, and small squares denote the tiles in the plane $z=1$.
The encoded code($e_n$) showed in the figures in Table \ref{tab:zigzagmapping} is two bits long(maxbits=2).
The parameter $K$ of Probabilistic Zig-Zag is 2 in Table \ref{tab:zigzagmapping}.

\begin{longtable}[h]{>{\centering\arraybackslash}m{0.26\textwidth}m{0.28\textwidth}m{0.36\textwidth}}
\caption{Zig-Zag Tile Set Mapping} \label{tab:zigzagmapping}\\
\toprule
{\bf Zig-Zag in 2D $\tau=2$} & {\bf Zig-Zag in 3D $\tau=1$} & {\bf Prob. Zig-Zag in 2D $\tau=1$}\\
\midrule
\endfirsthead  
\midrule
{\bf Zig-Zag in 2D $\tau=2$} & {\bf Zig-Zag in 3D $\tau=1$} & {\bf Prob. Zig-Zag in 2D $\tau=1$}\\
\midrule
\endhead 
\multicolumn{3}{r}{Continued on next page \dots} \\
\midrule

\endfoot 
\endlastfoot

\begin{minipage}[b]{0.26\textwidth}\centering
\includegraphics[scale=0.22]{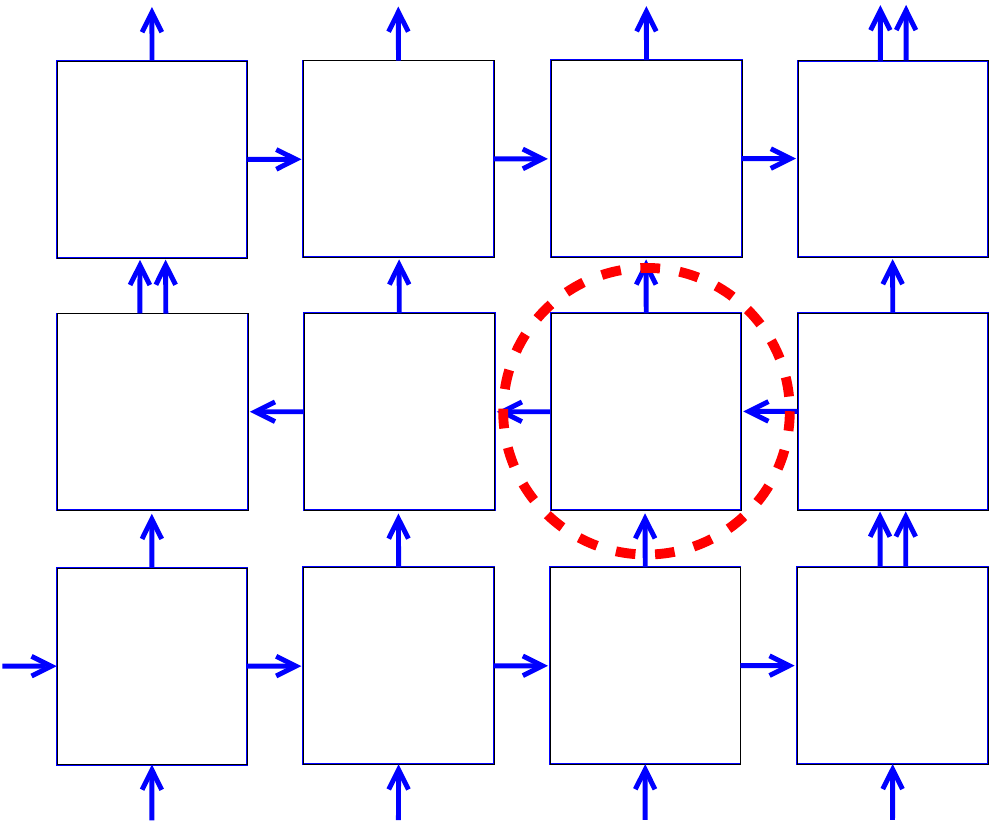}
\includegraphics[scale=0.46]{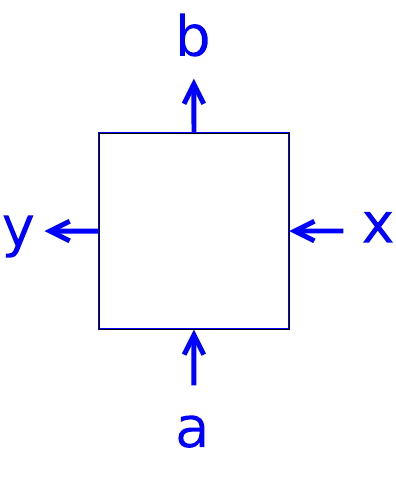}
\small{Direction West, West 1 (DWW1)}
\end{minipage}&
\includegraphics[scale=0.23]{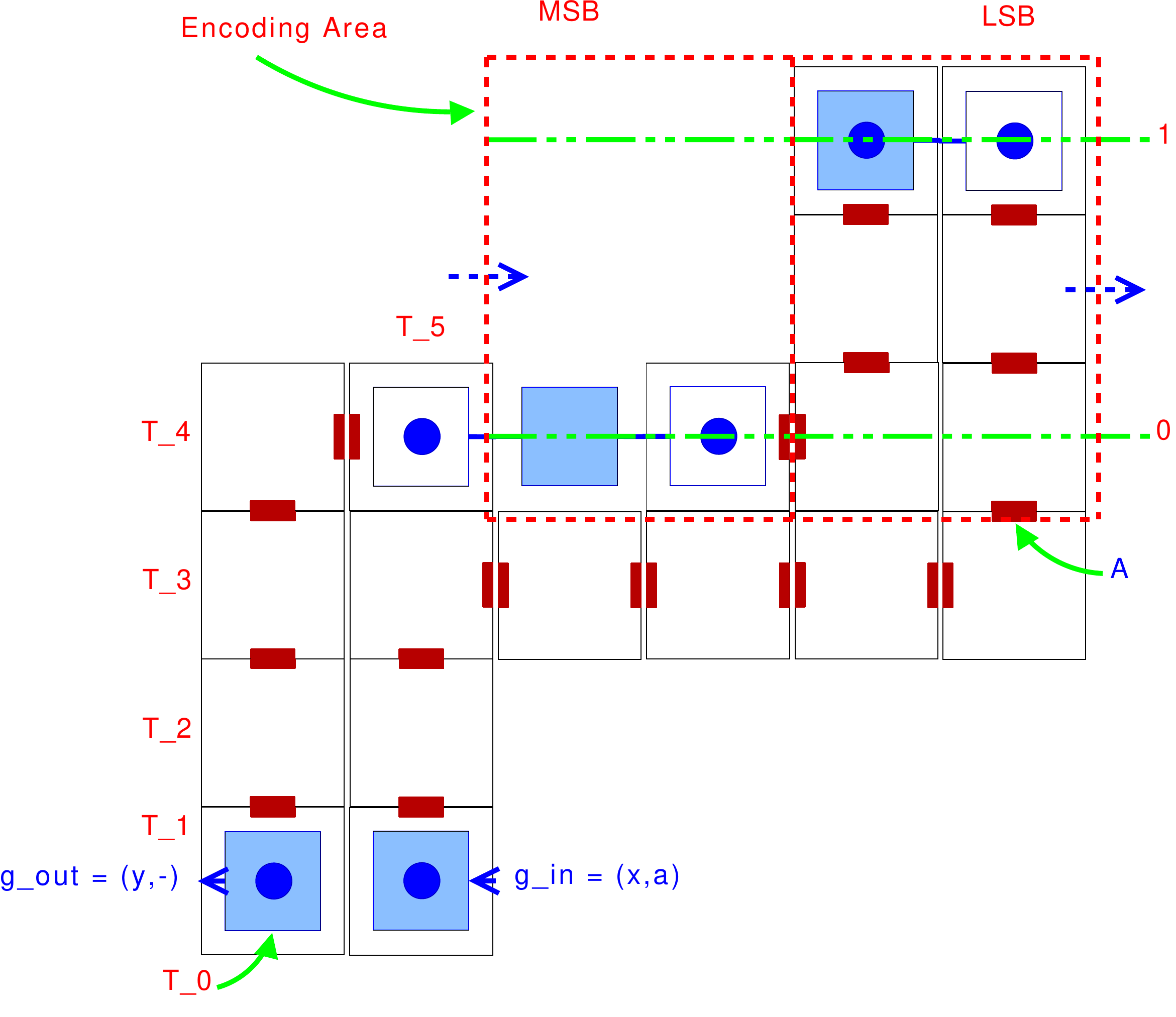}&
\includegraphics[scale=0.23]{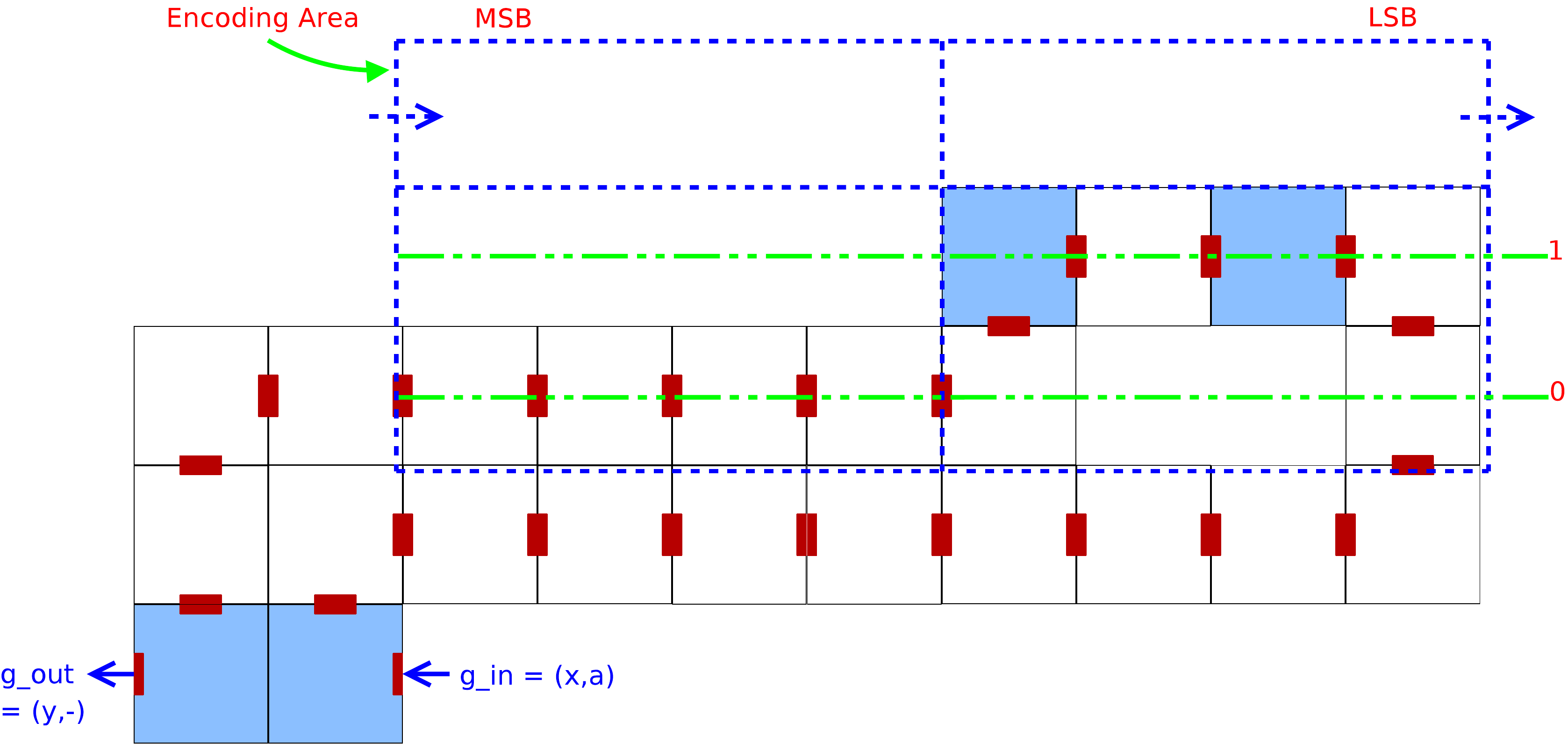}\\

\midrule
\begin{minipage}[b]{0.28\textwidth}\centering
\includegraphics[scale=0.22]{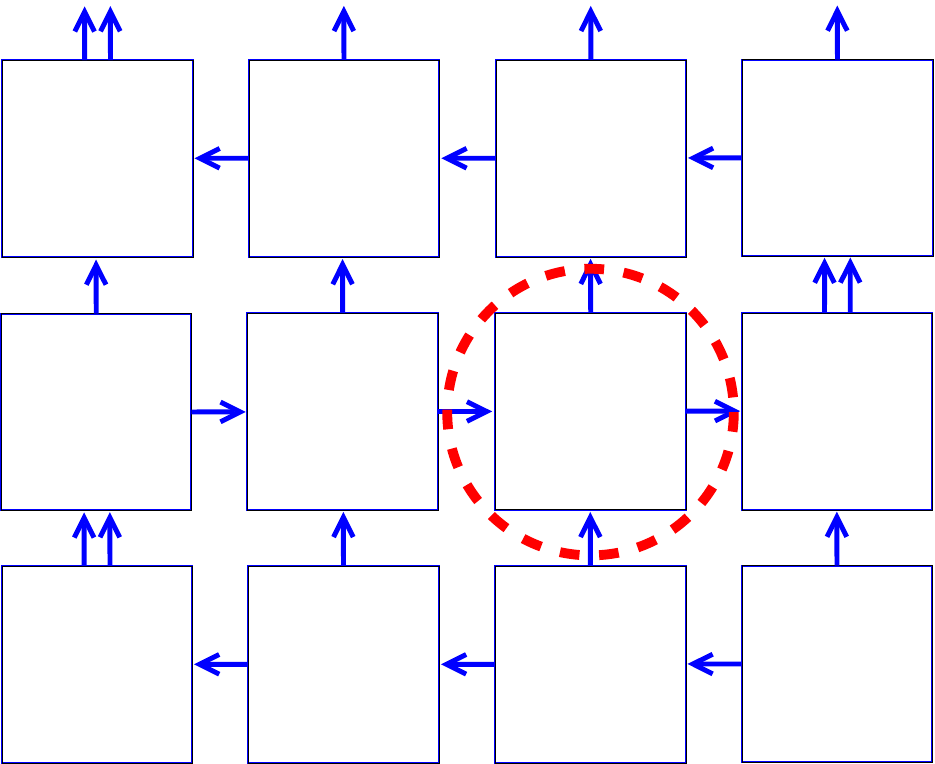}
\includegraphics[scale=0.46]{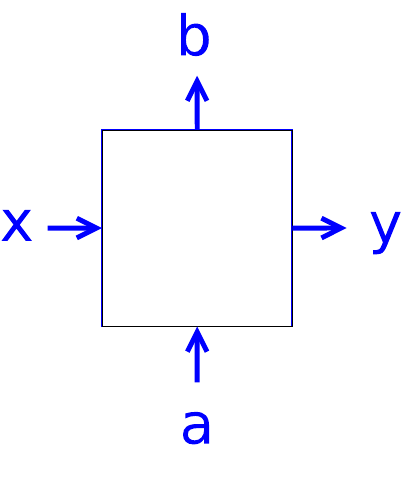}
\hspace{0.04\textwidth}
\small{Direction East, East 1 (DEE1)}
\end{minipage}&
\includegraphics[scale=0.23]{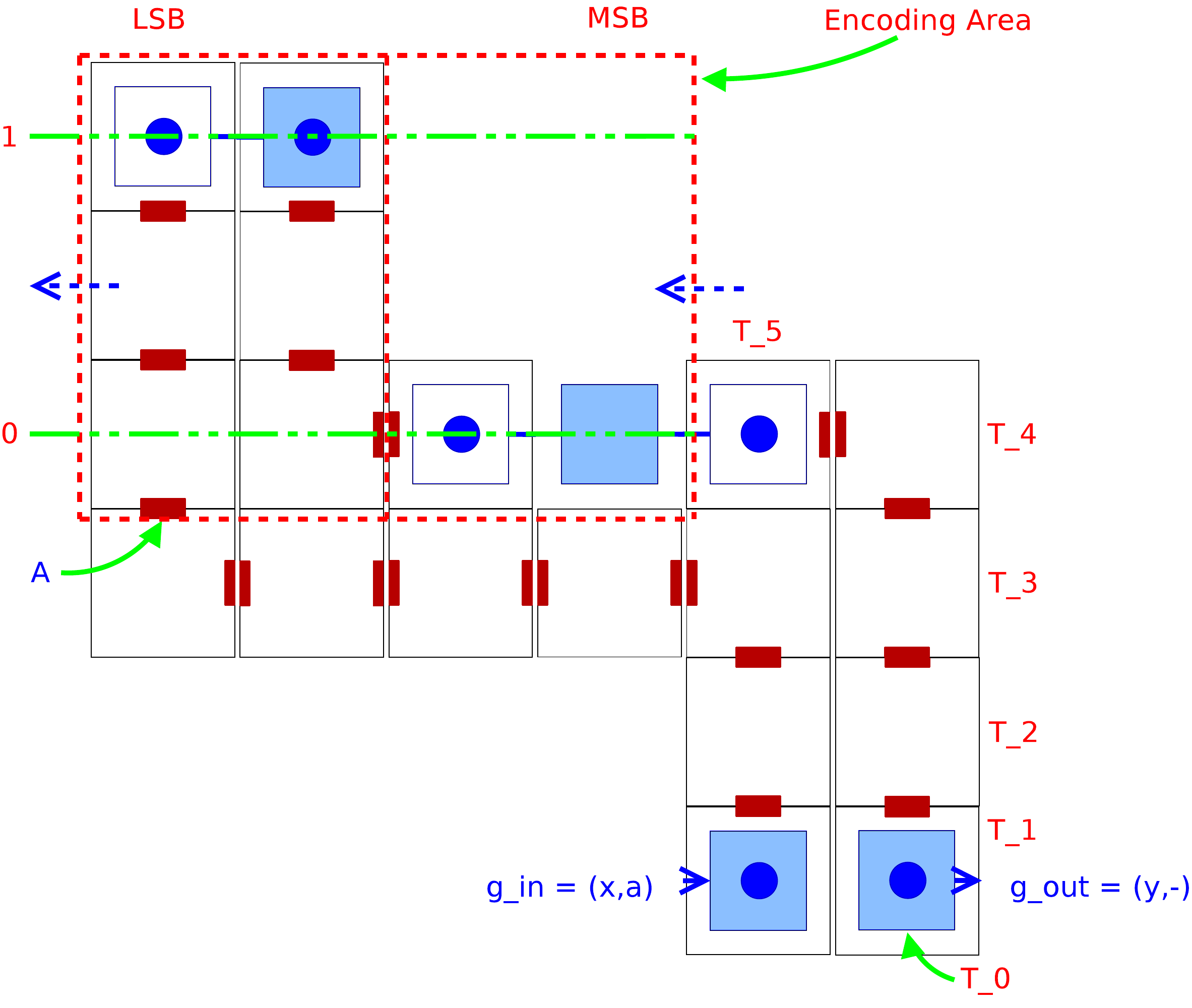}&
\includegraphics[scale=0.23]{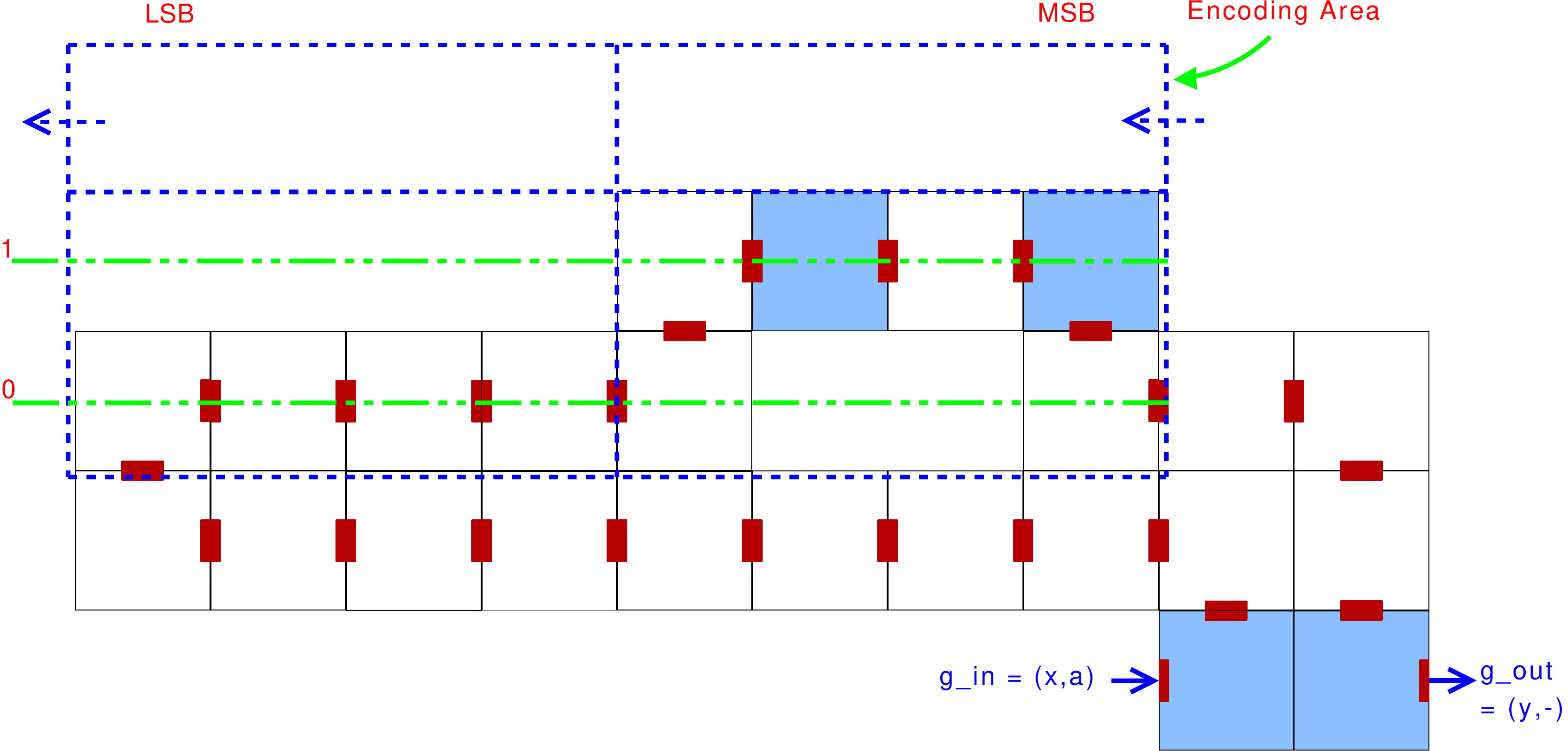}\\

\midrule
\begin{minipage}[b]{0.28\textwidth}\centering
\includegraphics[scale=0.22]{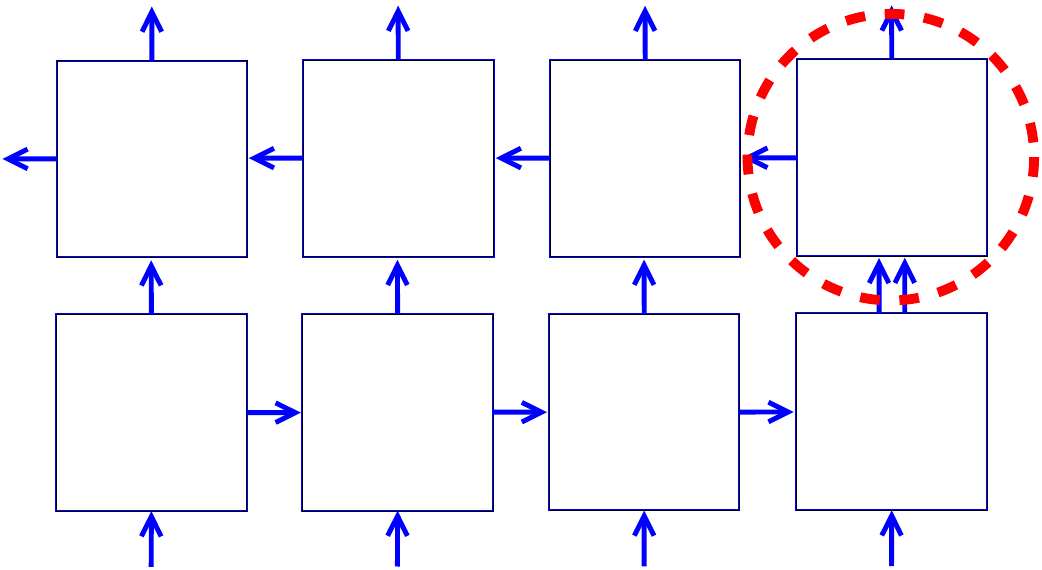}
\includegraphics[scale=0.46]{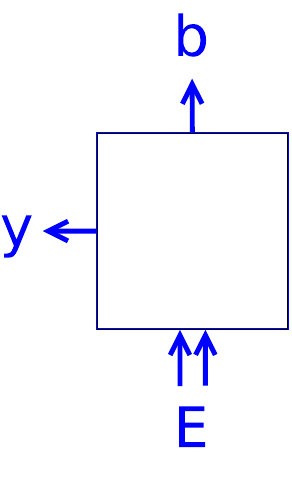}
\hspace{0.04\textwidth}
\small{Turn at East, West 1 (TEW1)}
\end{minipage}&
\includegraphics[scale=0.23]{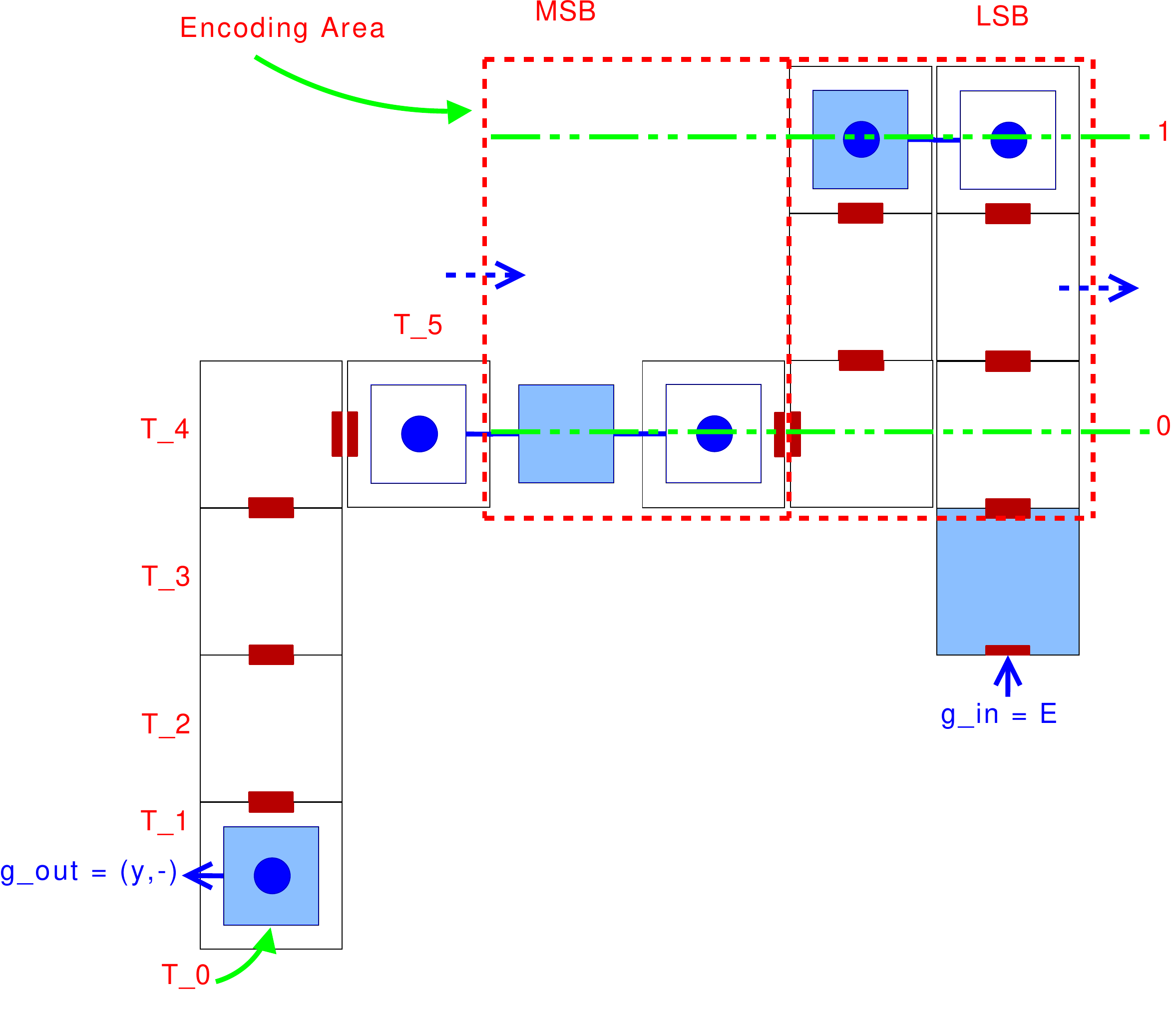}&
\includegraphics[scale=0.23]{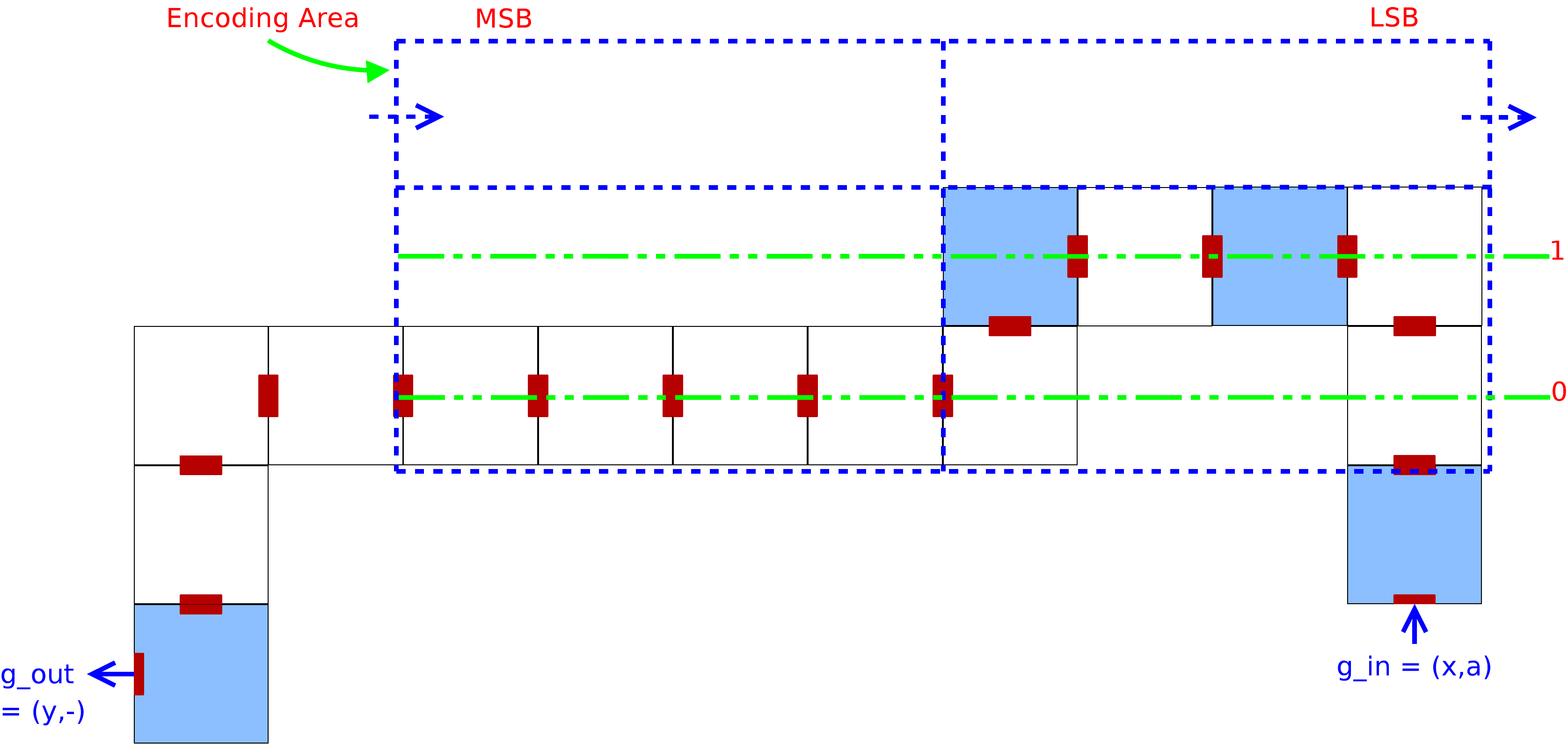}\\

\midrule
\begin{minipage}[b]{0.28\textwidth}\centering
\includegraphics[scale=0.22]{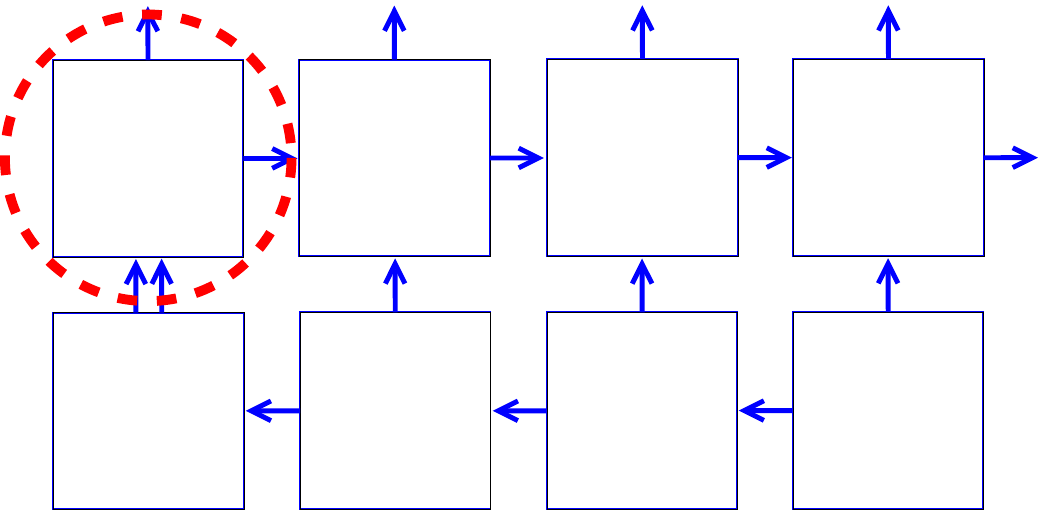}
\includegraphics[scale=0.46]{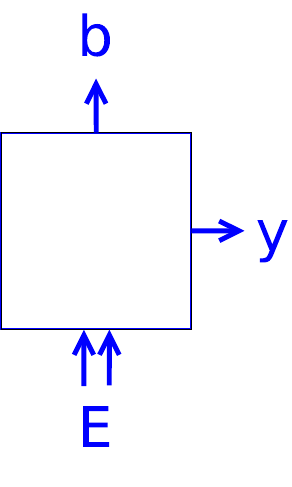}
\hspace{0.04\textwidth}
\small{Turn at West, East 1 (TWE1)}
\end{minipage}&
\includegraphics[scale=0.23]{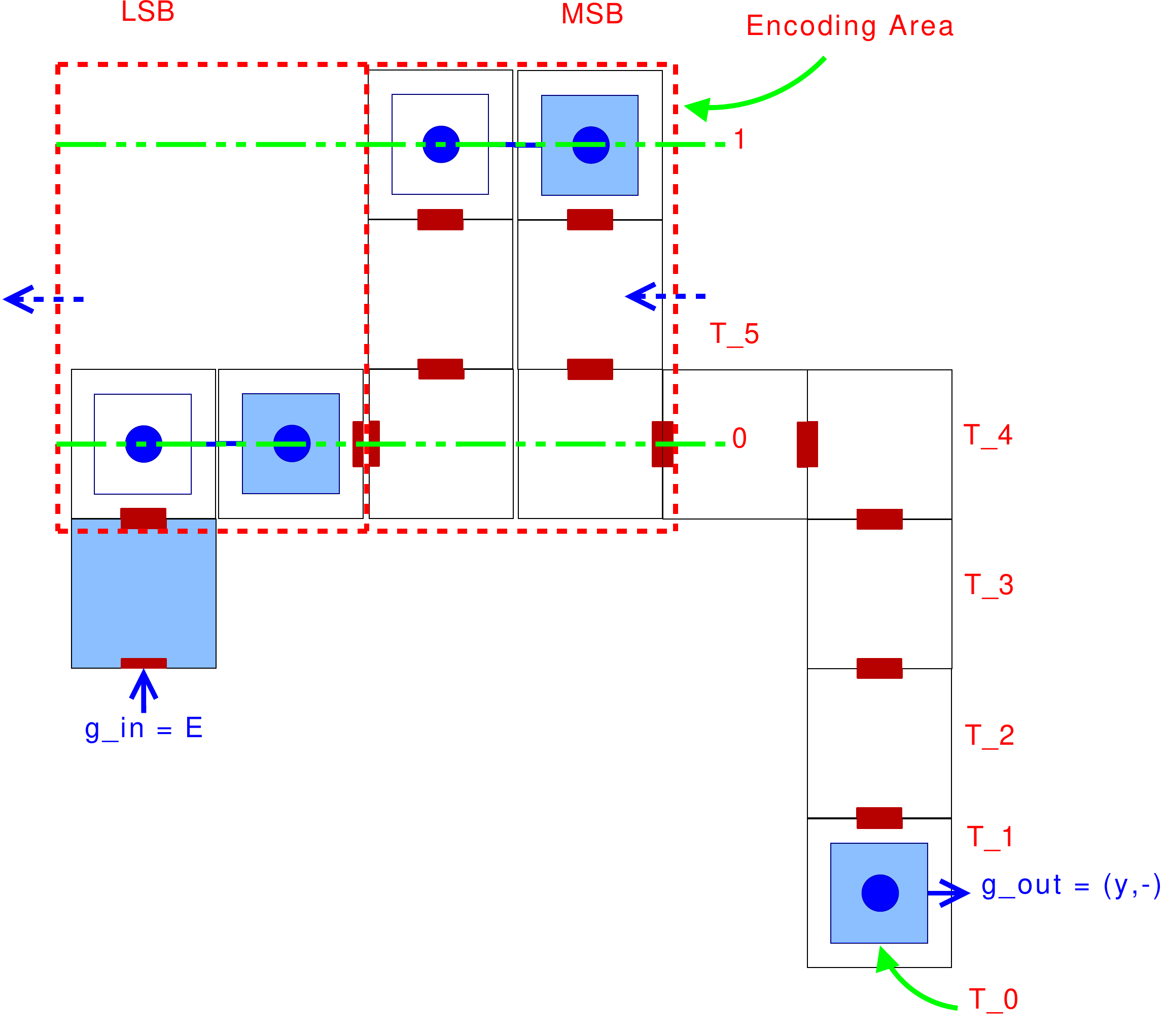}&
\includegraphics[scale=0.23]{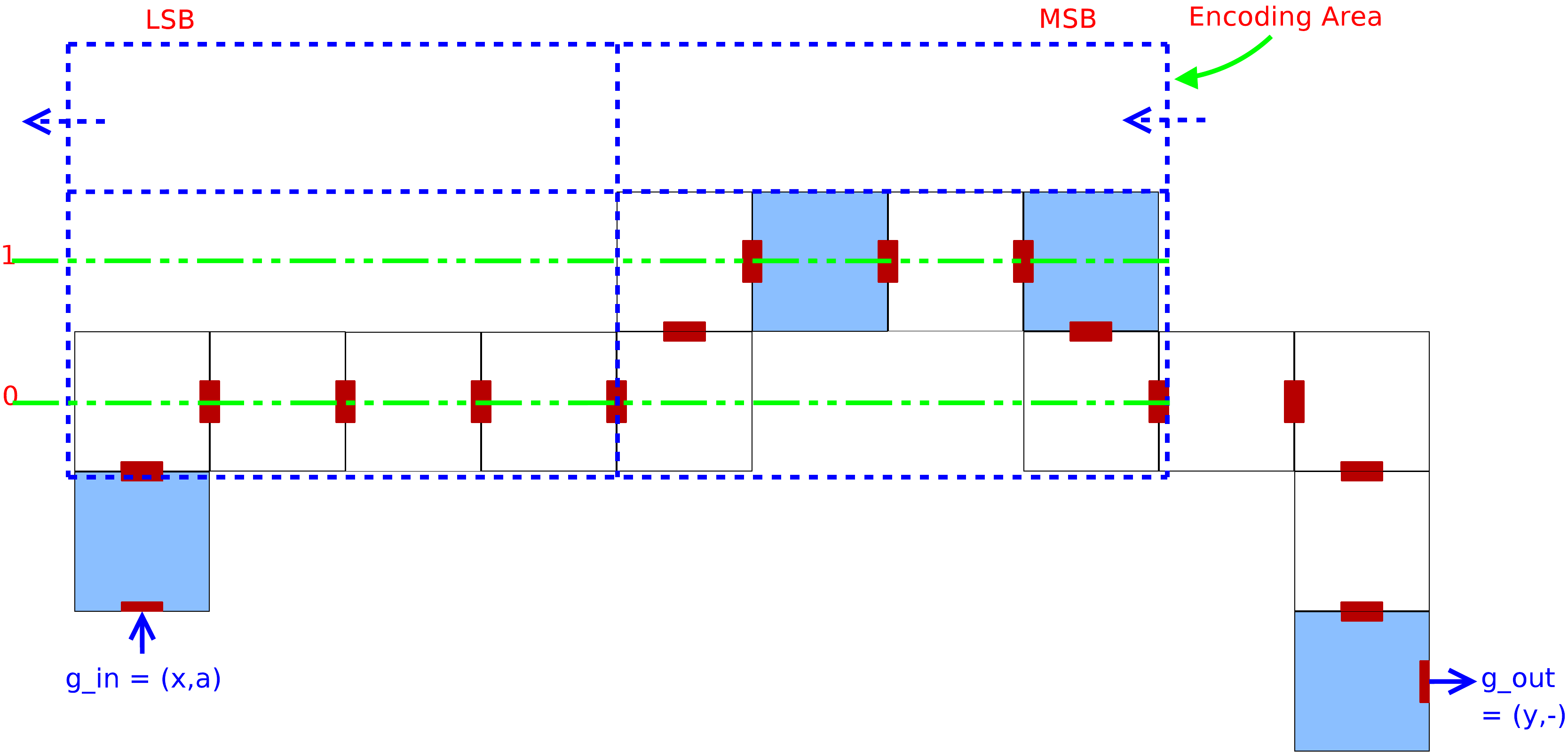}\\

\midrule
\begin{minipage}[b]{0.28\textwidth}\centering
\includegraphics[scale=0.22]{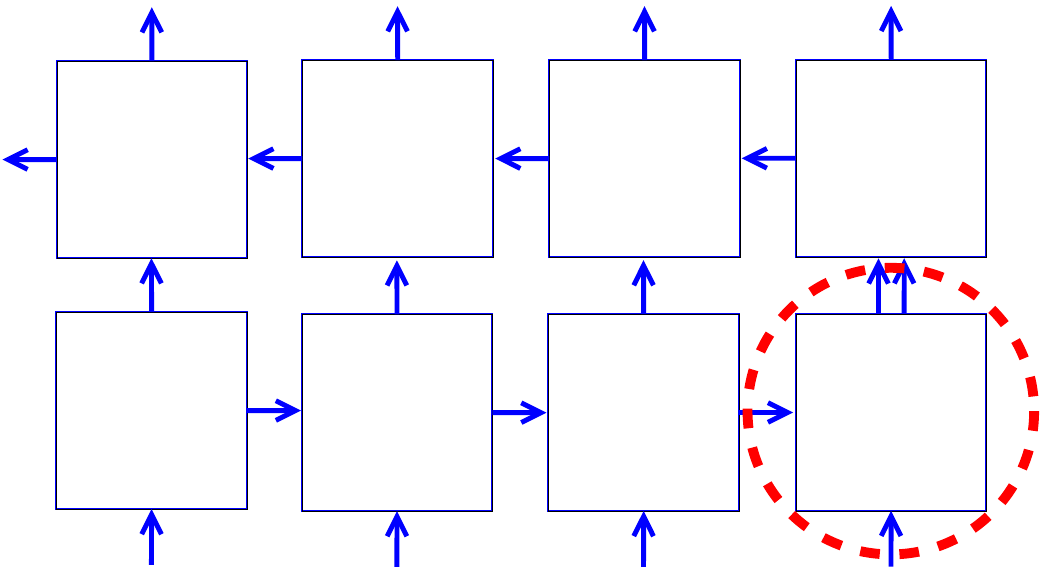}
\includegraphics[scale=0.46]{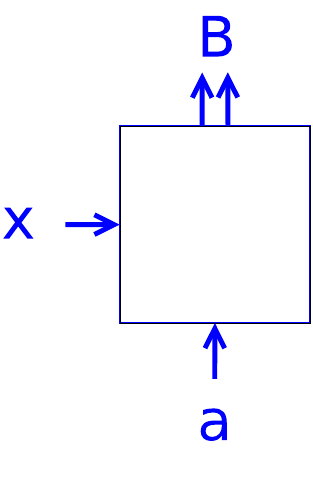}
\hspace{0.04\textwidth}
\small{Direction East, North 2 (DEN2)}
\end{minipage}&
\includegraphics[scale=0.23]{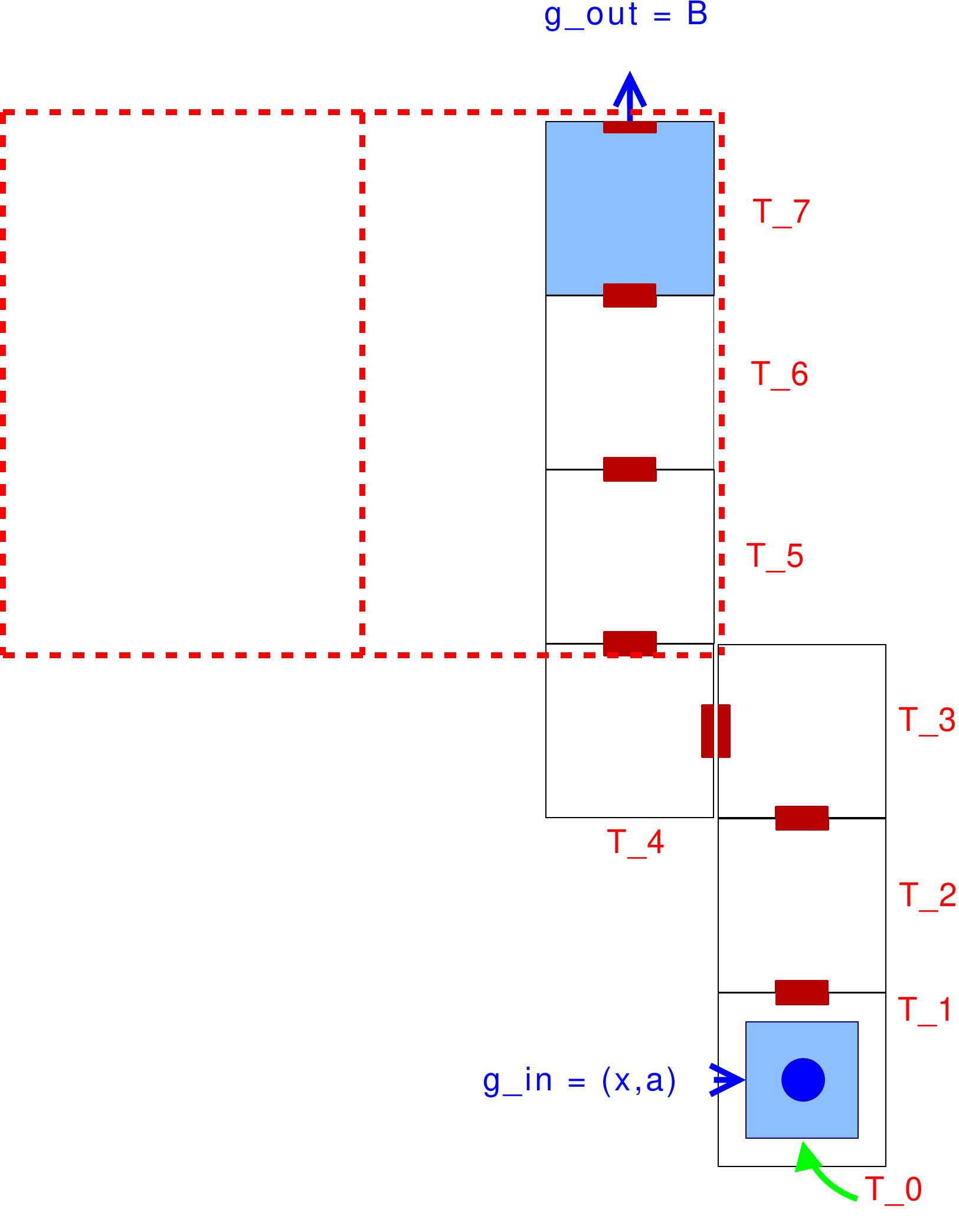}&
\includegraphics[scale=0.23]{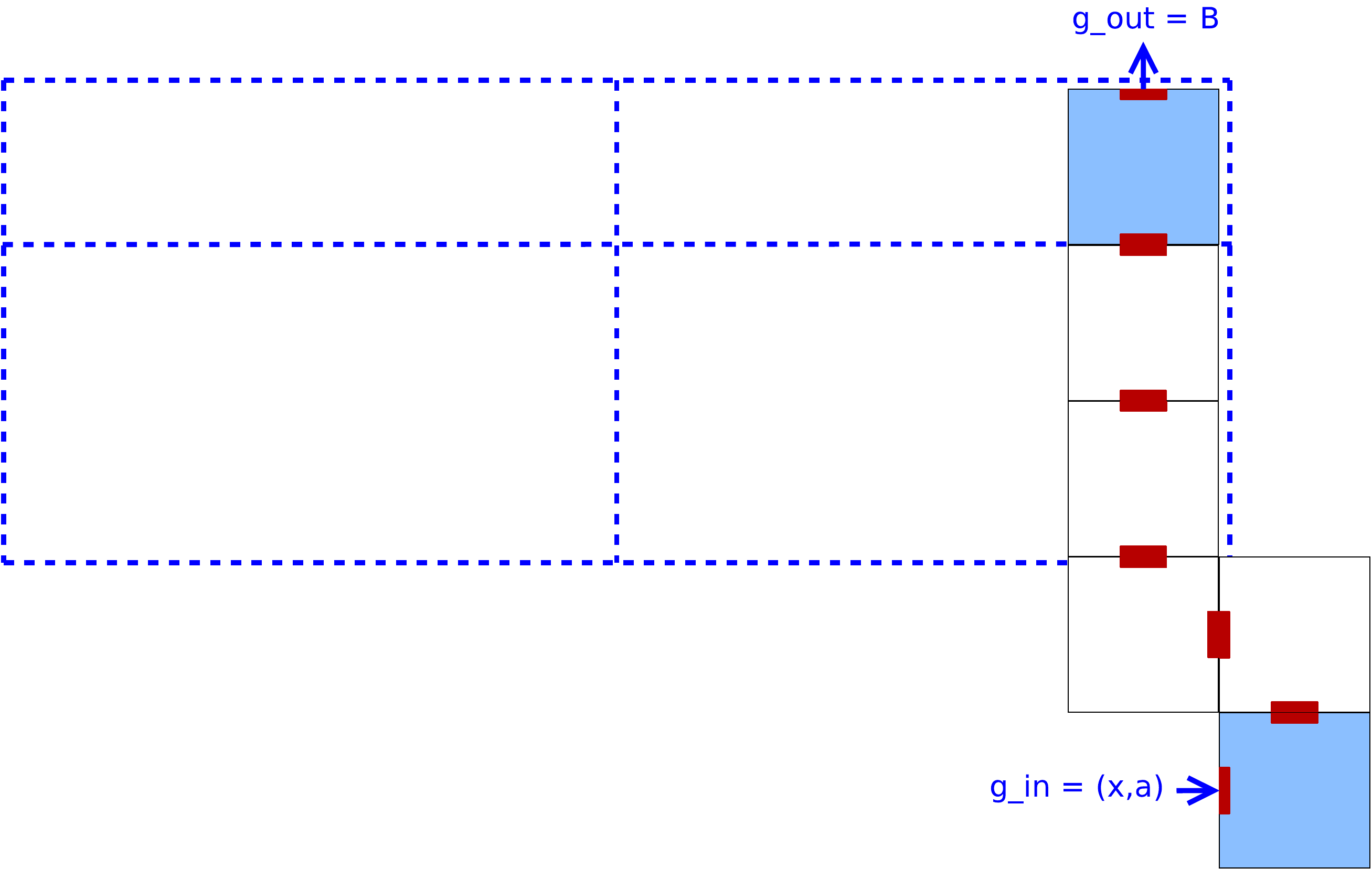}\\

\midrule
\begin{minipage}[b]{0.28\textwidth}\centering
\includegraphics[scale=0.22]{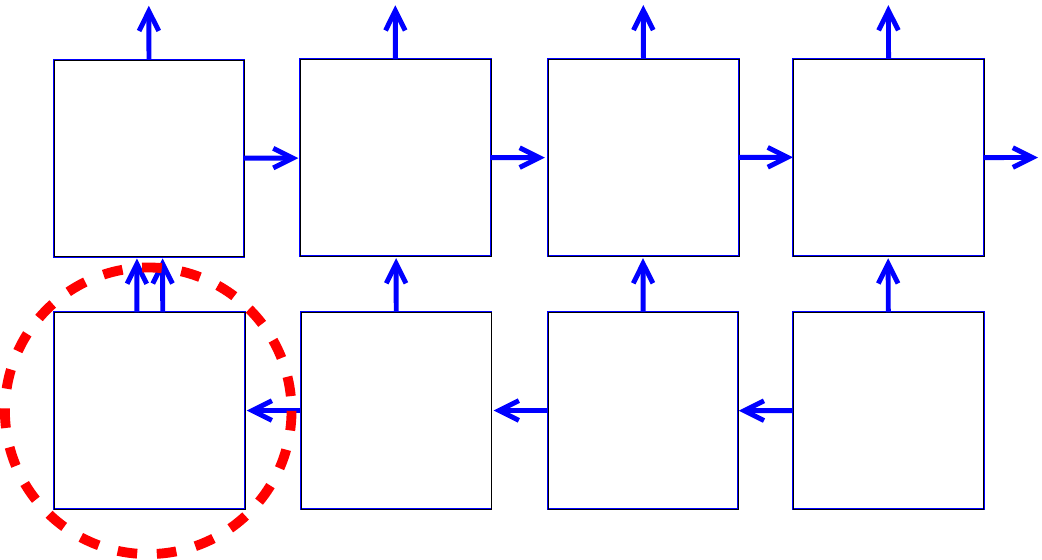}
\includegraphics[scale=0.46]{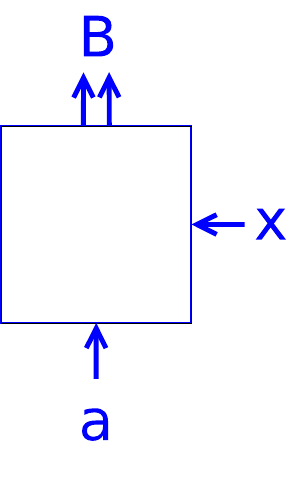}
\hspace{0.04\textwidth}
\small{Direction West, North 2 (DWN2)}
\end{minipage}&
\includegraphics[scale=0.23]{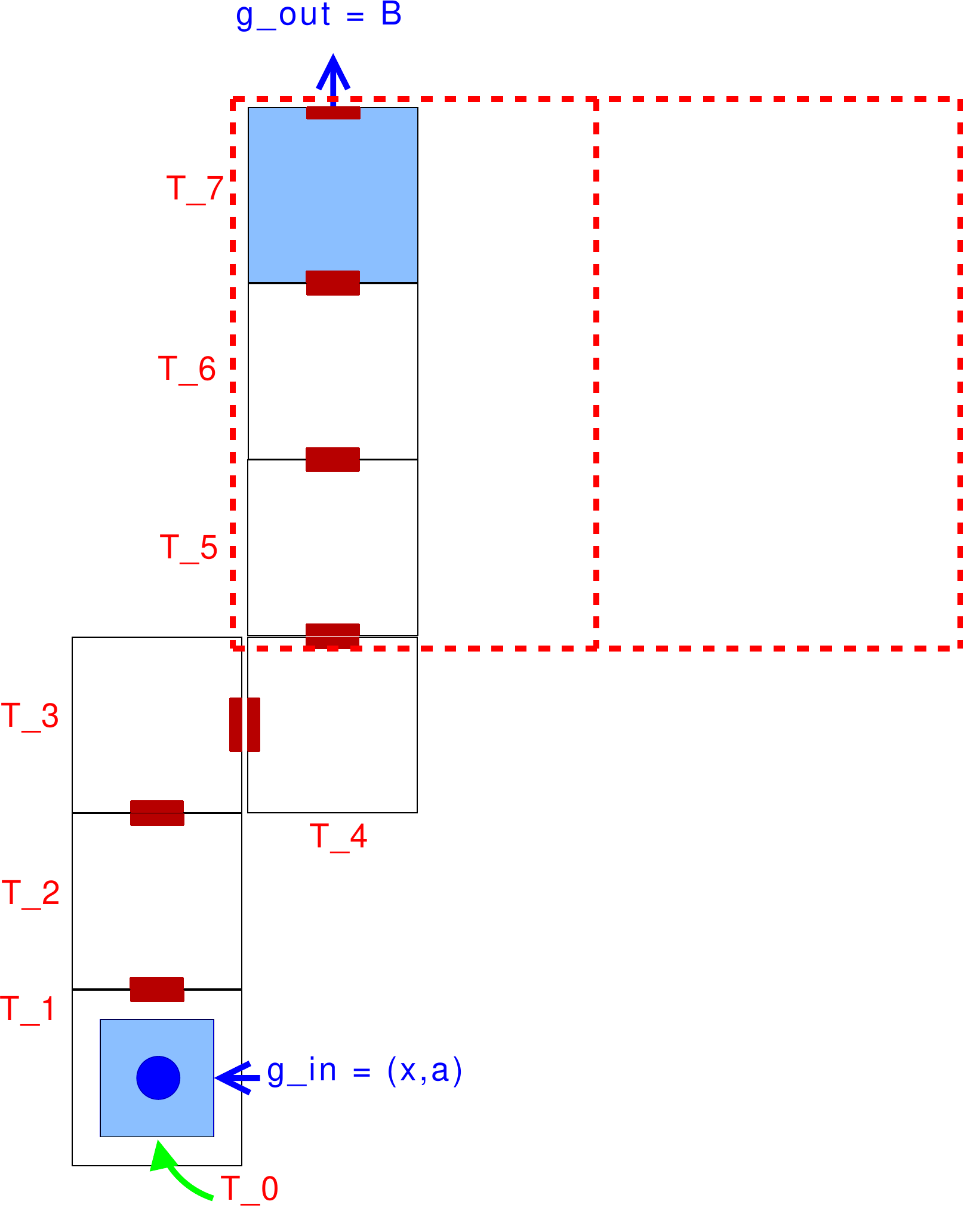}&
\includegraphics[scale=0.23]{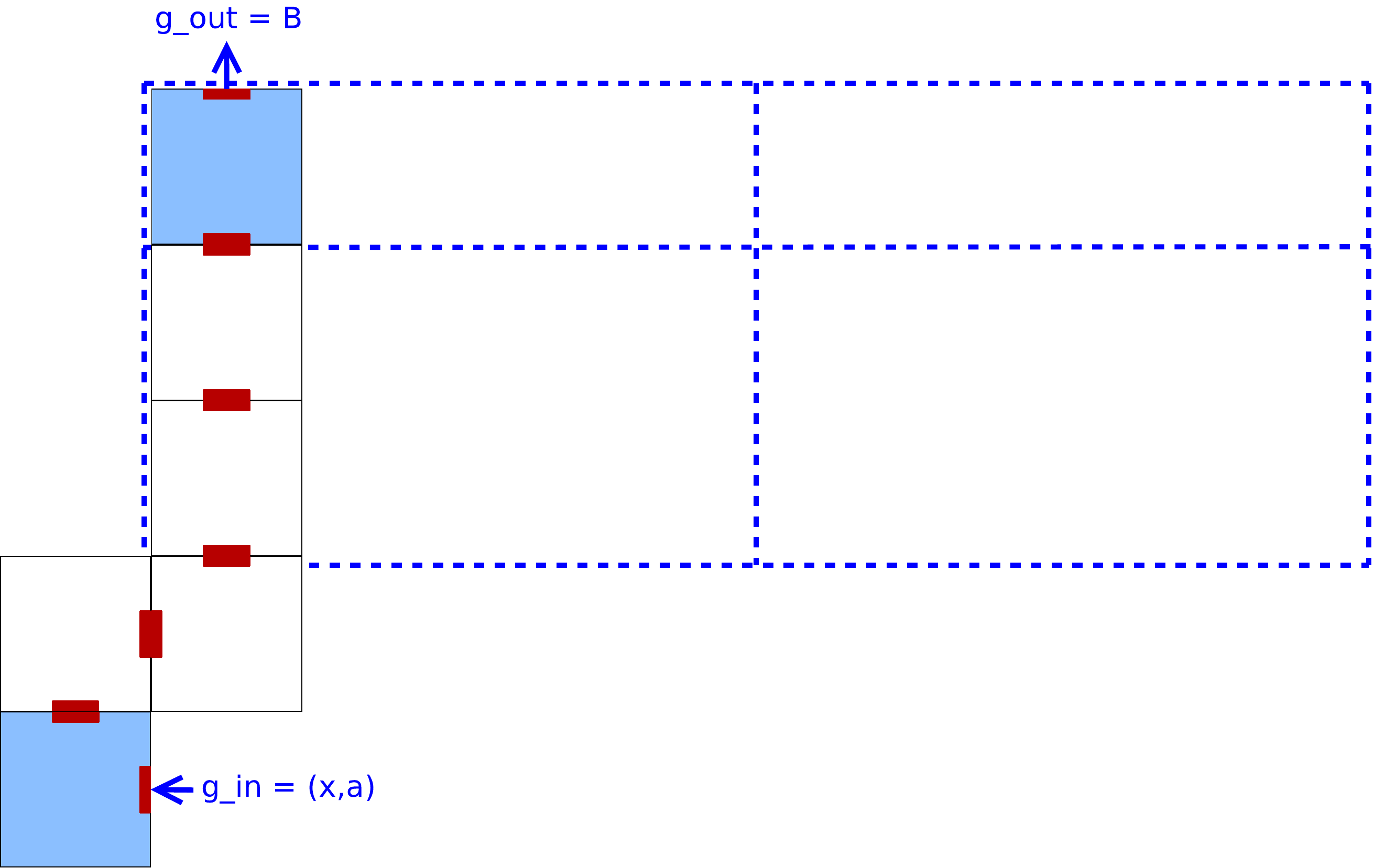}\\

\midrule
\begin{minipage}[b]{0.28\textwidth}\centering
\includegraphics[scale=0.22]{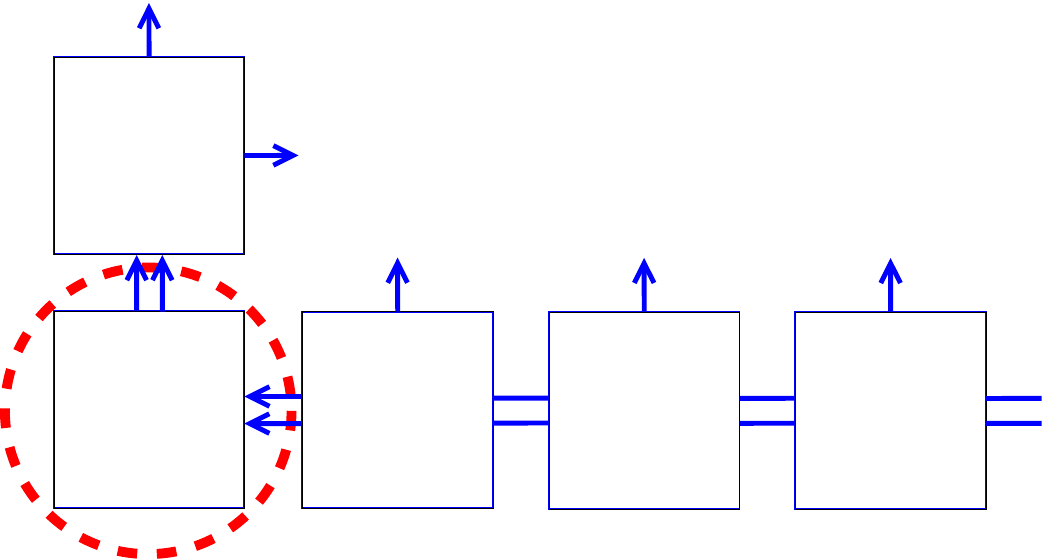}
\includegraphics[scale=0.46]{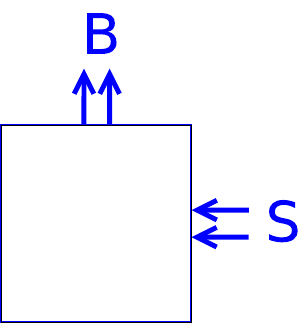}
\hspace{0.04\textwidth}
\small{Side West, North 2 (SWN2)}
\end{minipage}&
\includegraphics[scale=0.23]{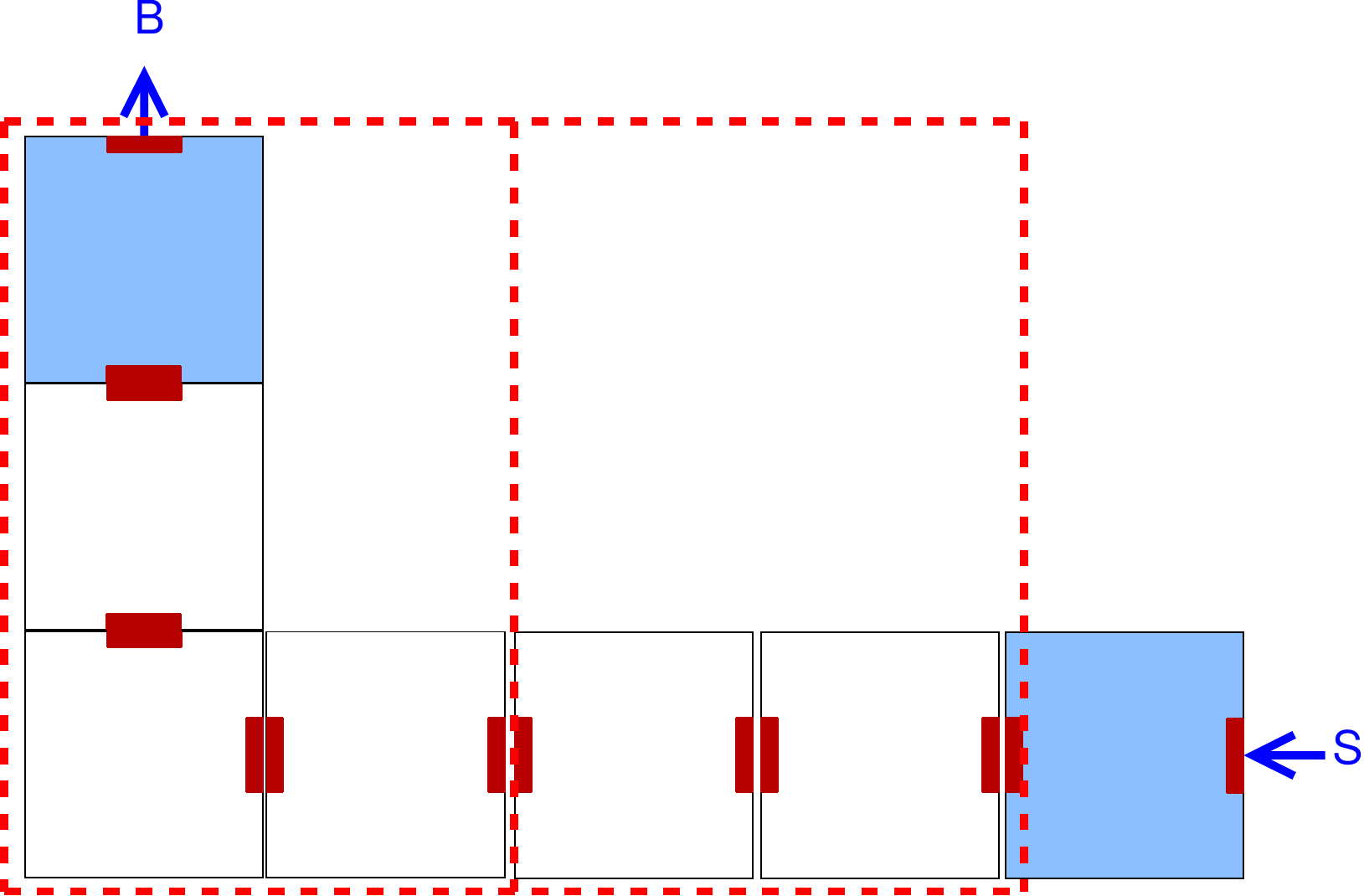}&
\includegraphics[scale=0.23]{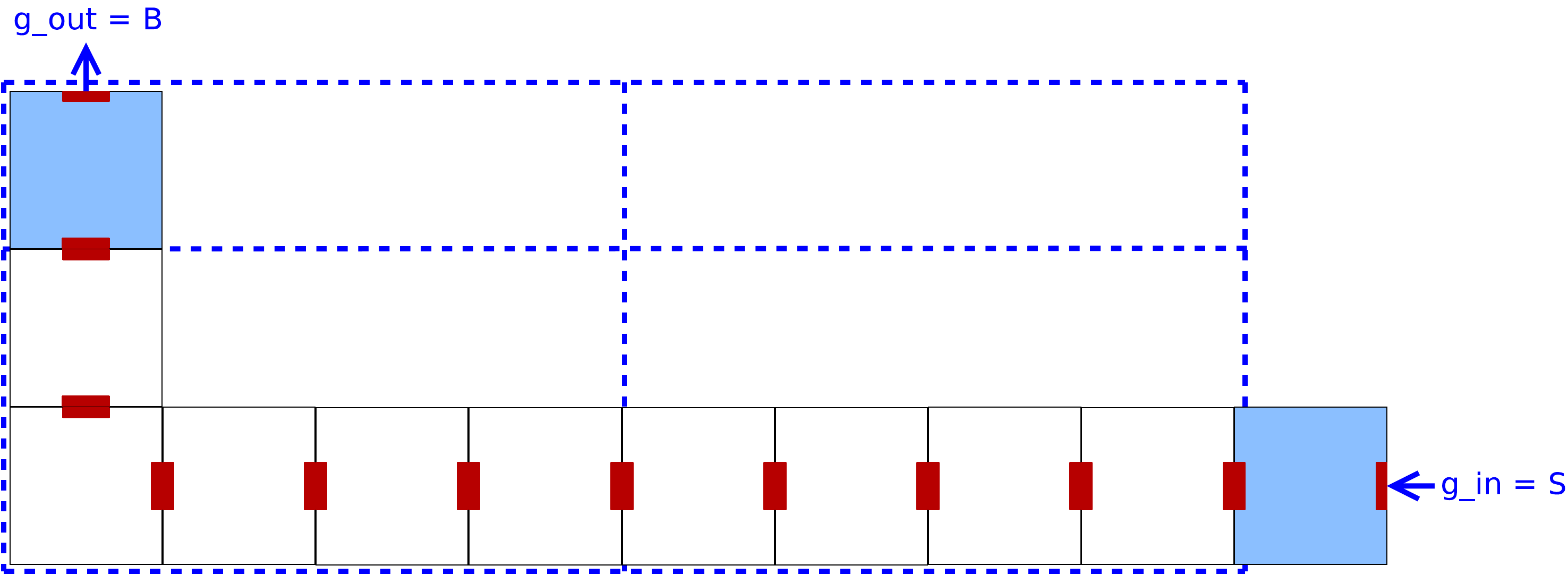}\\

\midrule
\begin{minipage}[b]{0.28\textwidth}\centering
\includegraphics[scale=0.22]{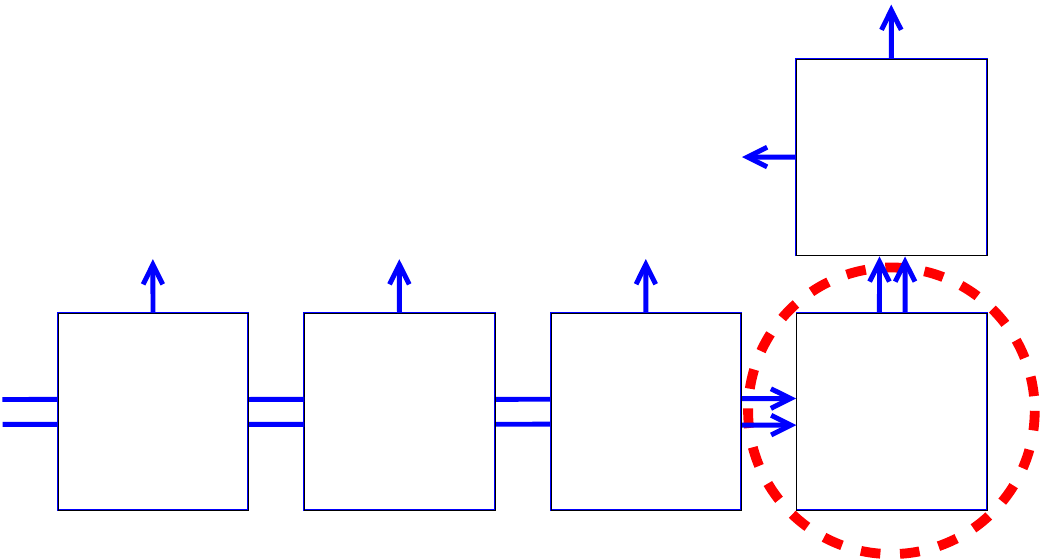}
\includegraphics[scale=0.46]{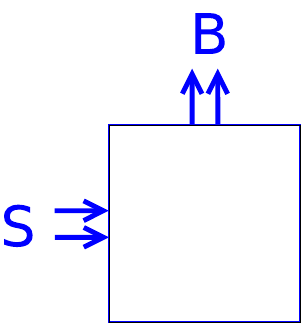}
\hspace{0.04\textwidth}
\small{Side East, North 2 (SEN2)}
\end{minipage}&
\includegraphics[scale=0.23]{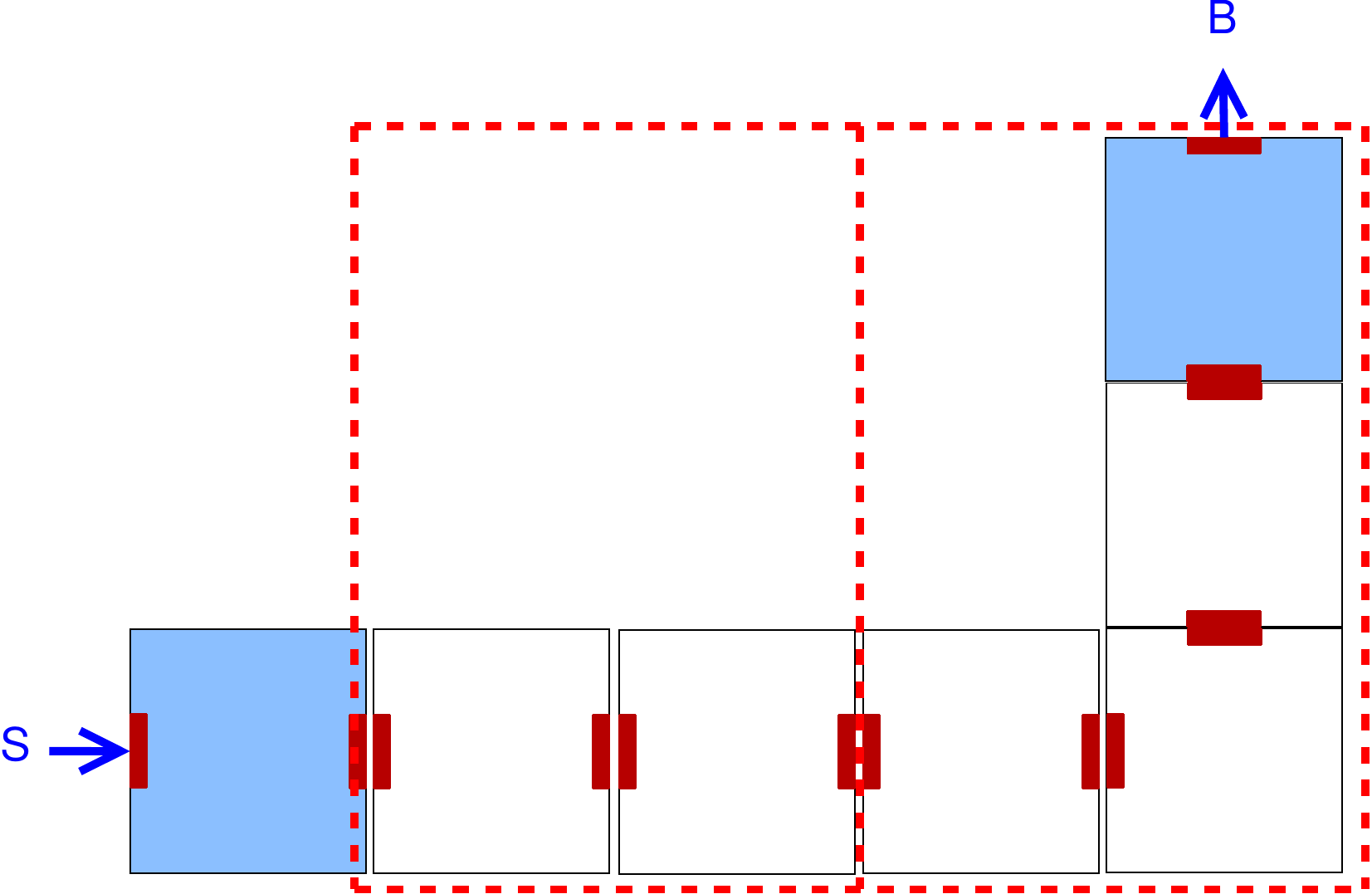}&
\includegraphics[scale=0.23]{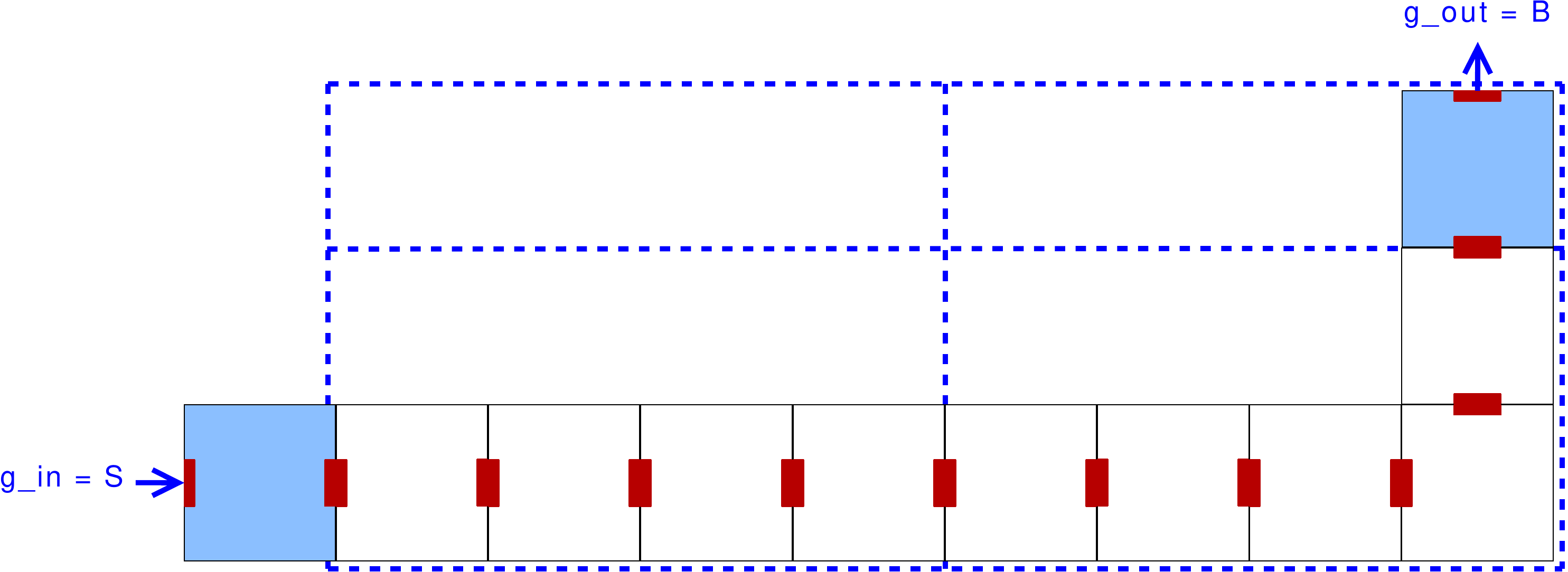}\\

\midrule
\begin{minipage}[b]{0.28\textwidth}\centering
\includegraphics[scale=0.22]{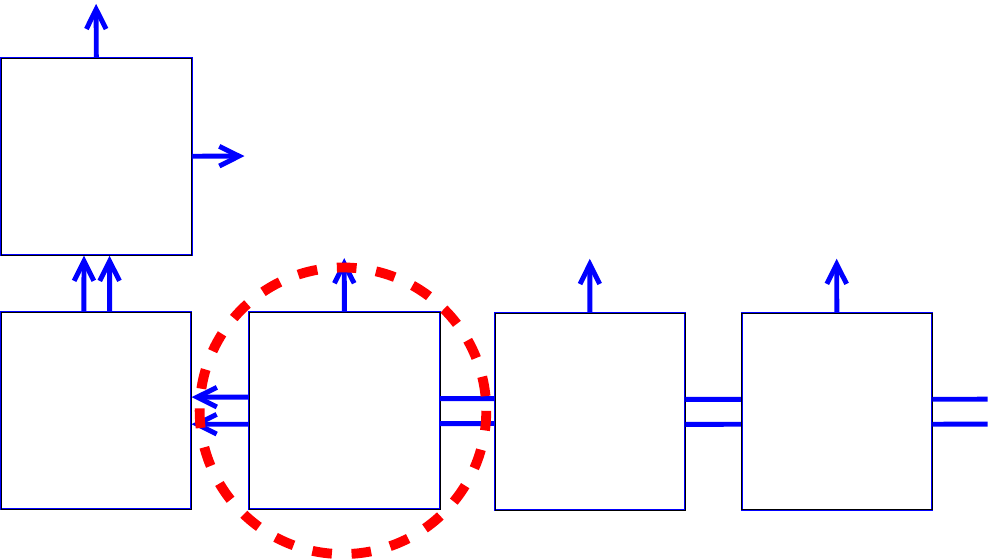}
\includegraphics[scale=0.46]{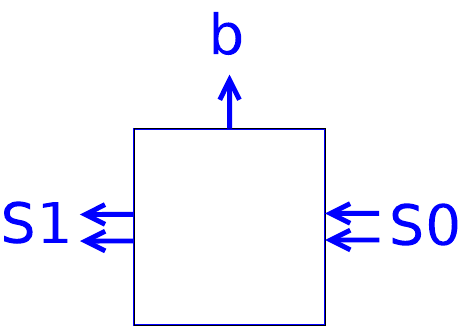}
\hspace{0.04\textwidth}
\small{Fixed Location, Direction West (FW)}
\end{minipage}&
\includegraphics[scale=0.23]{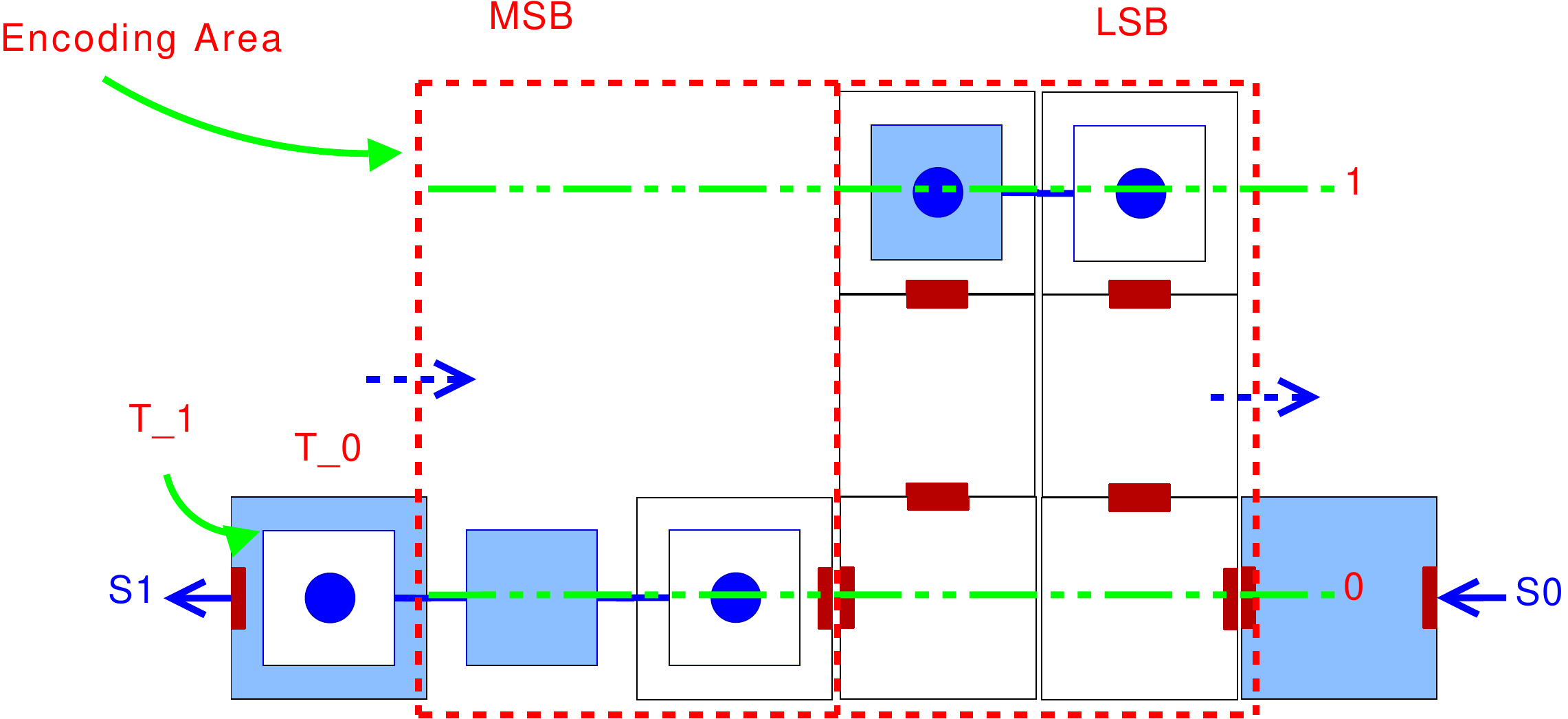}&
\includegraphics[scale=0.23]{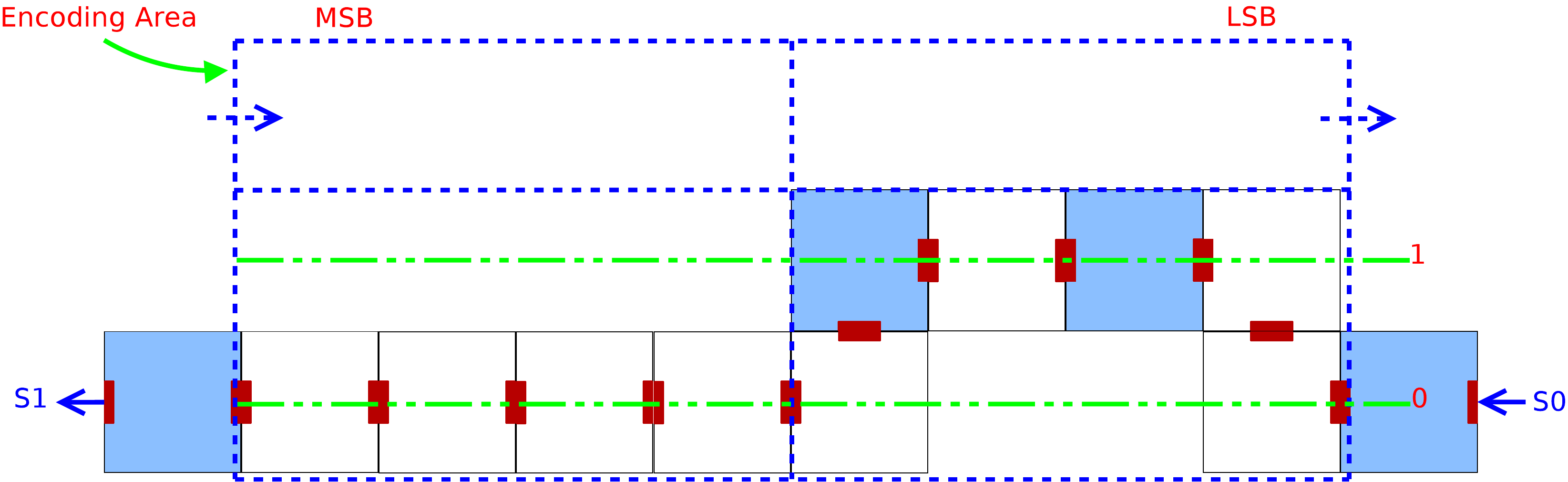}\\

\midrule
\begin{minipage}[b]{0.28\textwidth}\centering
\includegraphics[scale=0.22]{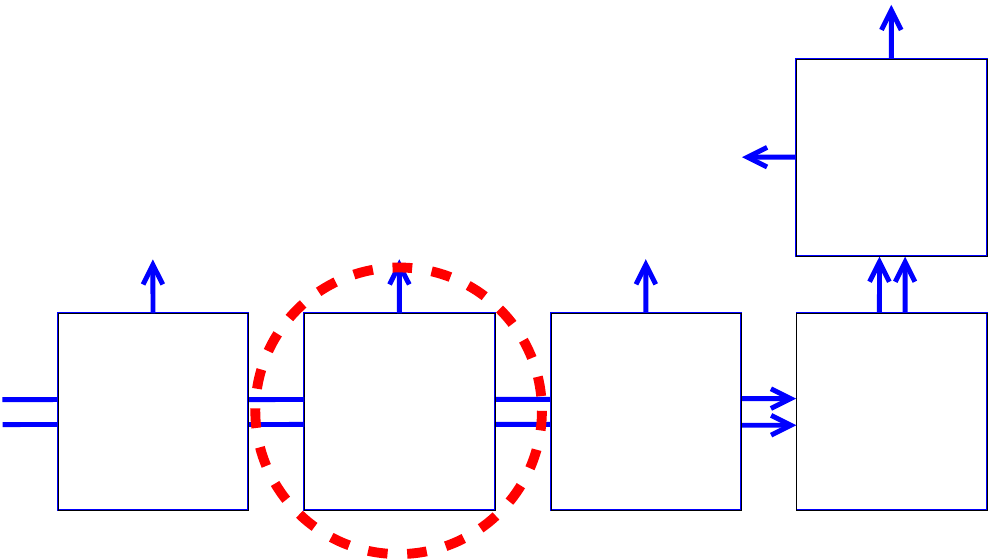}
\includegraphics[scale=0.46]{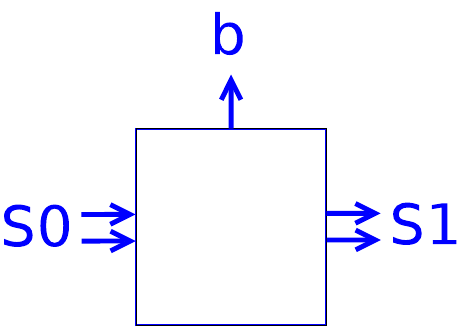}
\hspace{0.04\textwidth}
\small{Fixed Location, Direction East (FE)}
\end{minipage}&
\includegraphics[scale=0.23]{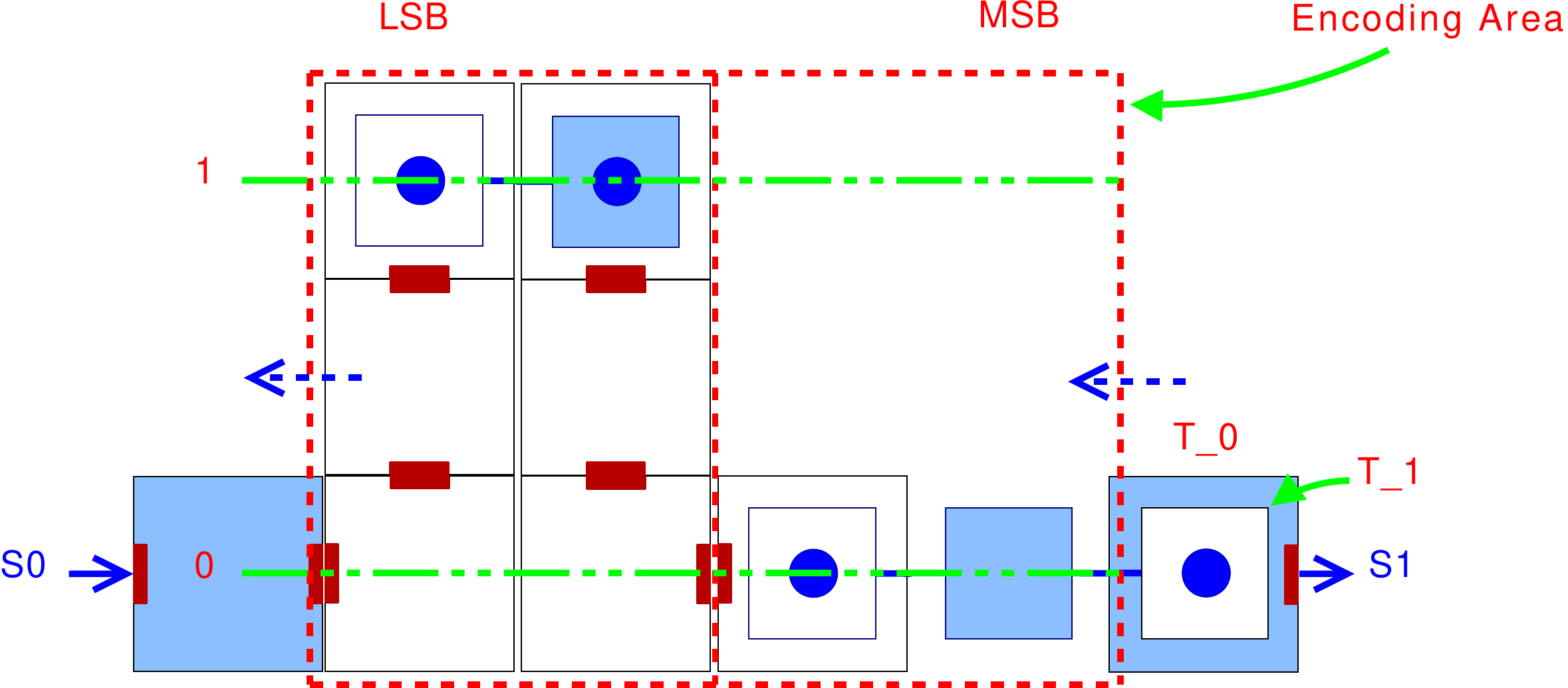}&
\includegraphics[scale=0.23]{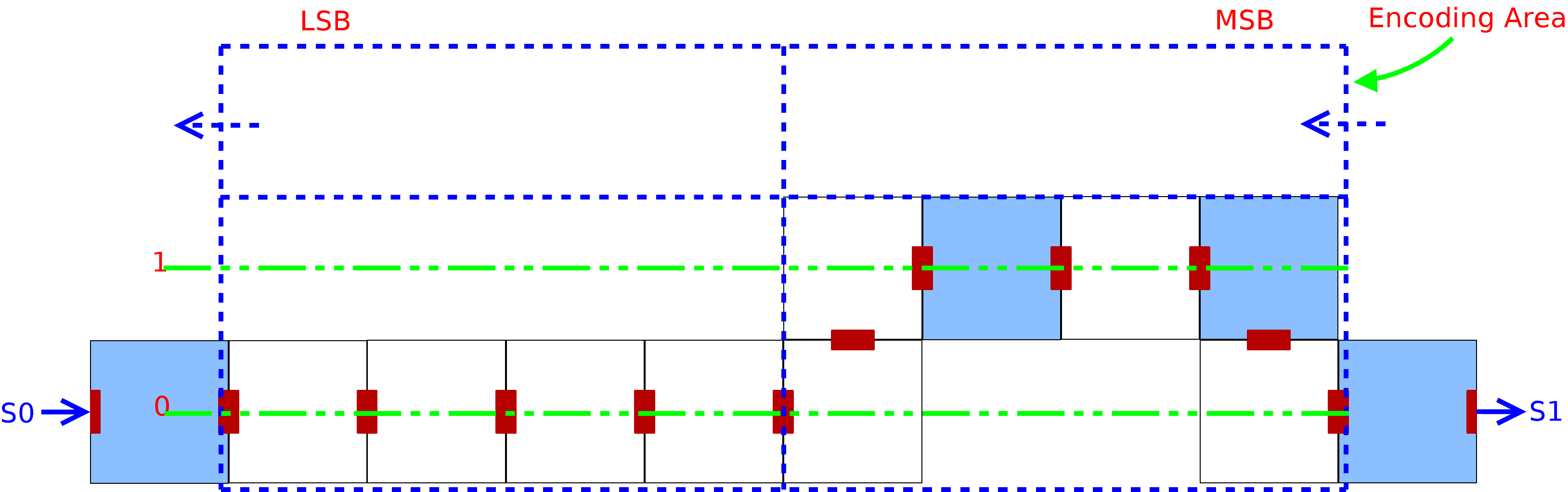}\\

\midrule
\begin{minipage}[b]{0.28\textwidth}\centering
\includegraphics[scale=0.22]{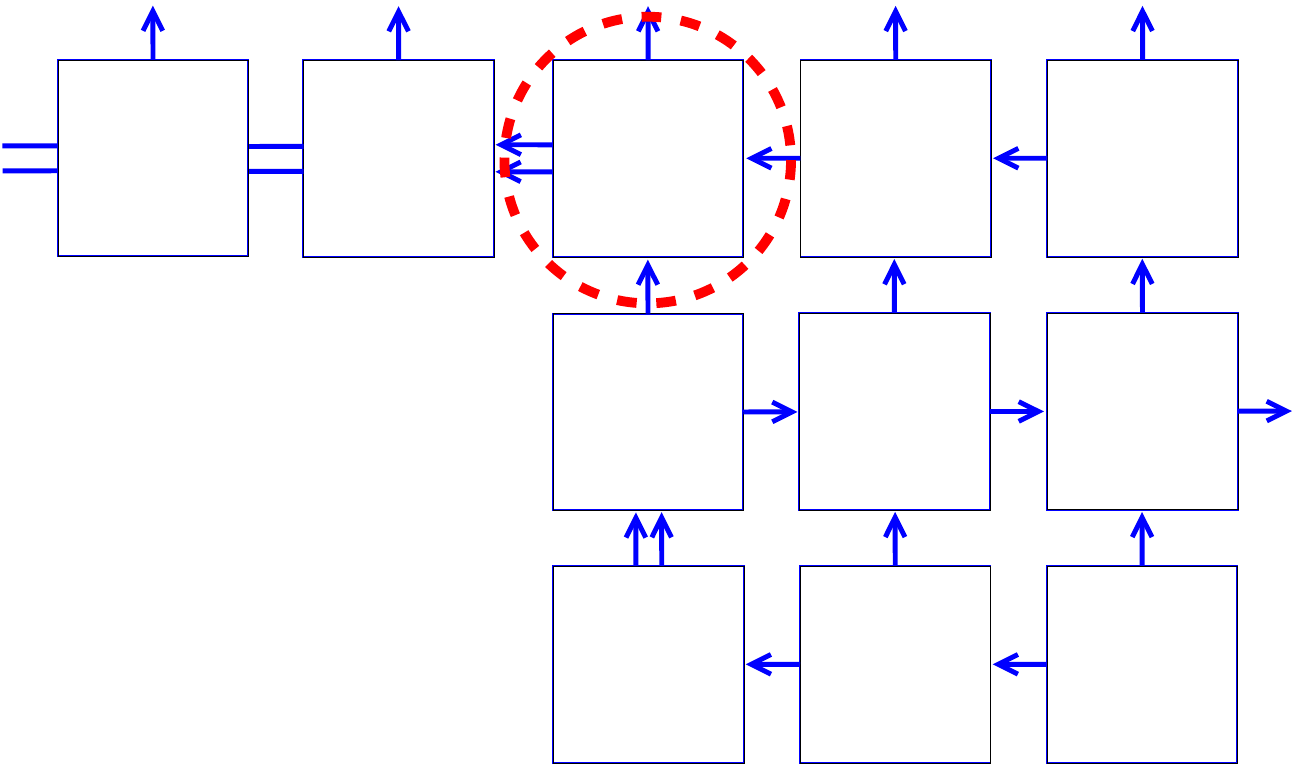}
\includegraphics[scale=0.46]{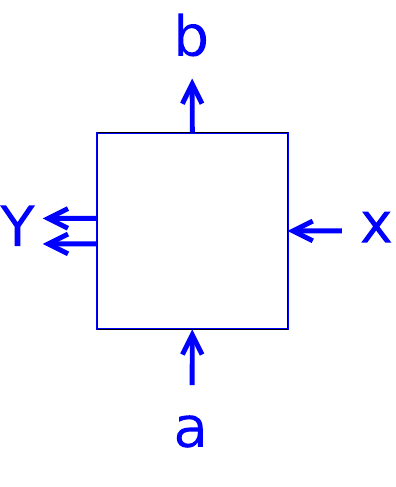}
\hspace{0.04\textwidth}
\small{Direction West, West 2 (DWW2)}
\end{minipage}&
\includegraphics[scale=0.23]{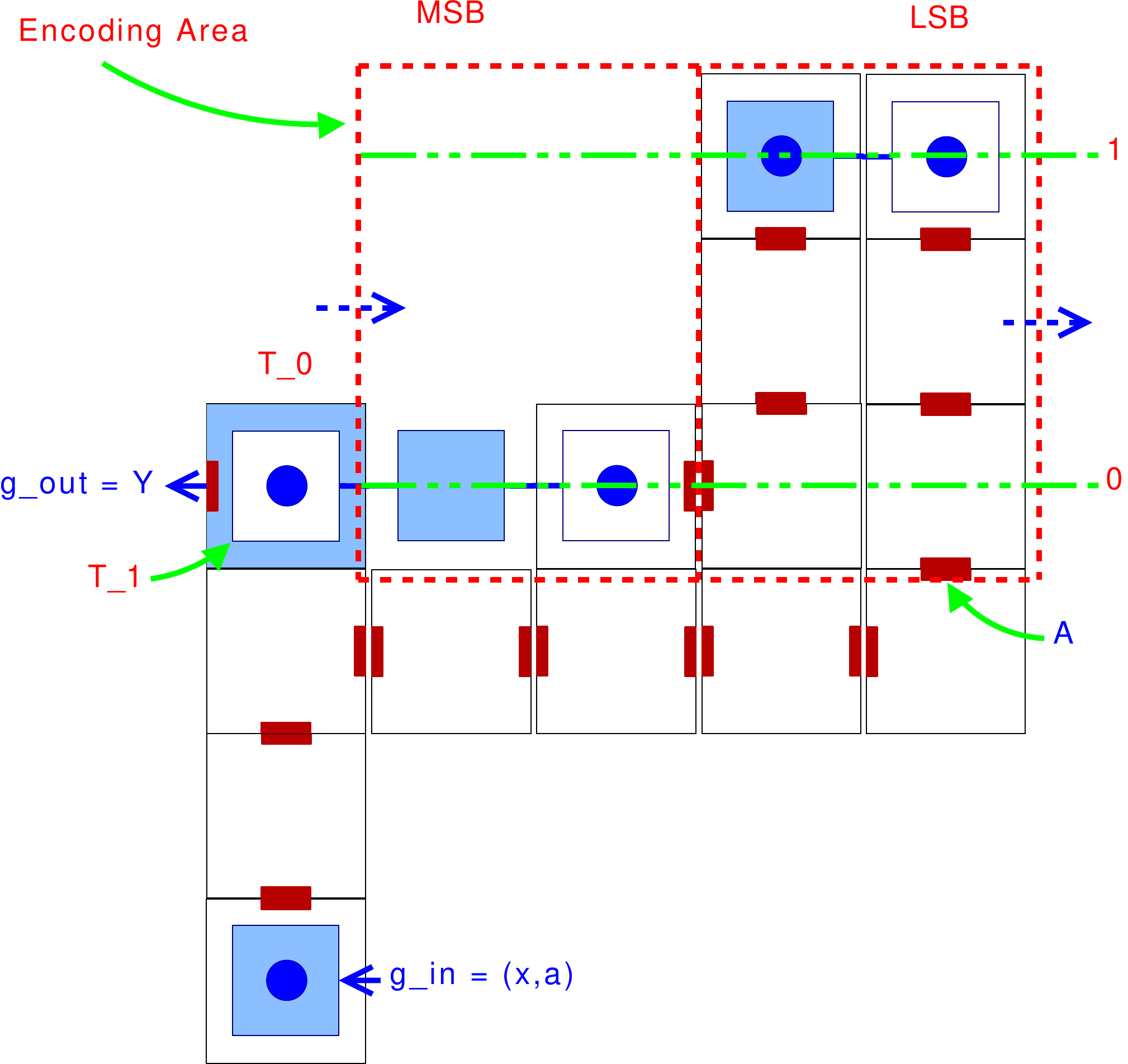}&
\includegraphics[scale=0.23]{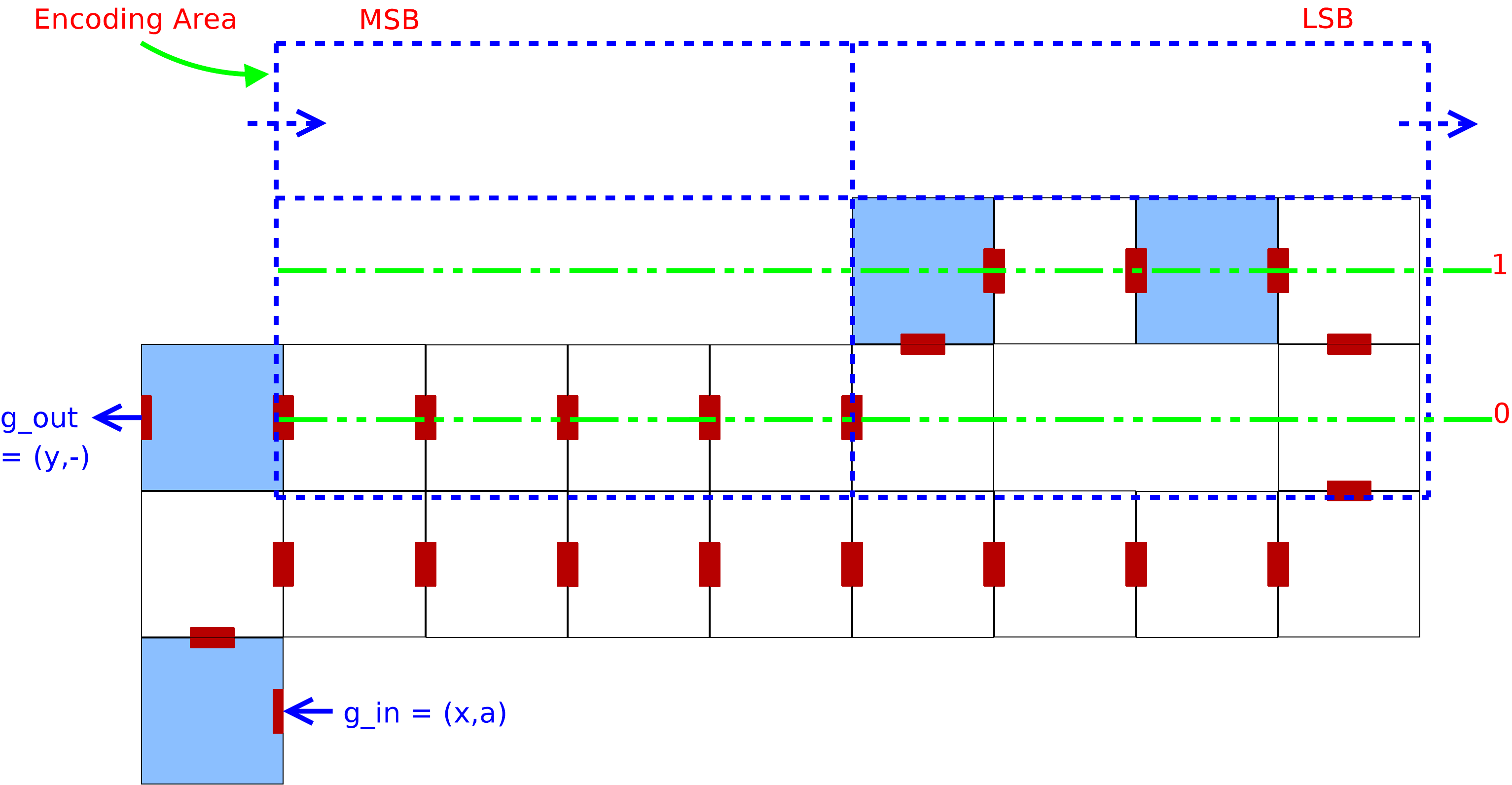}\\

\midrule
\begin{minipage}[b]{0.28\textwidth}\centering
\includegraphics[scale=0.22]{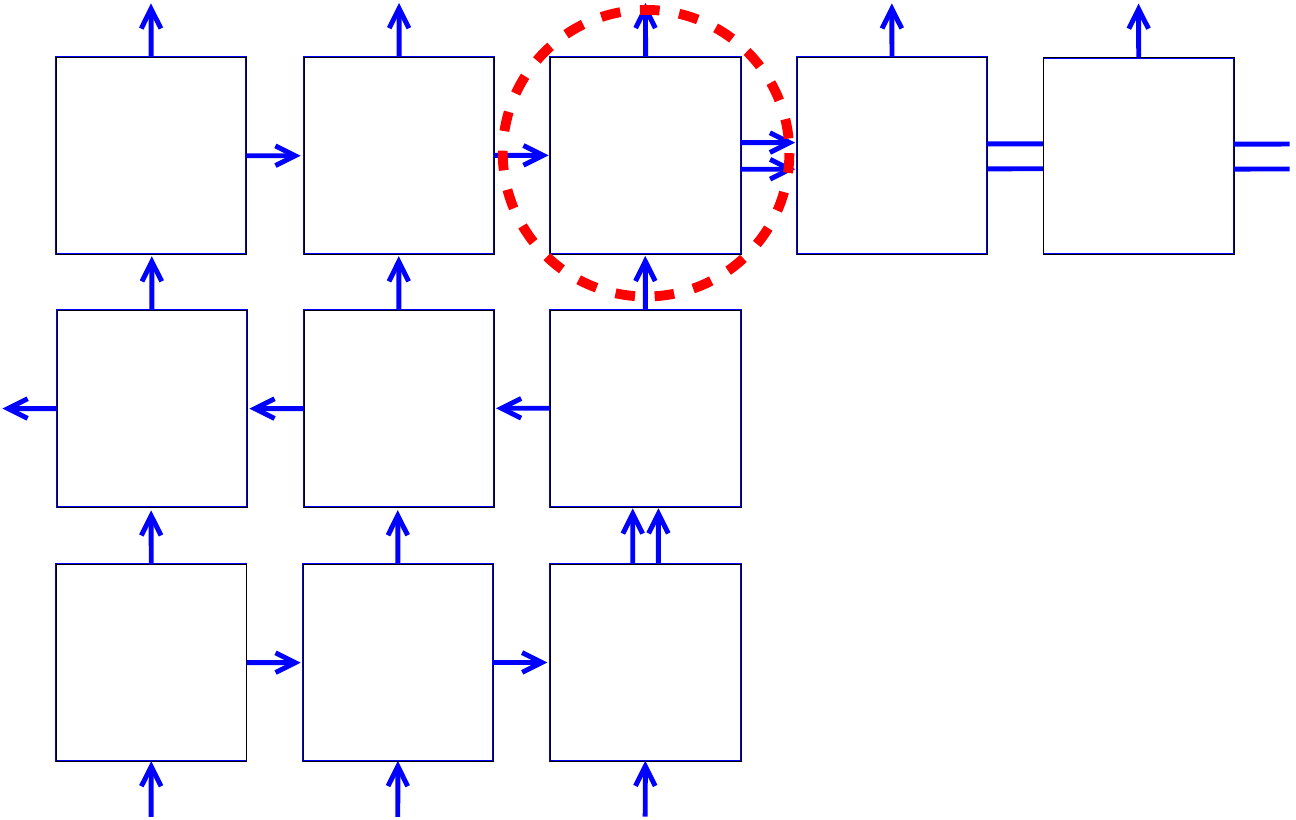}
\includegraphics[scale=0.46]{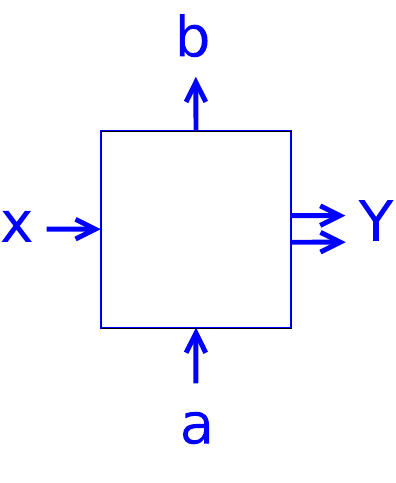}
\hspace{0.04\textwidth}
\small{Direction East, East 2 (DEE2)}
\end{minipage}&
\includegraphics[scale=0.23]{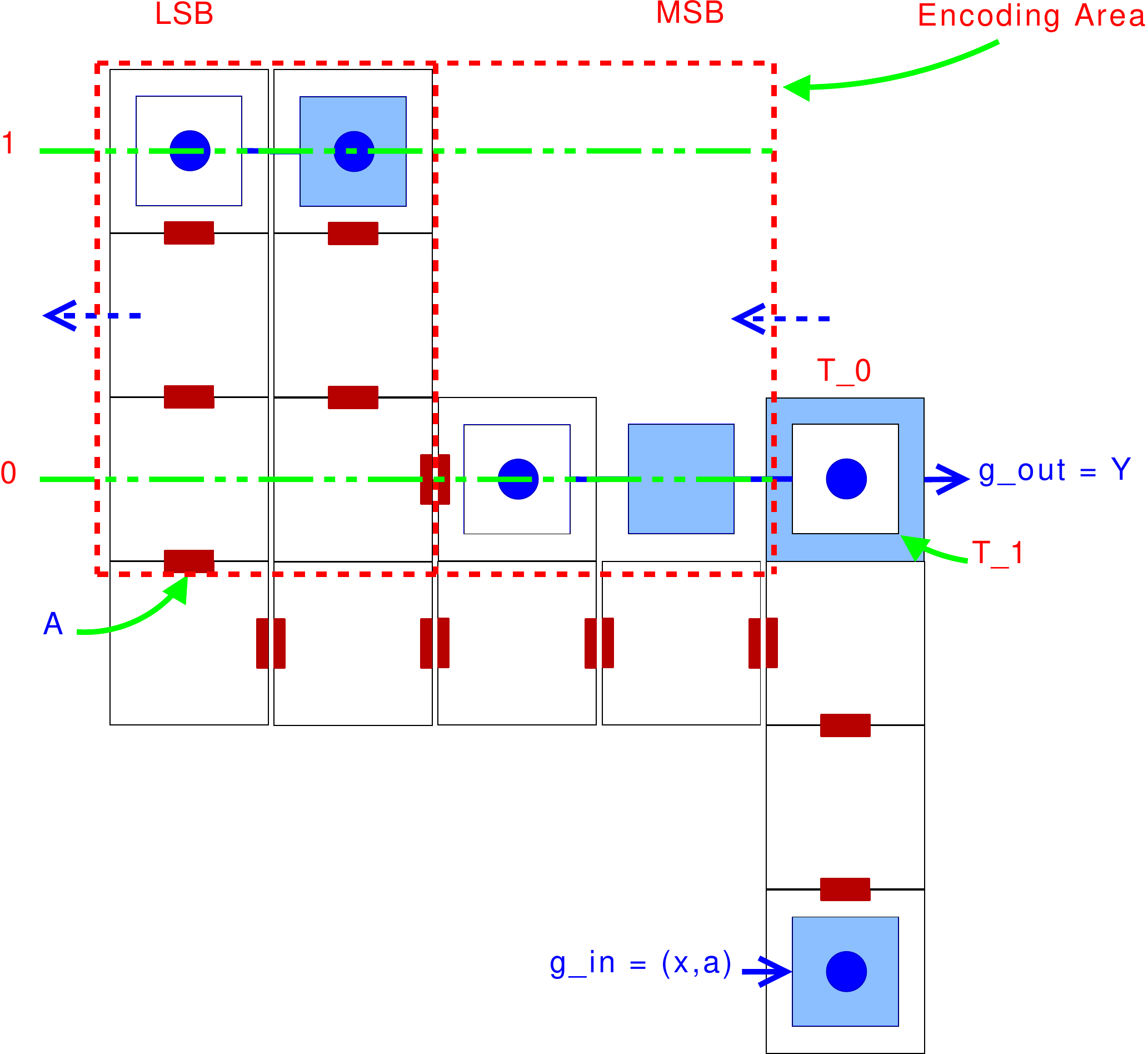}&
\includegraphics[scale=0.23]{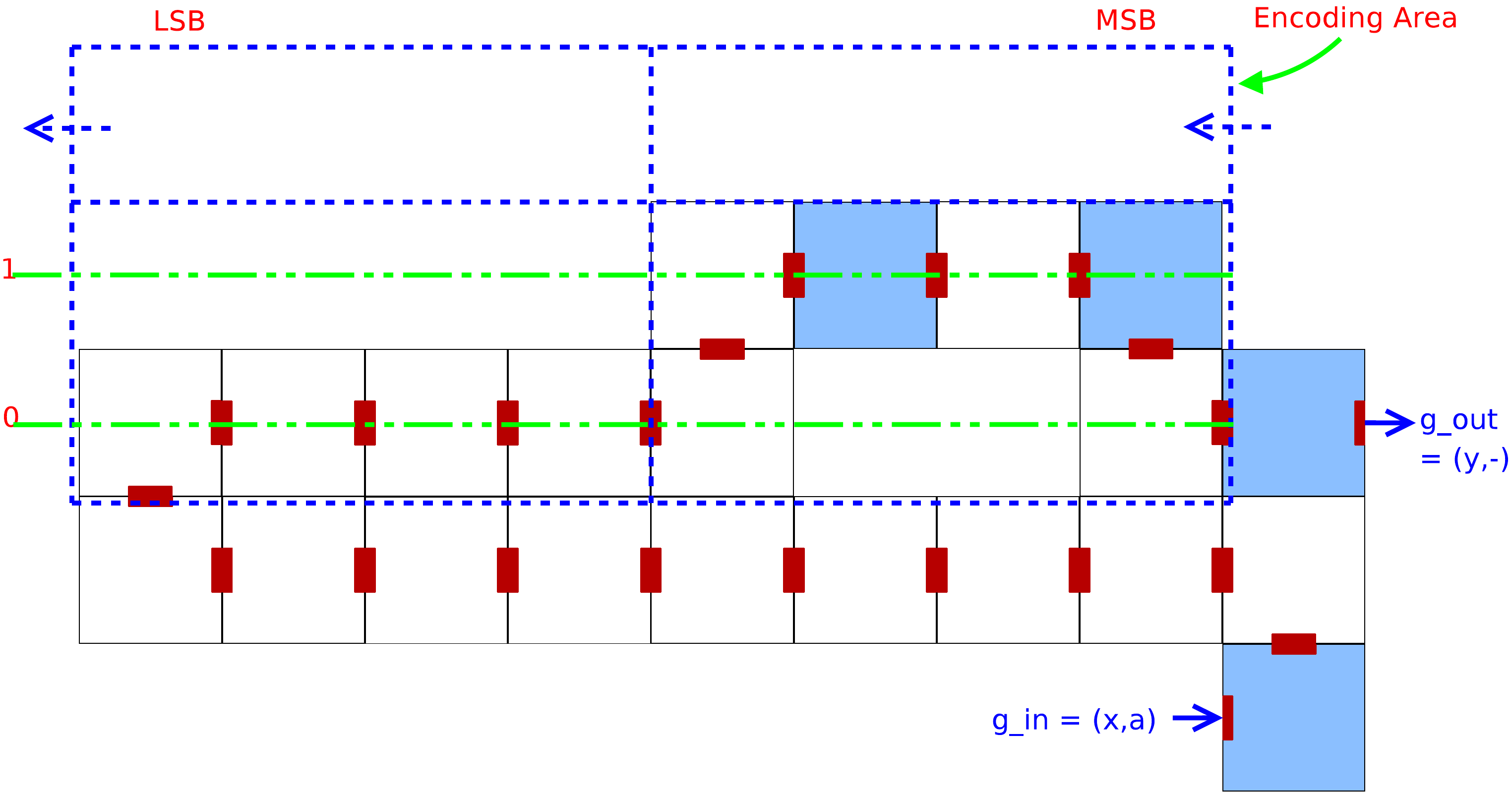}\\

\midrule
\begin{minipage}[b]{0.28\textwidth}\centering
\includegraphics[scale=0.22]{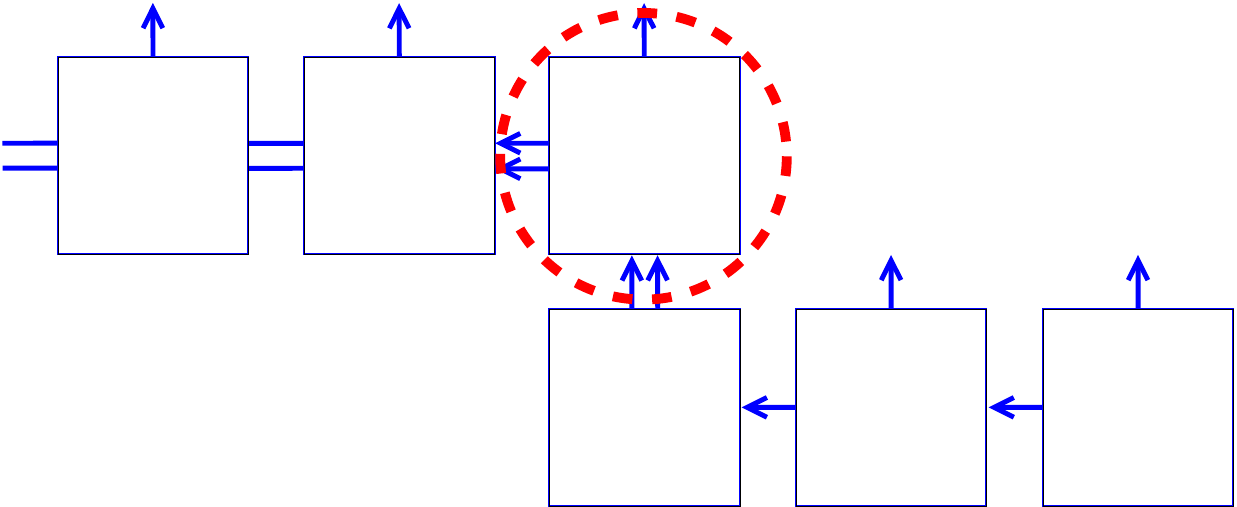}
\includegraphics[scale=0.46]{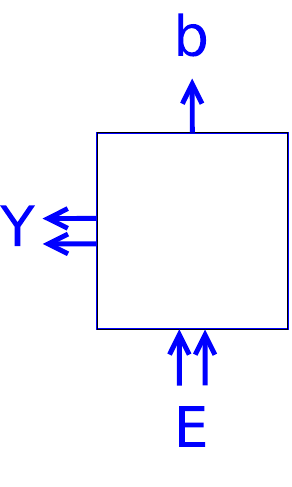}
\hspace{0.04\textwidth}
\small{Turn at West, West 2 (TWW2)}
\end{minipage}&
\includegraphics[scale=0.23]{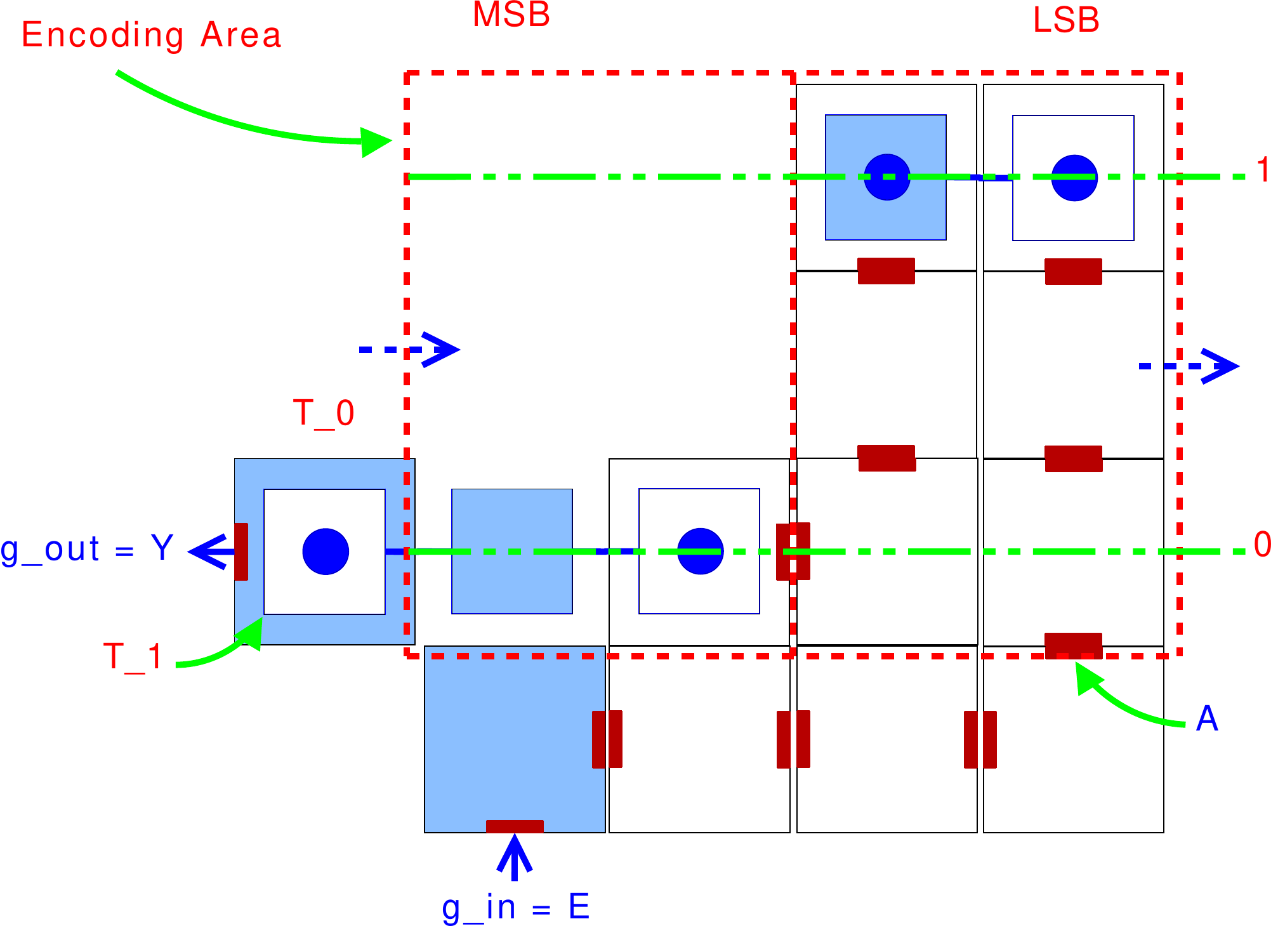}&
\includegraphics[scale=0.23]{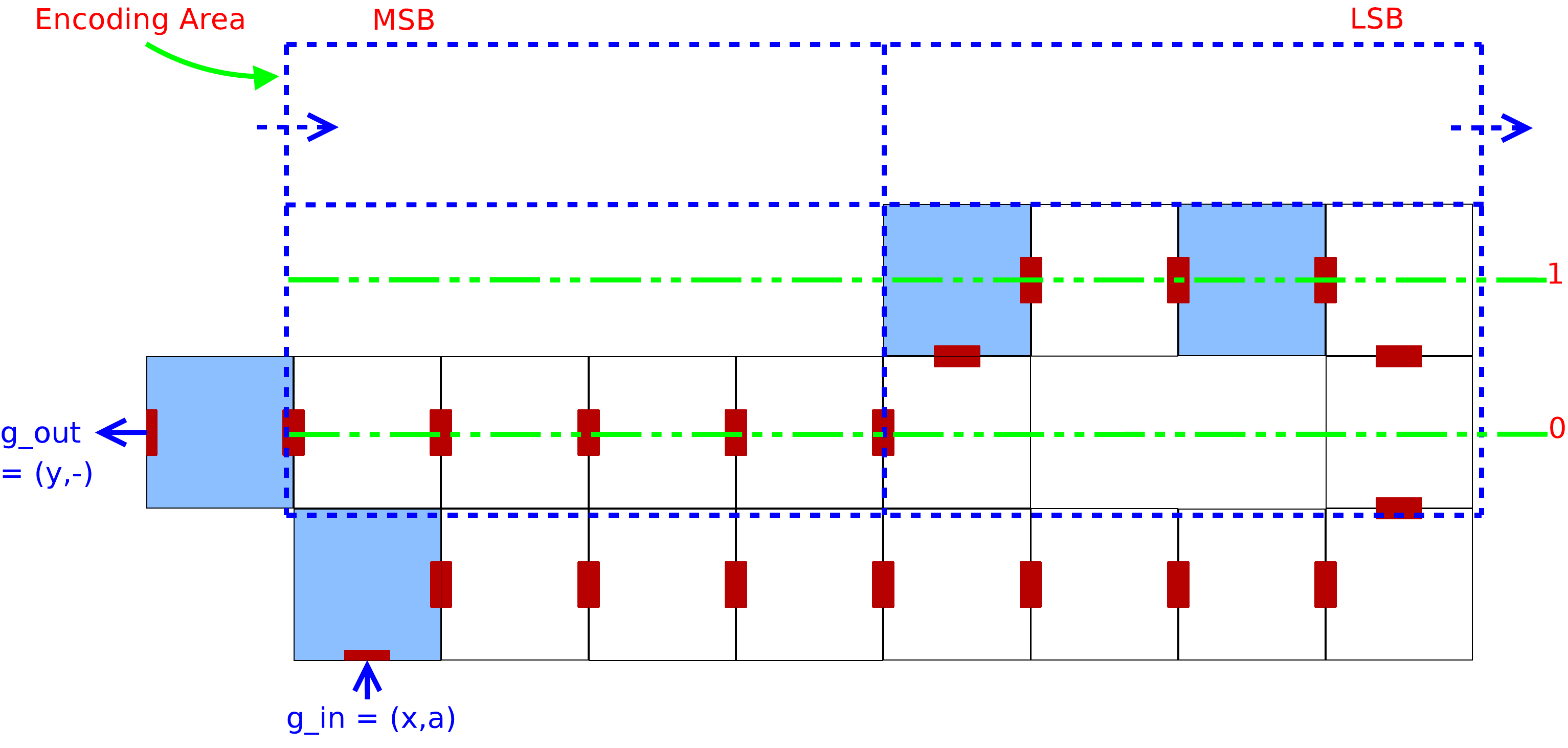}\\

\midrule
\begin{minipage}[b]{0.28\textwidth}\centering
\includegraphics[scale=0.22]{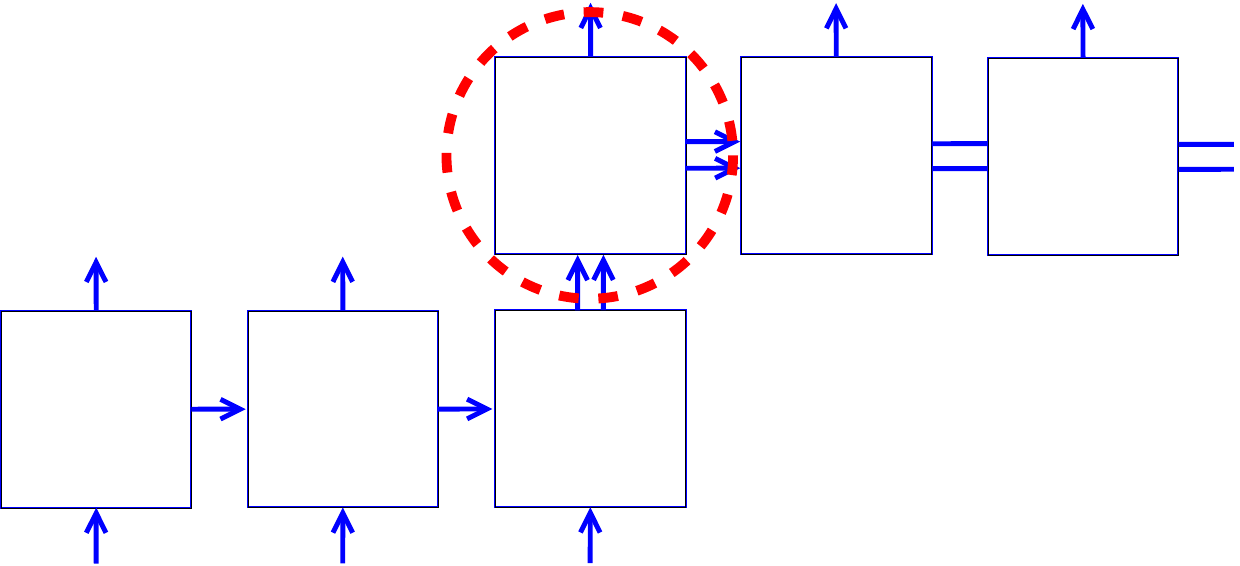}
\includegraphics[scale=0.46]{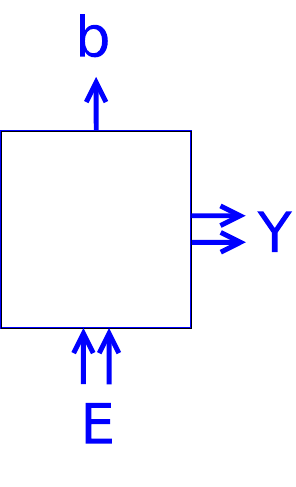}
\hspace{0.04\textwidth}
\small{Turn at East, East 2 (TEE2)}
\end{minipage}&
\includegraphics[scale=0.23]{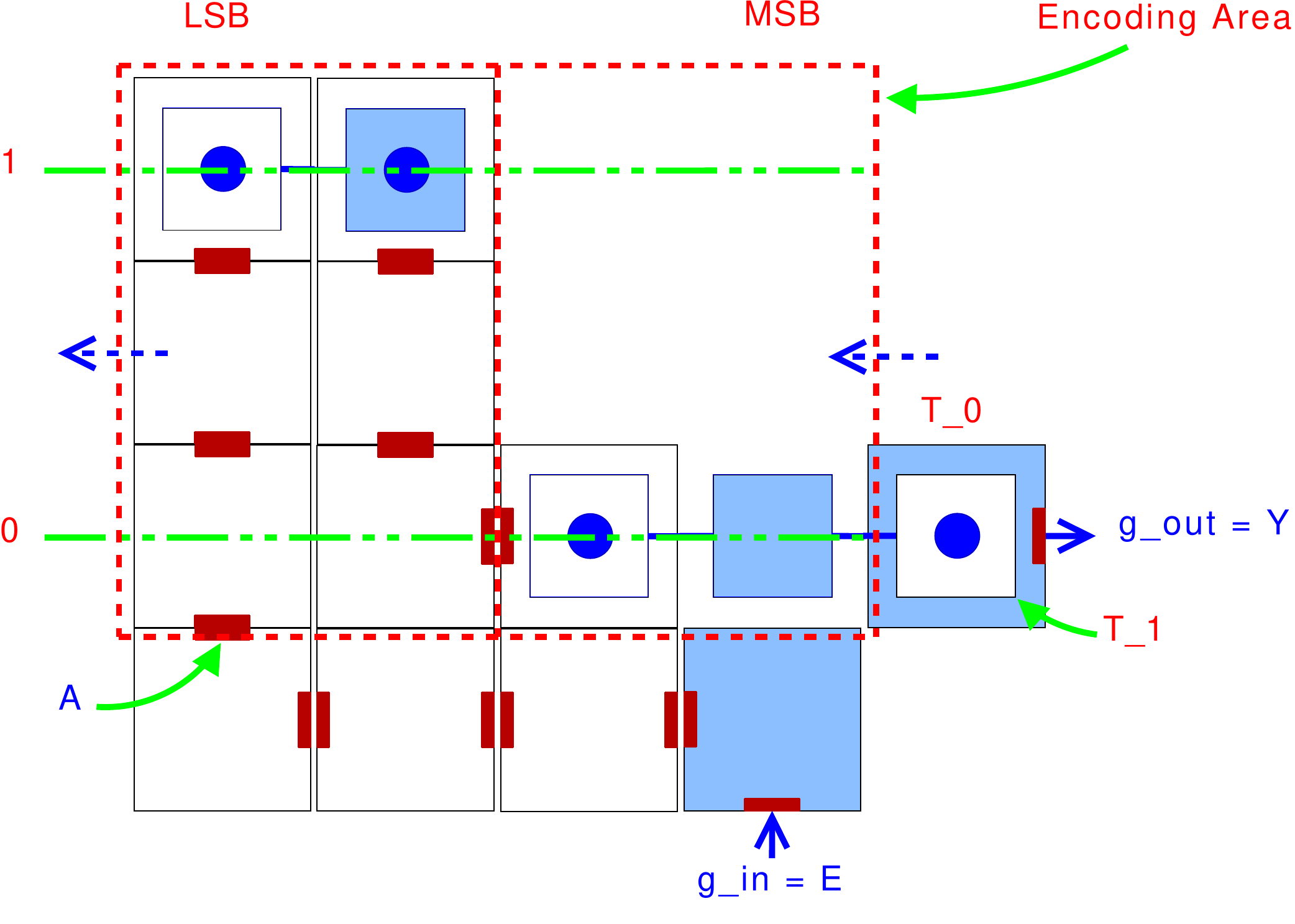}&
\includegraphics[scale=0.23]{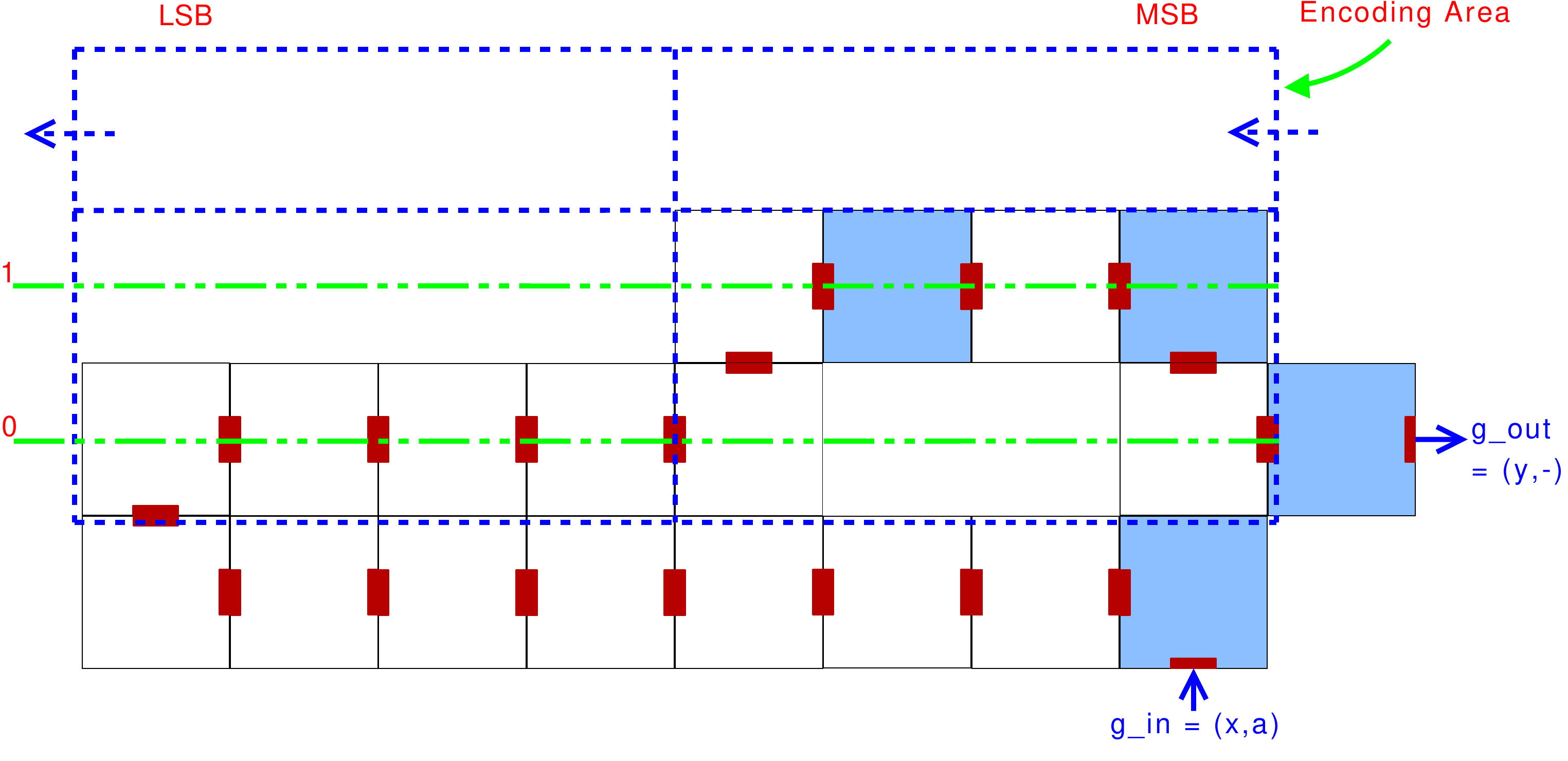}\\

\midrule
\begin{minipage}[b]{0.28\textwidth}\centering
\includegraphics[scale=0.22]{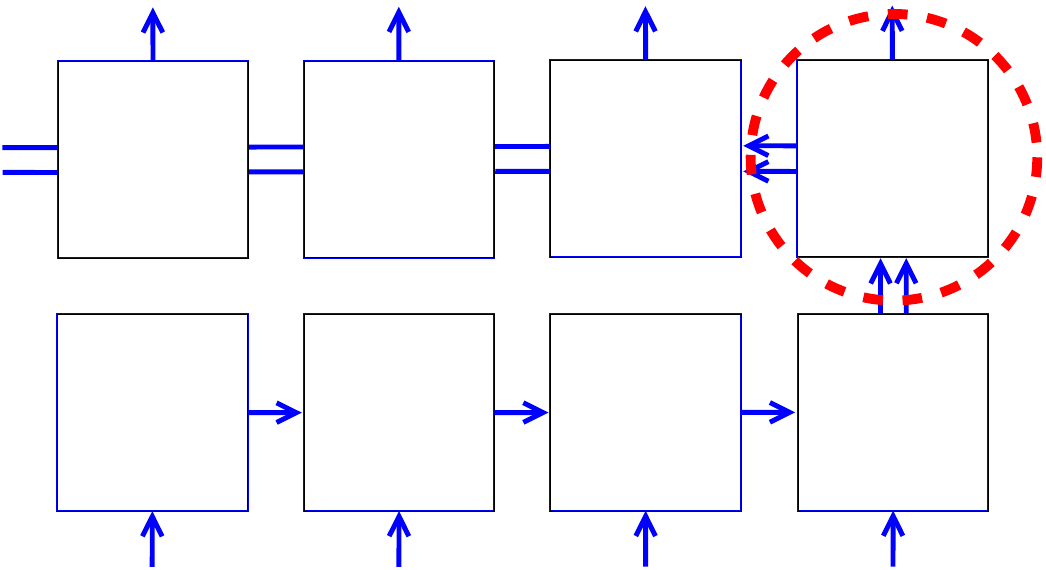}
\includegraphics[scale=0.46]{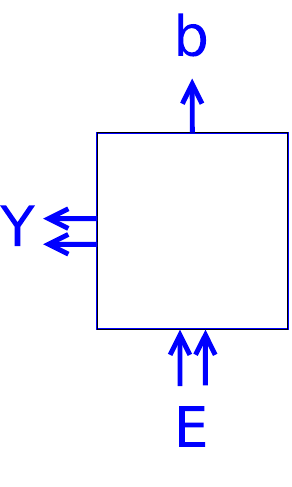}
\hspace{0.04\textwidth}
\small{Turn at East, West 2 (TEW2)}
\end{minipage}&
\includegraphics[scale=0.23]{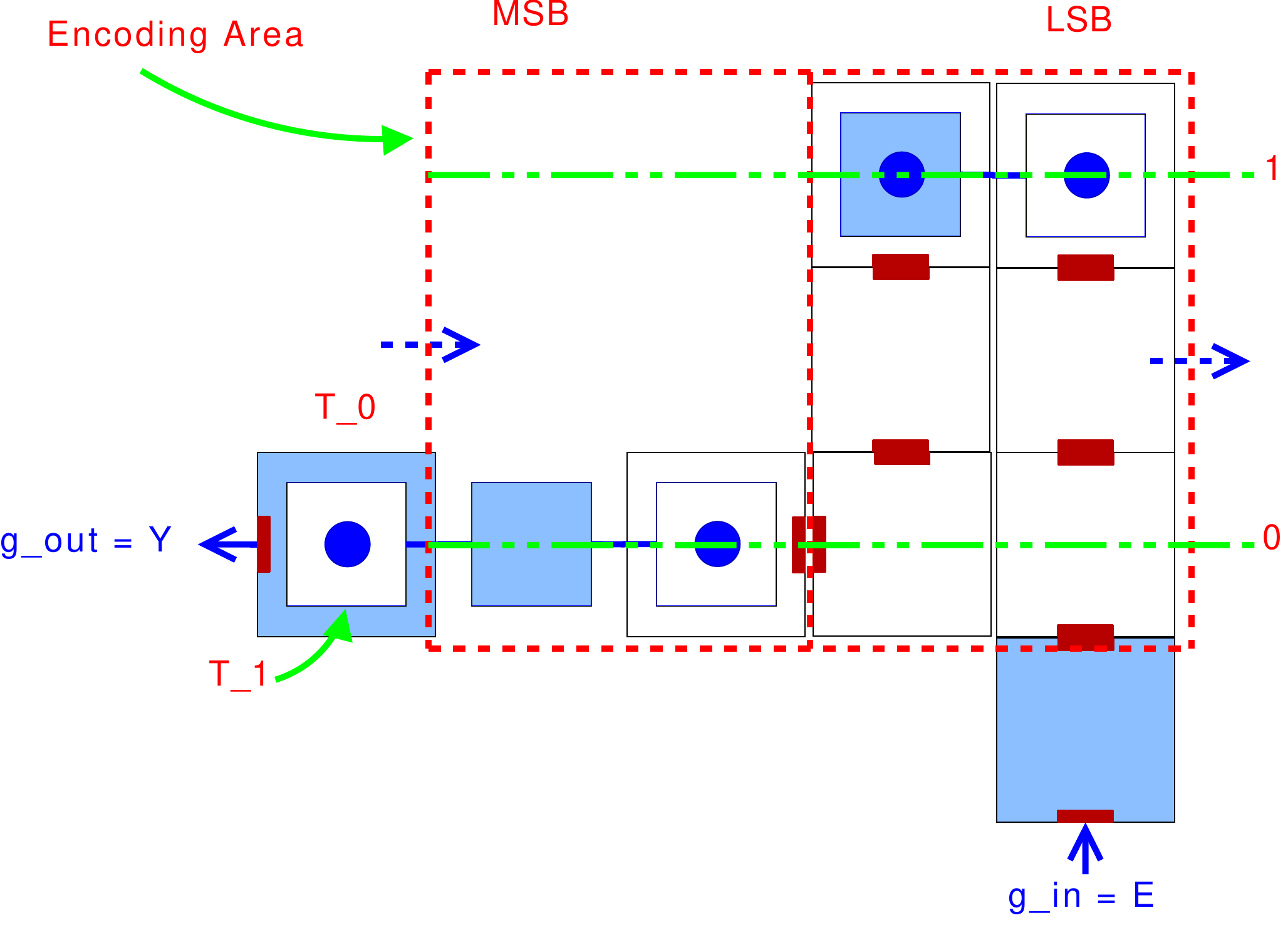}&
\includegraphics[scale=0.23]{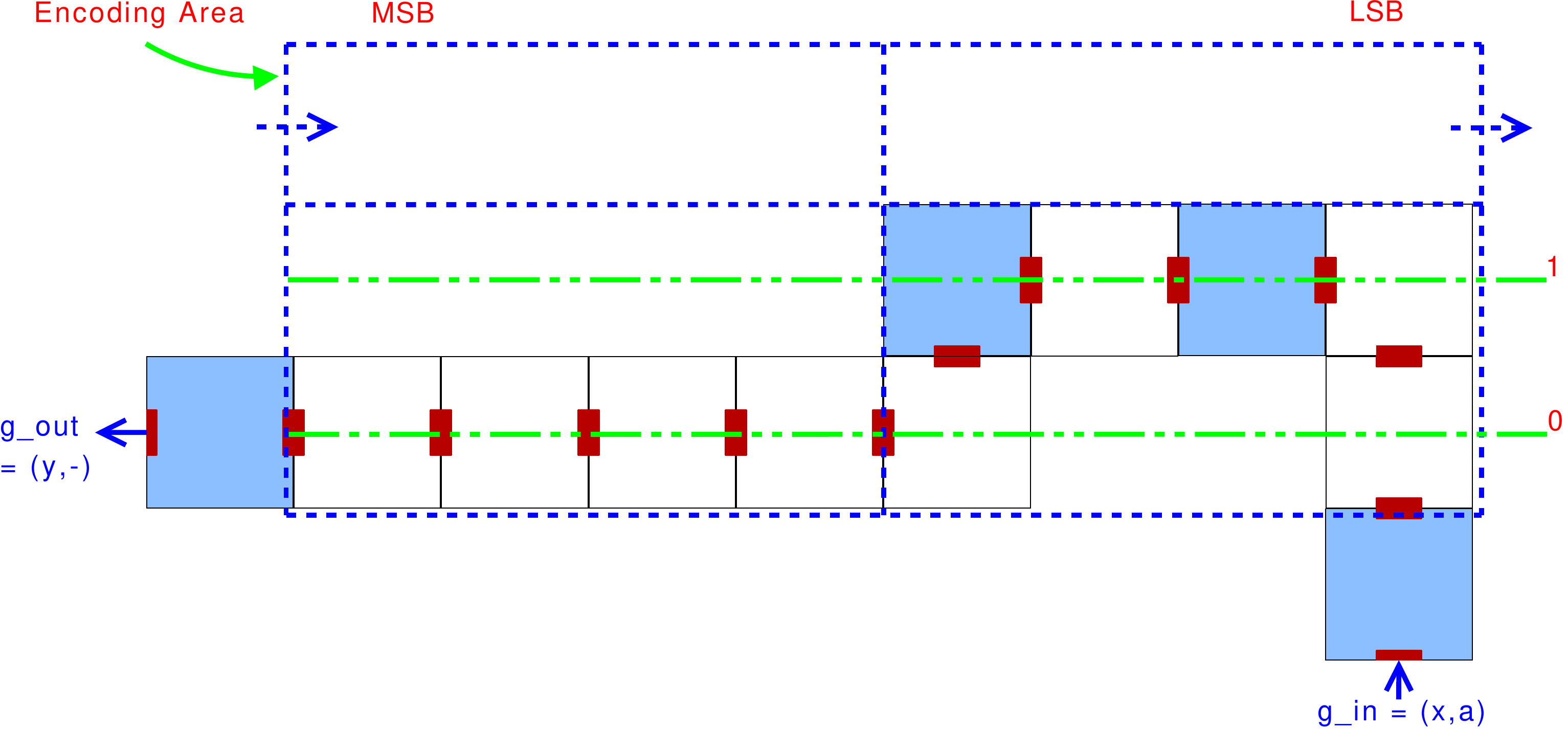}\\

\midrule
\begin{minipage}[b]{0.28\textwidth}\centering
\includegraphics[scale=0.22]{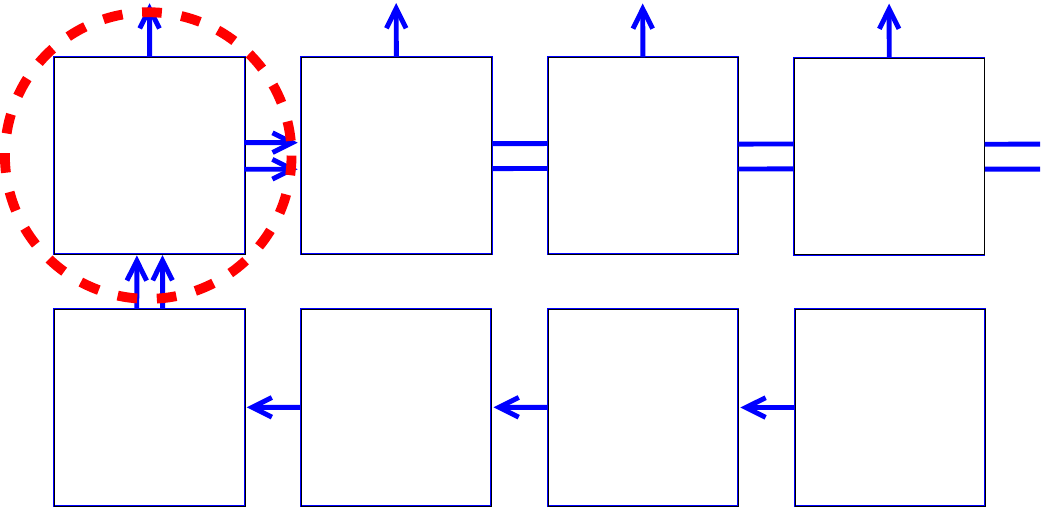}
\includegraphics[scale=0.46]{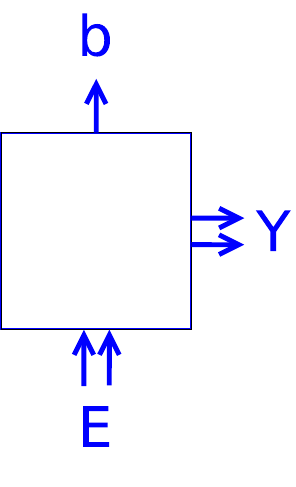}
\hspace{0.04\textwidth}
\small{Turn at West, East 2 (TWE2)}
\end{minipage}&
\includegraphics[scale=0.23]{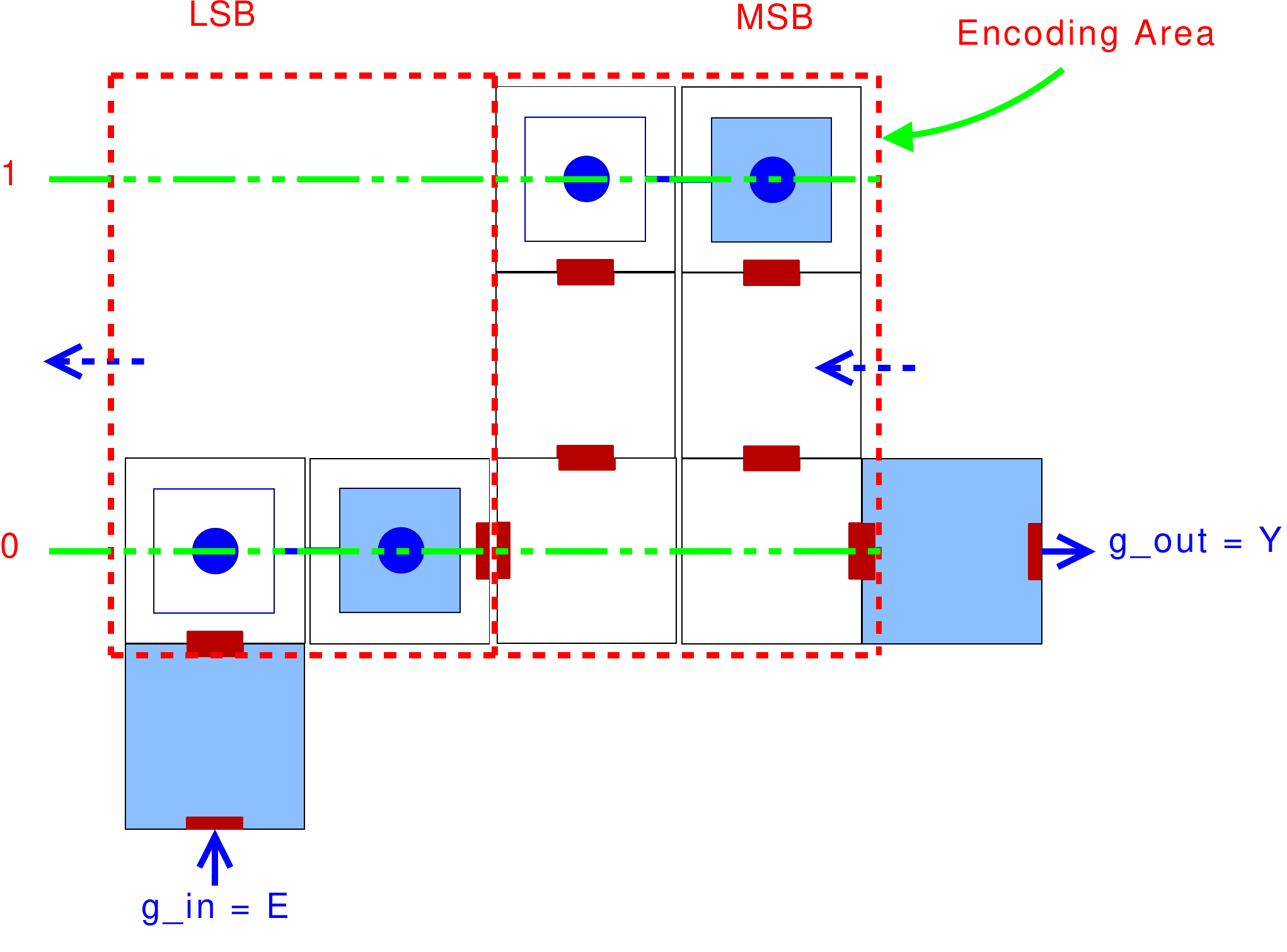}&
\includegraphics[scale=0.23]{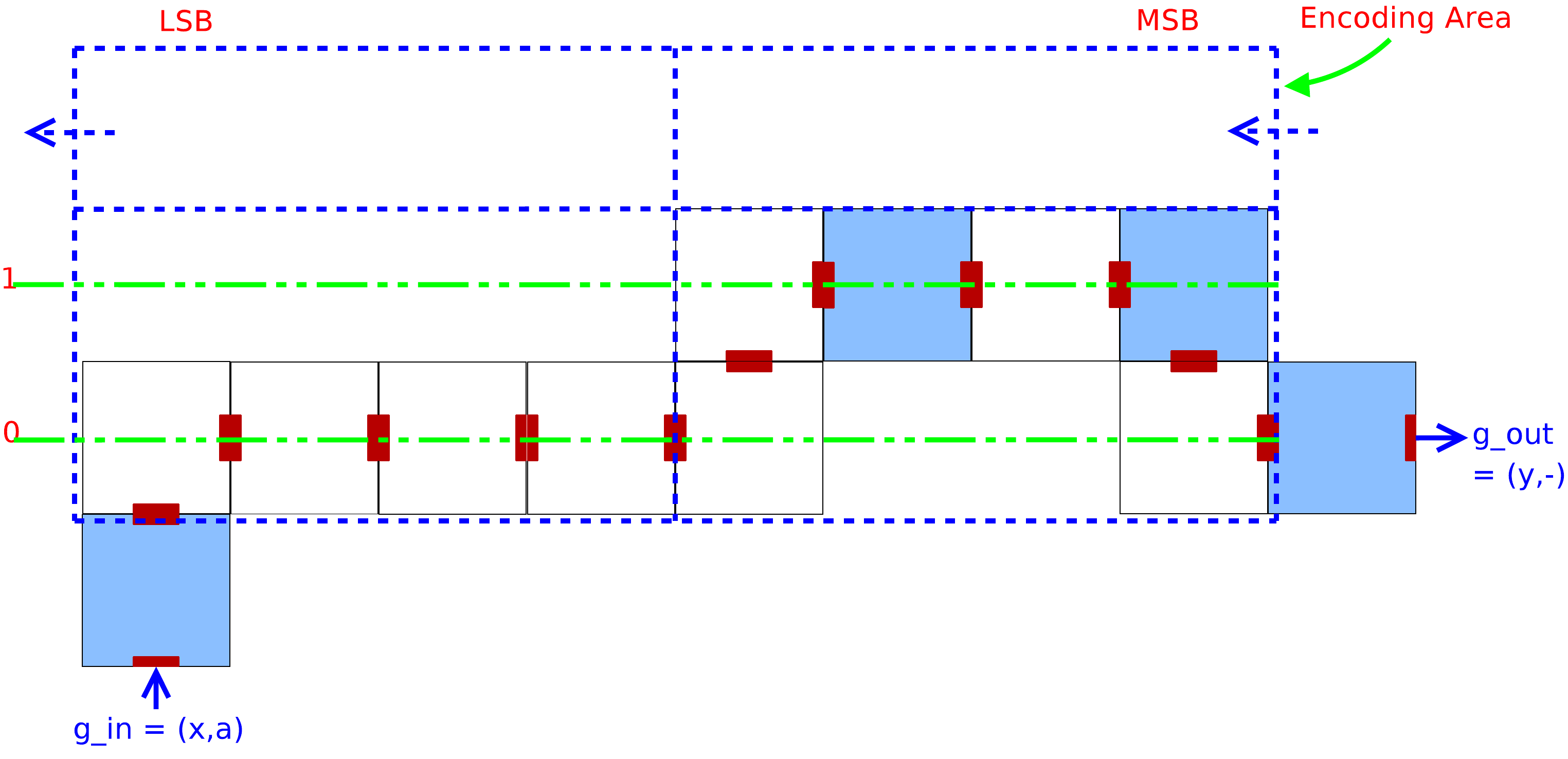}\\

\bottomrule
\end{longtable}

\subsection{Algorithms for Converting Tile Type}

To converting arbitrary zig-zag tile set, the first step is to categorize the tiles
into the sixteen types listed in Table \ref{tab:zigzagmapping}.
Then the zig-zag tiles in 2D can be converted directly to zig-zag tile set in 3D or
probabilistic zig-zag tile set in 2D.
We can recgonize all of the tile types in 2D by observing the positions the tiles
be placed during the growth.
In an other words, a simulator have to be used to get the types of tiles in 2D.



The tiles are categorized into sixteen types, it seems it will cost much effort
to implement the converter by software. As we apply some tricks on the algorithms,
the creating of tile sets converter would be simple.
The tile set type DWW1, DWW2, TEW1, TEW2, TWW2 have similar structure,
the differences between those types are the positions of the input and output glues.
While the tile set type DEE1, DEE2, TWE1, TWE2, TEE2 also have the similar structure,
they are the mirror of the tile set type DWW1, DWW2, TEW1, TEW2, TWW2.
And SWN2 is the mirror of SEN2, DWN2 is mirror of DEN2.

\newcommand{\categorizetiles}{\ensuremath{\mbox{\sc Categorize-Zig-Zag-tiles}}}

\newcommand{\tsconvallcate}{\ensuremath{\mbox{\sc Tileset-convert-all-categories}}}

\newcommand{\genenctowest}{\ensuremath{\mbox{\sc Generate-encode-tile-to-west}}}
\newcommand{\genenctoeast}{\ensuremath{\mbox{\sc Generate-encode-tile-to-east}}}

\newcommand{\gendectowest}{\ensuremath{\mbox{\sc Generate-decode-tile-to-west}}}
\newcommand{\gendectoeast}{\ensuremath{\mbox{\sc Generate-decode-tile-to-east}}}

\newcommand{\genconn}{\ensuremath{\mbox{\sc Generate-connection-tile}}}
\newcommand{\genencfix}{\ensuremath{\mbox{\sc Generate-encode-tile-fixed}}}

\newcommand{\gendectileset}{\ensuremath{\mbox{\sc Generate-decode-tile-set}}}
\newcommand{\genenctileset}{\ensuremath{\mbox{\sc Generate-encode-tile-set}}}

\newcommand{\genenctilesettowest}{\ensuremath{\mbox{\sc Generate-encode-tile-set-to-west}}}
\newcommand{\genenctilesettoeast}{\ensuremath{\mbox{\sc Generate-encode-tile-set-to-east}}}
\newcommand{\genenctilesetothers}{\ensuremath{\mbox{\sc Generate-encode-tile-set-others}}}

\subsubsection{Categorize Zig-Zag Tiles}
A simulator is used to detect the tile types in this algorithm.
The only restriction is that all of the tile types should be occured at least once in resonable steps during simulating.
The algorithm will return right after all of the tile types are detected.
The algorithm is listed in Algorithm \ref{alg:categorizetiles}.

\begin{algorithm}
\caption{$\categorizetiles()$}
\label{alg:categorizetiles}
\linesnumberedhidden
\SetKwData{tilein}{$T_{2d,2t}'$}
\SetKwData{tileout}{$T_{2d,2t}$}
\KwIn{\tilein, Un-categorized tile set at temperature 2 in 2D}
\KwOut{\tileout, Categorized tile set at temperature 2 in 2D}
    \tileout $ \gets \phi $\;
    numTypes $ \gets 0 $ \;

    \While { numTypes $ < \vert \tilein \vert $ } {
        select one of the tile $t' \in T_{2d,2t}'$ which can attach to the supertile at position $pos$ \;
        record this step ($t'$, $pos$) and append to array $steps$ \;
        \showln \If{the type of $t'$ is unknown}{
            adjIdx $ \gets 0$ \;
            numTypes $ \gets $ numTypes + 1 \;
            \For{$i \gets (\vert steps \vert - 1) $ \KwTo $1$}{
                \If{$pos$ is adjacent to $steps[i].pos$ and the glue between them are the same}{
                    adjIdx $ \gets $ i \;
                    adjTile $ \gets $ $steps[i].t$ \;
                    dir $ \gets $ the side of $t'$ that is adjacent to $steps[i].t$\;
                    Goto end of this {\rm for} loop \;
                }
            }
            \showln \If{adjIdx is between $(\vert steps \vert - 1) $ and $1$}{
                \Switch {the value of dir}{
                    \showln \Case {EAST}{
                        \eIf {strength of $t'.g_e > 1$}{
                            \lIf {strength of $t'.g_n > 1$}{
                                The type of $t'$ $ \gets $ SWN2; \tileout $\gets $ \tileout $\cup \{t'\}$ \;
                            }\lElseIf{strength of $t'.g_w > 1$}{
                                The type of $t'$ $ \gets $ FW; \tileout $\gets $ \tileout $\cup \{t'\}$ \;
                            }
                        }{
                            \lIf {strength of $t'.g_n > 1$}{
                                The type of $t'$ $ \gets $ DWN2; \tileout $\gets $ \tileout $\cup \{t'\}$ \;
                            }\lElseIf{strength of $t'.g_w > 1$}{
                                The type of $t'$ $ \gets $ DWW2; \tileout $\gets $ \tileout $\cup \{t'\}$ \;
                            }\lElse{
                                The type of $t'$ $ \gets $ DWW1; \tileout $\gets $ \tileout $\cup \{t'\}$ \;
                            }
                        }
                    }
                    \showln \Case {SOUTH}{
                        \Switch {the type of adjTile}{
                            \uCase {DEN2 or SEN2}{
                                \lIf {strength of $t'.g_e > 1$}{
                                    The type of $t'$ $ \gets $ TEE2; \tileout $\gets $ \tileout $\cup \{t'\}$ \;
                                }\lElseIf{strength of $t'.g_w > 1$}{
                                    The type of $t'$ $ \gets $ TEW2; \tileout $\gets $ \tileout $\cup \{t'\}$ \;
                                }\lElse{
                                    The type of $t'$ $ \gets $ TEW1; \tileout $\gets $ \tileout $\cup \{t'\}$ \;
                                }
                            }
                            \uCase {DWN2 or SWN2}{
                                \lIf {strength of $t'.g_e > 1$}{
                                    The type of $t'$ $ \gets $ TWE2; \tileout $\gets $ \tileout $\cup \{t'\}$ \;
                                }\lElseIf{strength of $t'.g_w > 1$}{
                                    The type of $t'$ $ \gets $ TWW2; \tileout $\gets $ \tileout $\cup \{t'\}$ \;
                                }\lElse{
                                    The type of $t'$ $ \gets $ TWE1; \tileout $\gets $ \tileout $\cup \{t'\}$ \;
                                }
                            }
                        }
                    }
                    \showln \Case {WEST}{
                        \eIf{strength of $t'.g_w > 1$}{
                            \eIf{strength of $t'.g_n > 1$}{
                                The type of $t'$ $ \gets $ SEN2; \tileout $\gets $ \tileout $\cup \{t'\}$ \;
                            }{
                                The type of $t'$ $ \gets $ FE; \tileout $\gets $ \tileout $\cup \{t'\}$ \;
                            }
                        }{
                            \lIf {strength of $t'.g_n > 1$}{
                                The type of $t'$ $ \gets $ DEN2; \tileout $\gets $ \tileout $\cup \{t'\}$ \;
                            }\lElseIf{strength of $t'.g_e > 1$}{
                                The type of $t'$ $ \gets $ DEE2; \tileout $\gets $ \tileout $\cup \{t'\}$ \;
                            }\lElse{
                                The type of $t'$ $ \gets $ DEE1; \tileout $\gets $ \tileout $\cup \{t'\}$ \;
                            }
                        }
                    }
                    \lOther {
                        Error, Ignored \;
                    }
                }
            }
        }
    }
    \KwRet{\tileout }\;
\end{algorithm}

\subsubsection{Zig-Zag in 3D}
The algorithms to create the tile sets in 3D are listed in Algorithm
\ref{alg:tsconvallcate},
\ref{alg:gendectileset},
\ref{alg:gendectowest},
\ref{alg:gendectoeast},
\ref{alg:genenctileset},
\ref{alg:genenctilesettowest},
\ref{alg:genenctilesettoeast},
\ref{alg:genenctilesetothers},
\ref{alg:genenctowest},
\ref{alg:genenctoeast},
\ref{alg:genconn},
\ref{alg:genencfix}.
The number of the binary bits used in the codes is denoted by maxbits.

\begin{algorithm}
\caption{$\tsconvallcate()$}
\label{alg:tsconvallcate}
\linesnumberedhidden
\SetKwData{encset}{$\mathcal{E}$}
\SetKwData{strengthset}{$S_{2d,2t}$}
\SetKwData{tilein}{$T_{2d,2t}$}
\SetKwData{tileout}{$T_{3d,1t}$}

\KwIn{\tilein, Categorized tile set at temperature 2 in 2D}
\KwOut{\tileout, Tile set at temperature 1 in 3D}
    \tcc{$$\tilein=T_{DWW1} \cup ~ T_{DWW2} \cup ~ T_{TEW1} \cup ~ T_{TEW2} \cup ~ T_{TWW2} $$
              $$\cup ~ T_{DEE1} \cup ~ T_{DEE2} \cup ~ T_{TWE1} \cup ~ T_{TWE2} \cup ~ T_{TEE2} $$
              $$\cup ~ T_{FE} \cup ~ T_{FW} \cup ~ T_{SWN2} \cup ~ T_{SEN2} \cup ~ T_{DWN2} \cup ~ T_{DEN2} $$}

    $\tileout \gets \phi $ \tcc*[r]{The tile set in 3D temperature 1}
    $G_{ns} \gets \phi $ \tcc*[r]{The glues at the north and south sides of the tile}
    \strengthset $ \gets $ \{$s_i|s_i$ = Strength of all of the glues of $t_j$, $t_j \in \tilein$ \} \;

    \ForEach{ $ t_i \in \tilein $ }{
        \If{ 1 = $s_{t_i.g_n}$ }{
            $G_{ns} \gets G_{ns} ~ \cup ~ \{ t_i.g_n \}$\;
        }
        \If{ 1 = $s_{t_i.g_s}$ }{
            $G_{ns} \gets G_{ns} ~ \cup ~ \{ t_i.g_s \}$\;
        }
    }

    Encode the glues in set $G_{ns}$ by binary codes $e_i|i \in G_{ns} $\;

    $\encset \gets \{ e_i|i \in G_{ns} \} $ \tcc*[r]{\encset contains all of the code of glue $\in G_{ns}$.}

    maxbits $\gets \lceil \log \vert G_{ns} \vert  \rceil $\;

    $\tileout \gets \tileout \cup ~ \gendectileset (T_{dirleft} \cup T_{dirright}, \strengthset, \encset$, maxbits) \;
    $\tileout \gets \tileout \cup ~ \genenctileset (T_{dirleft} \cup T_{dirright}, \strengthset, \encset$, maxbits) \;

    \KwRet{$\tileout$}\;
\end{algorithm}

\begin{algorithm}
\caption{$\gendectileset()$}
\label{alg:gendectileset}
\linesnumberedhidden
\SetKwData{encset}{$\mathcal{E}$}
\SetKwData{strengthset}{$S_{2d,2t}$}
\SetKwData{tilein}{$T_{2d,2t}$}
\SetKwData{tileout}{$T_{3d,1t}'$}
\KwIn{\tilein, Categorized tile set at temperature 2 in 2D\\
\strengthset, all of the strength of glue in \tilein\\
\encset, all of the code of glue $\in G_{ns}$\\
maxbits, the number of binary bits to encode all of the glues $\in G_{ns}$
}
\KwOut{\tileout, Tile set at temperature 1 in 3D}

$\tileout \gets \phi $\;

$ G_{2w} \gets \phi $ \tcc*[r]{The glue set of input from east to west}
$ G_{2e} \gets \phi $ \tcc*[r]{The glue set of input from west to east}

\ForEach{ $ t_i \in T_{DWW1} \cup ~ T_{DWW2} \cup ~ T_{DWN2} $ }{
    $ G_{2w} \gets G_{2w} \cup ~ \{t_i.g_e\} $\;
}

\ForEach{ $ t_i \in T_{DEE1} \cup ~ T_{DEE2} \cup ~ T_{DEN2} $ }{
    $ G_{2e} \gets G_{2e} \cup ~ \{t_i.g_w\} $\;
}

\ForEach{ $ g_i \in G_{2w} $ }{
    $G_{w,i} \gets \phi $\;
    \ForEach{ $ t_j \in T_{DWW1} \cup ~ T_{DWW2} \cup ~ T_{DWN2} $ }{
        \If{ $t_j.g_e = g_i$ and $s_{t_j.g_s} < 2$}{
            \tcc{$s_{t_j.g_s}$ is the strength of $g_s$ of tile $t_j$, $s_{t_j.g_s} \in \strengthset$}
            $G_{w,i} \gets G_{w,i} \cup ~ \{ t_j.g_s \}$
        }
    }
    $\tileout = \tileout \cup ~ \gendectowest(g_i, G_{w,i}, \encset, maxbits) $\;
}

\ForEach{ $ g_i \in G_{2e} $ }{
    $G_{e,i}= \phi $\;
    \ForEach{ $ t_j \in T_{DEE1} \cup ~ T_{DEE2} \cup ~ T_{DEN2} $ }{
        \If{ $t_j.g_w = g_i$ and $s_{t_j.g_s} < 2$}{
            \tcc{$s_{t_j.g_s}$ is the strength of $g_s$ of tile $t_j$, $s_{t_j.g_s} \in \strengthset$}
            $G_{e,i} \gets G_{e,i} \cup ~ \{ t_j.g_s \}$
        }
    }
    $\tileout \gets \tileout \cup ~ \gendectoeast(g_i, G_{e,i}, \encset, maxbits) $\;
}

\KwRet{\tileout}\;

\end{algorithm}

\begin{algorithm}
\caption{$\gendectowest()$}
\label{alg:gendectowest}
\linesnumberedhidden
\SetKwData{encset}{$\mathcal{E}$}
\SetKwData{tileout}{$T_{3d,1t}'$}
\KwIn{
$g_{in}$, the input glue of the current tile set \\
$G_{w,i}$, the glue set to be the output glues of current tile set\\
\encset, all of the code of glue $\in G_{ns}$\\
maxbits, the number of binary bits to encode all of the glues $\in G_{ns}$
}
\KwOut{\tileout, Tile set at temperature 1 in 3D}
    $ \tileout \gets \phi $\;
    \tcc{Contruct a binary tree according to the encoding code of each items in $G_{w,i}$.}
    \ForEach{ $g_i \in G_{w,i}$ }{
        curnode $\gets $ root\;
        \ForEach{ bit of $e_{g_i}$ from MSB to LSB }{
            \If{ bit = 1 }{
                \If { curnode have no right child } {
                    create right child of the curnode\;
                }
                curnode $\gets $ curnode.right\_child\;
            }\Else{
                \If { curnode have no left child } {
                    create left child of the curnode\;
                }
                curnode $\gets $ curnode.left\_child\;
            }
        }
    }
    Traversal the tree by pre-order algorithm: Part of the tiles set will be generated and saved to \tileout in each visitation. The input glue of the tile set is $g_{in}$.
    See Figure \ref{fig:dec3dleft}\;
    \KwRet{\tileout}\;
\end{algorithm}

\begin{algorithm}
\caption{$\gendectoeast()$}
\label{alg:gendectoeast}
\linesnumberedhidden
\SetKwData{encset}{$\mathcal{E}$}
\SetKwData{tileout}{$T_{3d,1t}'$}
\KwIn{
$g_{in}$, the input glue of the current tile set \\
$G_{e,i}$, the glue set to be the output glues of current tile set\\
\encset, all of the code of glue $\in G_{ns}$\\
maxbits, the number of binary bits to encode all of the glues $\in G_{ns}$
}
\KwOut{\tileout, Tile set at temperature 1 in 3D}
    $ \tileout \gets \phi $\;
    \tcc{Contruct a binary tree according to the encoding of each items in $G_{e,i}$.}
    \ForEach{ $g_i \in G_{e,i}$ }{
        curnode $\gets $ root\;
        \ForEach{ bit of $e_{g_i}$ from MSB to LSB }{
            \If{ bit = 1 }{
                \If { curnode have no right child } {
                    create right child of the curnode\;
                }
                curnode $\gets $ curnode.right\_child\;
            }\Else{
                \If { curnode have no left child } {
                    create left child of the curnode\;
                }
                curnode $\gets $ curnode.left\_child\;
            }
        }
    }
    Traversal the tree by pre-order algorithm: Part of the tiles set will be generated and saved to \tileout in each visitation. The input glue of the tile set is $g_{in}$.
    See Figure \ref{fig:dec3d2right}\;
    \KwRet{\tileout}\;
\end{algorithm}

\begin{figure}[h]\centering
 \includegraphics[height=0.3\textheight]{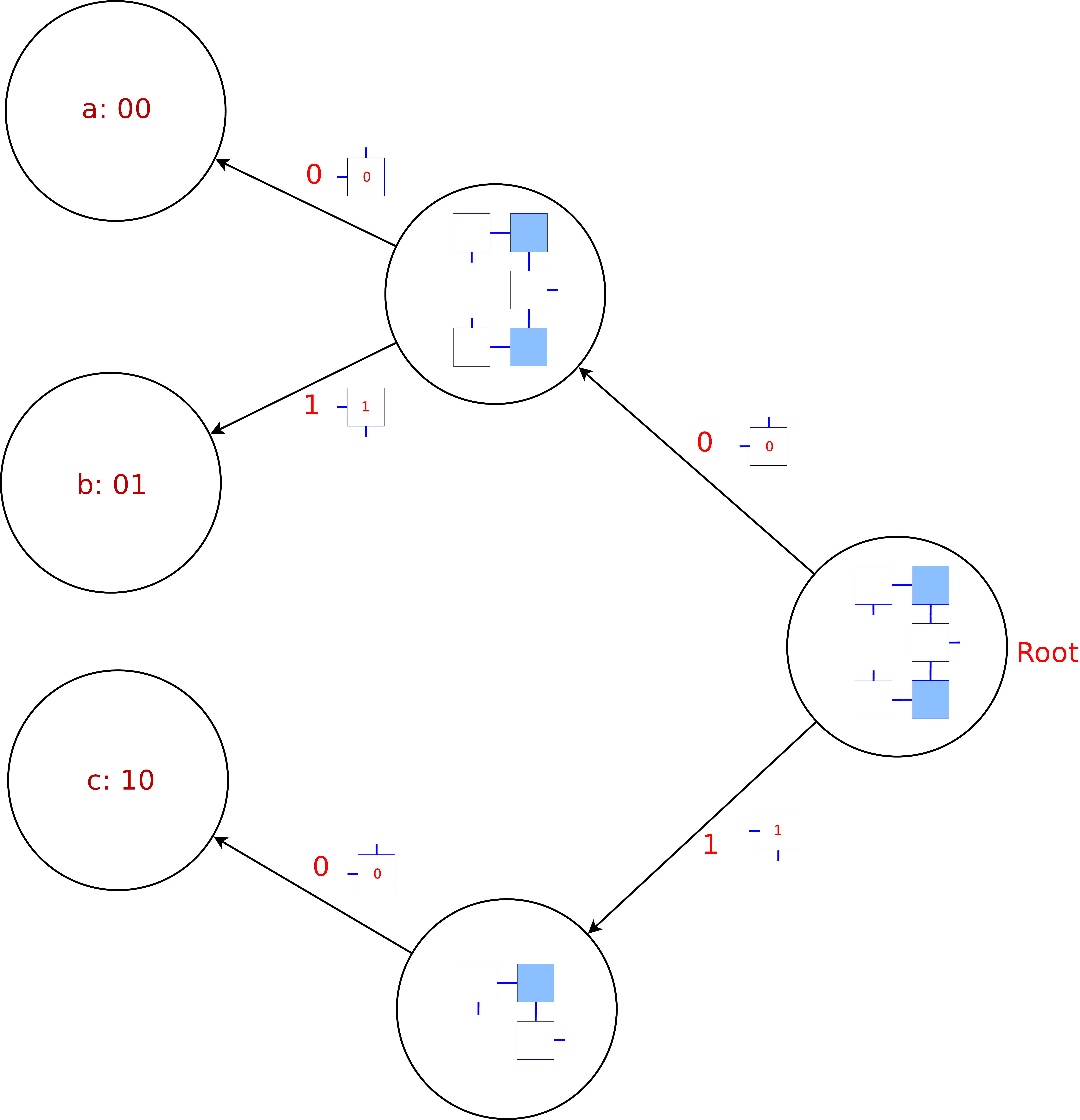}
 \caption{The logic binary tree for constructing decoding tile set (direction left).}\label{fig:dec3dleft}
\end{figure}

\begin{figure}[h]\centering
 \includegraphics[height=0.3\textheight]{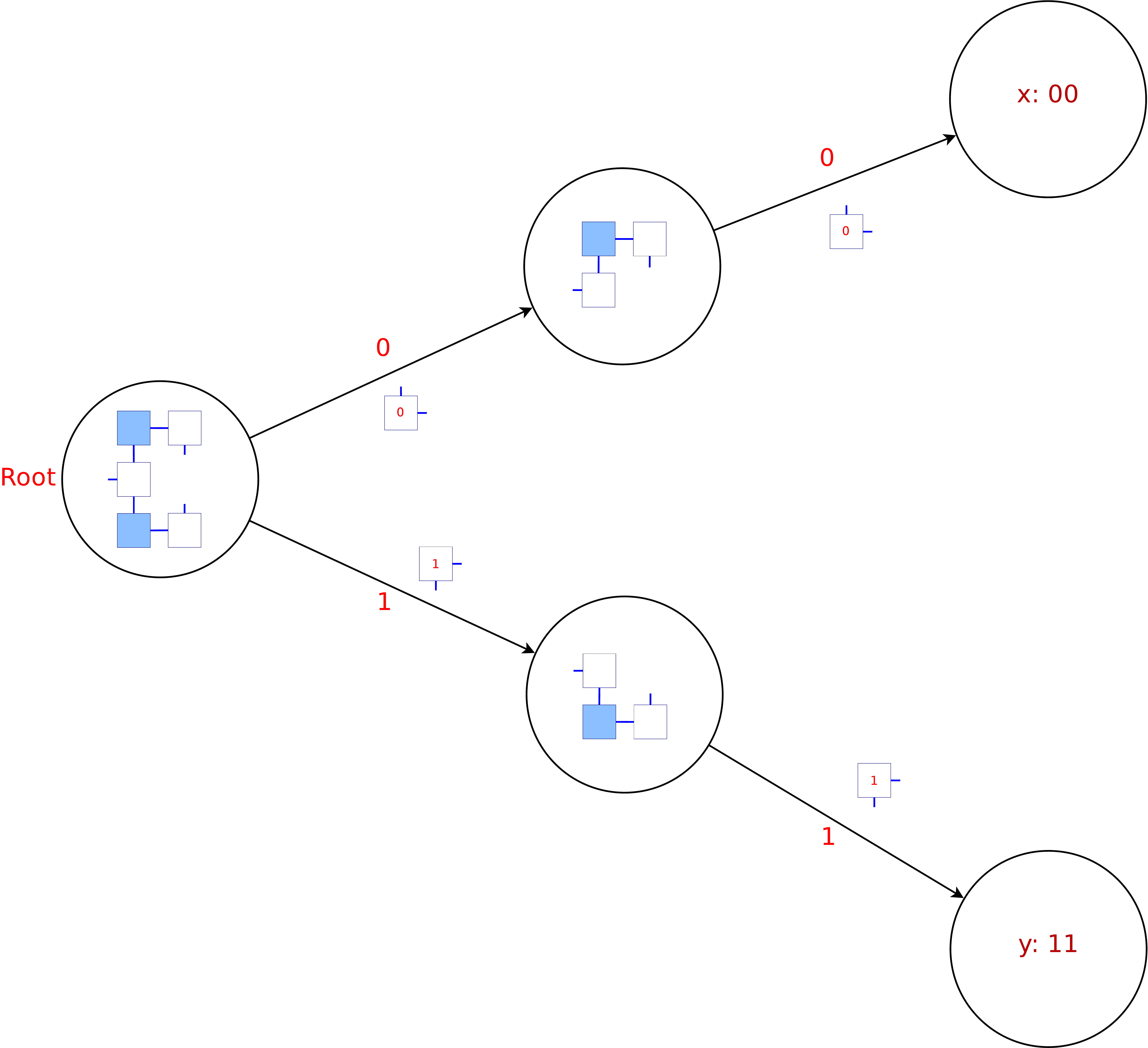}
 \caption{The logic binary tree for constructing decoding tile set (direction right).}\label{fig:dec3d2right}
\end{figure}

\begin{algorithm}
\caption{$\genenctileset()$}
\label{alg:genenctileset}
\linesnumberedhidden
\SetKwData{encset}{$\mathcal{E}$}
\SetKwData{tilein}{$T_{2d,2t}$}
\SetKwData{tileout}{$T_{3d,1t}'$}
\KwIn{\tilein, Categorized tile set at temperature 2 in 2D\\
\encset, all of the code of glue $\in G_{ns}$\\
maxbits, the number of binary bits to encode all of the glues $\in G_{ns}$
}
\KwOut{\tileout, Tile set at temperature 1 in 3D}

    $\tileout = \phi $\;

    $\tileout \gets \tileout \cup ~ \genenctilesettoeast(\tilein, \encset, maxbits) $\;
    $\tileout \gets \tileout \cup ~ \genenctilesettowest(\tilein, \encset, maxbits) $\;
    $\tileout \gets \tileout \cup ~ \genenctilesetothers(\tilein, \encset, maxbits) $\;

\KwRet{\tileout}\;
\end{algorithm}

\begin{algorithm}
\caption{$\genenctilesettowest()$}
\label{alg:genenctilesettowest}
\linesnumberedhidden
\SetKwData{encset}{$\mathcal{E}$}
\SetKwData{tilein}{$T_{2d,2t}$}
\SetKwData{tileout}{$T_{3d,1t}'$}
\KwIn{\tilein, Categorized tile set at temperature 2 in 2D\\
\encset, all of the code of glue $\in G_{ns}$\\
maxbits, the number of binary bits to encode all of the glues $\in G_{ns}$
}
\KwOut{\tileout, Tile set at temperature 1 in 3D}

    $\tileout = \phi $\;

    \ForEach{ $ t_i \in T_{DWW1} $ }{
        $g_{in} \gets (t_i.g_e, t_i.g_s)$\;
        $g_{out} \gets (t_i.g_w,-)$\;
        $\tileout \gets \tileout \cup ~ \genenctowest (DWW1, g_{in}, g_{out}, e_{t_i.g_n}, maxbits)$ \tcc*[r]{$e_{t_i.g_n}$ is the encoding code of glue $g_n$ of tile $t_i$ }
    }

    \ForEach{ $ t_i \in T_{TEW1} $ }{
        $g_{in} \gets t_i.g_s$\;
        $g_{out} \gets (t_i.g_w,-)$\;
        $\tileout \gets \tileout \cup ~ \genenctowest (TEW1, g_{in}, g_{out}, e_{t_i.g_n}, maxbits)$\;
    }

    \ForEach{ $ t_i \in T_{DWW2} $ }{
        $g_{in} \gets (t_i.g_e, t_i.g_s)$\;
        $g_{out} \gets t_i.g_w$\;
        $\tileout \gets \tileout \cup ~ \genenctowest (DWW2, g_{in}, g_{out}, e_{t_i.g_n}, maxbits)$\;
    }

    \ForEach{ $ t_i \in T_{TWW2} $ }{
        $g_{in} \gets t_i.g_s$\;
        $g_{out} \gets t_i.g_w$\;
        $\tileout \gets \tileout \cup ~ \genenctowest (TWW2, g_{in}, g_{out}, e_{t_i.g_n}, maxbits)$\;
    }

    \ForEach{ $ t_i \in T_{TEW2} $ }{
        $g_{in} \gets t_i.g_s$\;
        $g_{out} \gets t_i.g_w$\;
        $\tileout \gets \tileout \cup ~ \genenctowest (TEW2, g_{in}, g_{out}, e_{t_i.g_n}, maxbits)$\;
    }

    \KwRet{\tileout}\;
\end{algorithm}

\begin{algorithm}
\caption{$\genenctilesettoeast()$}
\label{alg:genenctilesettoeast}
\linesnumberedhidden
\SetKwData{encset}{$\mathcal{E}$}
\SetKwData{tilein}{$T_{2d,2t}$}
\SetKwData{tileout}{$T_{3d,1t}'$}
\KwIn{\tilein, Categorized tile set at temperature 2 in 2D\\
\encset, all of the code of glue $\in G_{ns}$\\
maxbits, the number of binary bits to encode all of the glues $\in G_{ns}$
}
\KwOut{\tileout, Tile set at temperature 1 in 3D}

    $\tileout = \phi $\;

    \ForEach{ $ t_i \in T_{DEE1} $ }{
        $g_{in} \gets (t_i.g_w, t_i.g_s)$\;
        $g_{out} \gets (t_i.g_e,-)$\;
        $\tileout \gets \tileout \cup ~ \genenctoeast (DEE1, g_{in}, g_{out}, e_{t_i.g_n}, maxbits)$ \tcc*[r]{$e_{t_i.g_n}$ is the encoding code of glue $g_n$ of tile $t_i$ }
    }

    \ForEach{ $ t_i \in T_{TWE1} $ }{
        $g_{in} \gets t_i.g_s$\;
        $g_{out} \gets (t_i.g_e,-)$\;
        $\tileout \gets \tileout \cup ~ \genenctoeast (TWE1, g_{in}, g_{out}, e_{t_i.g_n}, maxbits)$\;
    }

    \ForEach{ $ t_i \in T_{DEE2} $ }{
        $g_{in} \gets (t_i.g_w, t_i.g_s)$\;
        $g_{out} \gets t_i.g_e$\;
        $\tileout \gets \tileout \cup ~ \genenctoeast (DEE2, g_{in}, g_{out}, e_{t_i.g_n}, maxbits)$\;
    }

    \ForEach{ $ t_i \in T_{TEE2} $ }{
        $g_{in} \gets t_i.g_s$\;
        $g_{out} \gets t_i.g_e$\;
        $\tileout \gets \tileout \cup ~ \genenctoeast (TEE2, g_{in}, g_{out}, e_{t_i.g_n}, maxbits)$\;
    }

    \ForEach{ $ t_i \in T_{TWE2} $ }{
        $g_{in} \gets t_i.g_s$\;
        $g_{out} \gets t_i.g_e$\;
        $\tileout \gets \tileout \cup ~ \genenctoeast (TWE2, g_{in}, g_{out}, e_{t_i.g_n}, maxbits)$\;
    }

    \KwRet{\tileout}\;
\end{algorithm}

\begin{algorithm}
\caption{$\genenctilesetothers()$}
\label{alg:genenctilesetothers}
\linesnumberedhidden
\SetKwData{encset}{$\mathcal{E}$}
\SetKwData{tilein}{$T_{2d,2t}$}
\SetKwData{tileout}{$T_{3d,1t}'$}
\KwIn{\tilein, Categorized tile set at temperature 2 in 2D\\
\encset, all of the code of glue $\in G_{ns}$\\
maxbits, the number of binary bits to encode all of the glues $\in G_{ns}$
}
\KwOut{\tileout, Tile set at temperature 1 in 3D}

    $\tileout = \phi $\;

    \ForEach{ $ t_i \in T_{DWN2} $ }{
        $g_{in} \gets (t_i.g_e, t_i.g_s)$\;
        $g_{out} \gets t_i.g_n$\;
        $\tileout \gets \tileout \cup ~ \genconn (DWN2, g_{in}, g_{out}, maxbits)$\;
    }

    \ForEach{ $ t_i \in T_{DEN2} $ }{
        $g_{in} \gets (t_i.g_w, t_i.g_s)$\;
        $g_{out} \gets t_i.g_n$\;
        $\tileout \gets \tileout \cup ~ \genconn (DEN2, g_{in}, g_{out}, maxbits)$\;
    }

    \ForEach{ $ t_i \in T_{SWN2} $ }{
        $g_{in} \gets t_i.g_e$\;
        $g_{out} \gets t_i.g_n$\;
        $\tileout \gets \tileout \cup ~ \genconn (SWN2, g_{in}, g_{out}, maxbits)$\;
    }

    \ForEach{ $ t_i \in T_{SEN2} $ }{
        $g_{in} \gets t_i.g_w$\;
        $g_{out} \gets t_i.g_n$\;
        $\tileout \gets \tileout \cup ~ \genconn (SEN2, g_{in}, g_{out}, maxbits)$\;
    }

    \ForEach{ $ t_i \in T_{FW} $ }{
        $g_{in} \gets t_i.g_e$\;
        $g_{out} \gets t_i.g_w$\;
        $\tileout \gets \tileout \cup ~ \genencfix (FW, g_{in}, g_{out}, e_{t_i.g_n}, maxbits)$ \tcc*[r]{$e_{t_i.g_n}$ is the encoding code of glue $g_n$ of tile $t_i$ }
    }

    \ForEach{ $ t_i \in T_{FE} $ }{
        $g_{in} \gets t_i.g_w$\;
        $g_{out} \gets t_i.g_e$\;
        $\tileout \gets \tileout \cup ~ \genencfix (FE, g_{in}, g_{out}, e_{t_i.g_n}, maxbits)$\;
    }

    \KwRet{\tileout}\;
\end{algorithm}

\begin{algorithm}
\caption{$\genenctowest()$}
\label{alg:genenctowest}
\linesnumberedhidden
\SetKwData{tileout}{$T_{3d,1t}'$}
\KwIn{tiletype, the type of the tile\\
$g_{in}$, the input glue\\
$g_{out}$, the output glue\\
$e_n$, the code of the north glue\\
maxbits, the number of binary bits to encode all of the glues $\in G_{ns}$
}
\KwOut{\tileout, Tile set at temperature 1 in 3D}

    $ \tileout \gets \phi $\;

    \tcc{Generate {\bf \emph{distinct}} tiles as showed in Table \ref{tab:zigzagmapping}.}

    \If{tiletype = DWW1}{
        Generate the tiles $t_0', t_1', t_2', t_3', t_4', t_5'$ and put it to \tileout\;
        \tcc{The $t_0', t_1', t_2', t_3', t_4', t_5'$ are denoted separately by t\_0, t\_1, t\_2, t\_3, t\_4, and t\_5 in the figure of DWW1 in the Table \ref{tab:zigzagmapping}. The glue is $g_{out}$ at the west of the tile $t_0'$.}
    }\ElseIf{tiletype = DWW2}{
        Generate the tiles $t_0', t_1'$ and put it to \tileout\;
    }\ElseIf{tiletype = TEW1}{
        Generate the tiles $t_0', t_1', t_2', t_3', t_4', t_5'$ and put it to \tileout\;
    }\ElseIf{tiletype = TEW2 or tiletype = TWW2}{
        Generate the tiles $t_0', t_1'$ and put it to \tileout\;
    }

    Generate the tiles in the \emph{Encoding Area} and put it to \tileout\;
    \tcc{The positions of the tiles are depend on the $e_n$;
    Each bit of the $e_n$ are encoded by two tiles in the plane $z=1$;
    The tiles will place at the position '1'(the dotted line denoted by '1' in the figures if the bit is $1$,
    while the tiles will place at position '0' if the bit is $0$;
    The encoding of the most significant bit (MSB) of the $e_n$ is place at the left side of the \emph{Encoding Area},
    and the least siginificant bit(LSB) of the $e_n$ is place at the right side of the \emph{Encoding Area}.
    All of the encoded tiles are connected by the tiles in the plane $z=0$.}

    \If{tiletype = DWW1 or tiletype = DWW2 or tiletype = TWW2}{
        Generate the tiles between glue A and glue $g_{in}$, put those tiles to \tileout\;
    }\ElseIf{tiletype = TEW1 or tiletype = TEW2}{
        Generate the tile with glue $g_{in}$, put it to \tileout\;
    }

    \KwRet{\tileout}\;
\end{algorithm}

\begin{algorithm}
\caption{$\genenctoeast()$}
\label{alg:genenctoeast}
\linesnumberedhidden
\SetKwData{tileout}{$T_{3d,1t}'$}
\KwIn{tiletype, the type of the tile\\
$g_{in}$, the input glue\\
$g_{out}$, the output glue\\
$e_n$, the code of the north glue\\
maxbits, the number of binary bits to encode all of the glues $\in G_{ns}$
}
\KwOut{\tileout, Tile set at temperature 1 in 3D}

    $ \tileout \gets \phi $\;

    \If{tiletype = DEE1}{
        Generate the tiles $t_0', t_1', t_2', t_3', t_4', t_5'$ and put it to \tileout\;
        \tcc{The $t_0', t_1', t_2', t_3', t_4', t_5'$ are denoted separately by t\_0, t\_1, t\_2, t\_3, t\_4, and t\_5 in the figures of Table \ref{tab:zigzagmapping}. The glue is $g_{out}$ at the east of the tile $t_0'$.}
    }\ElseIf{tiletype = DEE2}{
        Generate the tiles $t_0', t_1'$ and put it to \tileout\;
    }\ElseIf{tiletype = TWE1}{
        Generate the tiles $t_0', t_1', t_2', t_3', t_4', t_5'$ and put it to \tileout\;
    }\ElseIf{tiletype = TWE2 or tiletype = TEE2}{
        Generate the tiles $t_0', t_1'$ and put it to \tileout\;
    }

    Generate the tiles in the \emph{Encoding Area} and put it to \tileout\;

    \If{tiletype = DEE1 or tiletype = DEE2 or tiletype = TEE2}{
        Generate the tiles between glue A and glue $g_{in}$, put those tiles to \tileout\;
    }\ElseIf{tiletype = TWE1 or TWE2}{
        Generate the tile with glue $g_{in}$, put it to \tileout\;
    }

    \KwRet{\tileout}\;
\end{algorithm}

\begin{algorithm}
\caption{$\genconn()$}
\label{alg:genconn}
\linesnumberedhidden
\SetKwData{tileout}{$T_{3d,1t}'$}
\KwIn{tiletype, the type of the tile\\
$g_{in}$, the input glue\\
$g_{out}$, the output glue\\
$e_n$, the code of the north glue\\
maxbits, the number of binary bits to encode all of the glues $\in G_{ns}$
}
\KwOut{\tileout, Tile set at temperature 1 in 3D}

    $ \tileout \gets \phi $\;

    \If{tiletype = SWN2 or tiletype = SEN2}{
        The number of tiles to be generated at the bottom of dotted box in the figure is $(maxbits \times 2)$.
    }\ElseIf{tiletype = DWN2 or tiletype = DEN2}{
        Generate the tiles showed as the figures in the Table \ref{tab:zigzagmapping}, with the input glue $g_{in}$ and output glue $g_{out}$. Put all of the tiles to \tileout.
    }

    \KwRet{\tileout}\;
\end{algorithm}

\begin{algorithm}
\caption{$\genencfix()$}
\label{alg:genencfix}
\linesnumberedhidden
\SetKwData{tileout}{$T_{3d,1t}'$}
\KwIn{tiletype, the type of the tile\\
$g_{in}$, the input glue\\
$g_{out}$, the output glue\\
$e_n$, the code of the north glue\\
maxbits, the number of binary bits to encode all of the glues $\in G_{ns}$
}
\KwOut{\tileout, Tile set at temperature 1 in 3D}

    $ \tileout \gets \phi $\;

    Generate the tiles with the input glue $g_{in}$\;
    Generate the tiles in the \emph{Encoding Area} and put it to \tileout\;
    Generate the tiles with the output glue $g_{out}$\;

    \KwRet{\tileout}\;
\end{algorithm}

\subsubsection{Probabilistic Zig-Zag in 2D}
The algorithms for converting the zig-zag from temperature $\tau=2$ to temperature $\tau=1$ are similar to that in 3D.
Using the same algorithms to encode all of the glues at the north or south of tiles with strength 1
(See Algorithm \ref{alg:tsconvallcate}, \ref{alg:gendectileset}).

The decoding tile sets are different from the zig-zag in 3D.
The parameter $K$ is introduced in the probabilistic zig-zag tile set.
Figure \ref{fig:dec2d1bit2left}, \ref{fig:dec2d1bit2right} shows the tile set for detecting one bit of the code
by using $K (=4)$ groups of the detect tile set.
Figure \ref{fig:dec2d2left} shows one of the complete decoding tile sets
which have similar function as showed in Figure \ref{fig:dec3dleft}.

The \emph{Encoding Area} is a bit different from that in 3D.
The length for each bit of the code in the \emph{Encoding Area} is depend on the parameter $K$.
The length of the \emph{Encoding Area} will be $2K\times maxbits$.
The mapping between the zig-zag and probabilistic zig-zag for each of the tile types is showed in Table \ref{tab:zigzagmapping},
the implemention use the similar algorithms as that used in 3D.

The success ratio of constructing zig-zag structure depends on the parameter $K$, but we noticed that it also depends on the number of zero in encoding code, because there exist false positive in detecting the zero bits of the encoding codes. We can select the codes which contain many one bits for the encoding code to reduce the error ratio.

The algorithms for probabilistic zig-zag in 2D are omited.


\begin{figure}[h]\centering
 \includegraphics[width=0.75\textwidth]{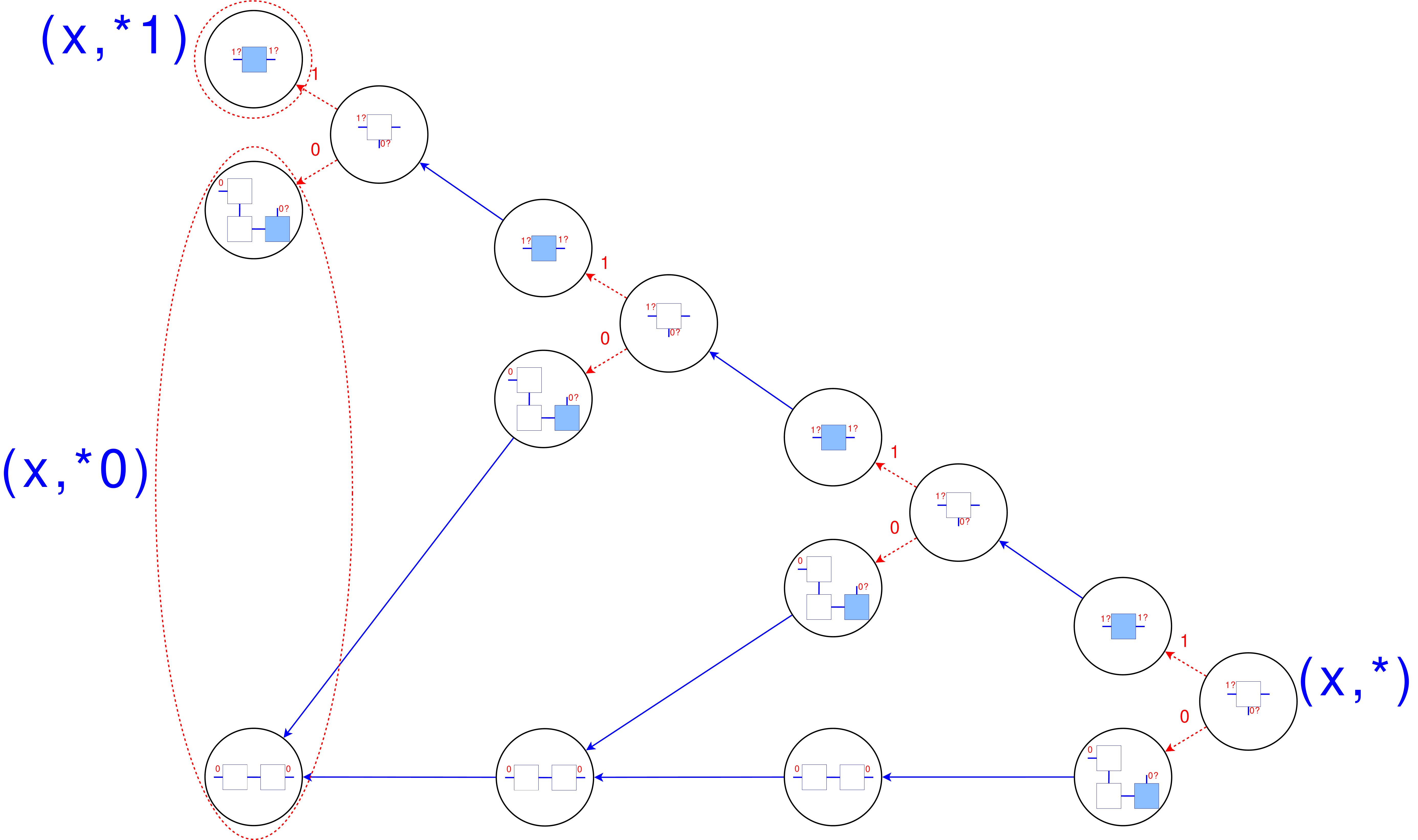}
 \caption{The tile set to decode one bit of the code (Direction left, K=4).}\label{fig:dec2d1bit2left}
\end{figure}

\begin{figure}[h]\centering
 \includegraphics[width=0.75\textwidth]{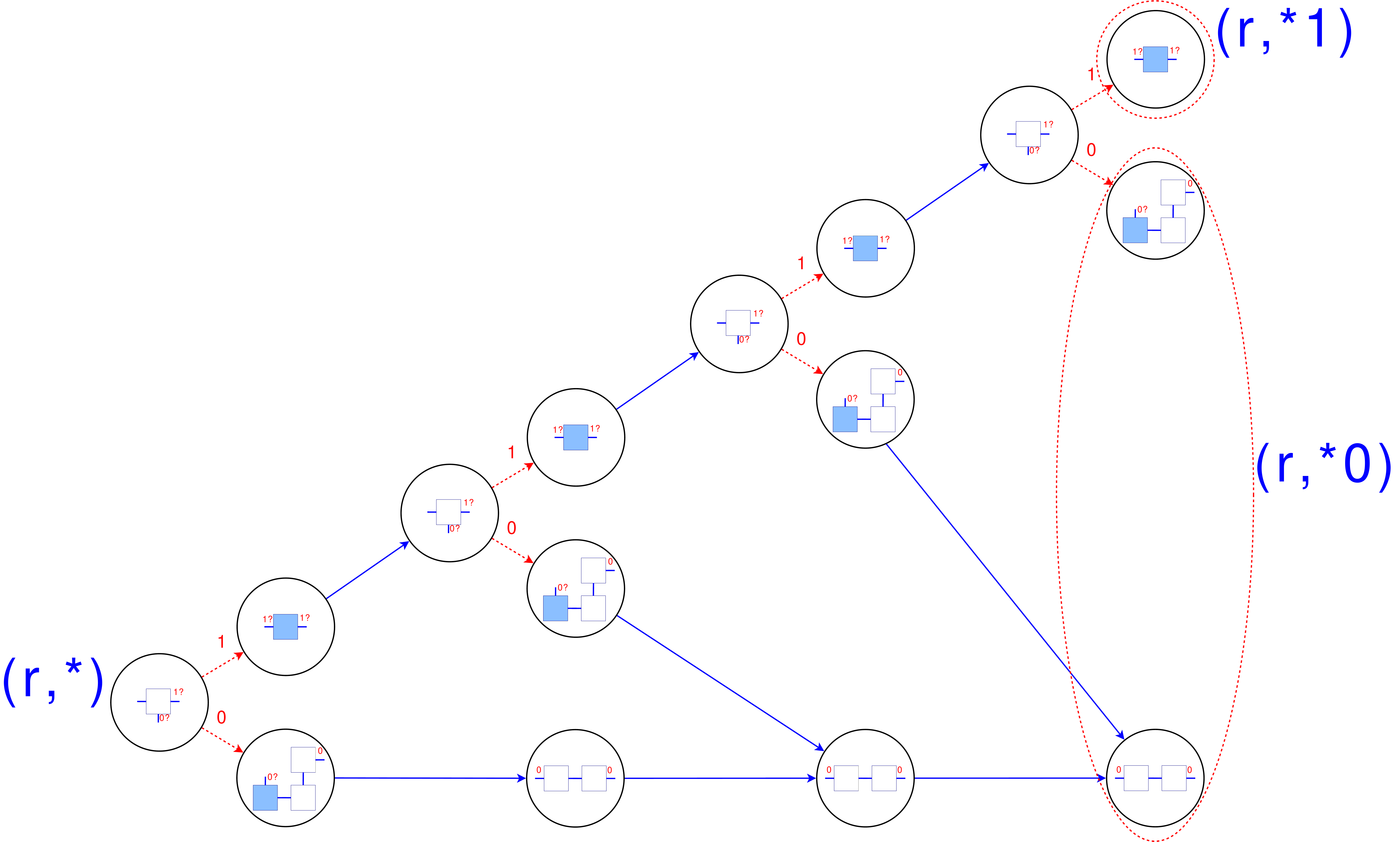}
 \caption{The tile set to decode one bit of the code (Direction right, K=4).}\label{fig:dec2d1bit2right}
\end{figure}

\begin{figure}[h]\centering
  \includegraphics[width=\textheight , angle=90]{figures/tile3d1t_2d_prob_dec2left}
 \caption{The logic binary tree for constructing one of decoding tile set (direction left).}\label{fig:dec2d2left}
\end{figure}

\end{document}